\newtheorem{theorem}{Theorem}[section]
   \newtheorem{thm}{Theorem}[section]
   \newtheorem{lemma}[thm]{Lemma}
   \newtheorem{corollary}[thm]{Corollary}
   \newtheorem{proposition}[thm]{Proposition}
   \newtheorem{definition}{Definition}
   \newtheorem{remark}{Remark}
\newcommand{\vs}{\vspace{3mm}}
\newcommand{\at}[2][]{#1|_{#2}}
\newcommand\Real{\textup{Re}}
\newcommand\poly{\textup{poly}}
\newcommand\ol[1]{\overline{#1}}
\newcommand\wt[1]{\widetilde{#1}}
\newcommand\N{\mathbb{N}}
\newcommand\bfx{\textbf{x}}
\newcommand\bfy{\textbf{y}}
\newcommand\bfz{\textbf{z}}
\newcommand\bfa{\textbf{a}}
\newcommand\bfj{\textbf{j}}
\newcommand\bfd{\textbf{d}}
\newcommand\bfw{\textbf{w}}
\newcommand\mcQ{\mathcal{Q}}
\newcommand\mcP{\mathcal{P}}
\newcommand\bad{\Xi_{\text{bad}}}
\newcommand\good{\Xi_{\text{good}}}
\newcommand\E{\mathbb{E}}
\newcommand\C{\mathbb{C}}
\newcommand\D{\mathbb{D}}
\newcommand\defeq{\stackrel{def}{=}}
\newcommand\abs[1]{{\left\lvert{#1}\right\rvert}}
\newcommand\supp{\textup{Supp}}
\def\resetMathstrut@{%
  \setbox\z@\hbox{%
    \mathchardef\@tempa\mathcode`\[\relax
    \def\@tempb##1"##2##3{\the\textfont"##3\char"}%
    \expandafter\@tempb\meaning\@tempa \relax
  }%
  \ht\Mathstrutbox@\ht\z@ \dp\Mathstrutbox@\dp\z@}
\title{Average-Case to (shifted) Worst-Case Reduction for the Trace Reconstruction Problem}
\author[1, 2]{Ittai Rubinstein}
\affil[1]{Blavatnik School of Computer Science, Tel-Aviv University, Tel-Aviv 69978, Israel}
\affil[2]{QEDMA Quantum Computing, Tel-Aviv, Israel}
\date{\today}
\begin{document}

\maketitle

\begin{abstract}
    The {\em insertion-deletion channel} takes as input a binary string $\bfx \in\{0, 1\}^n$, and outputs a string $\wt{\bfx}$ where some of the bits have been deleted and others inserted independently at random.
    In the {\em trace reconstruction problem}, one is given many outputs (called {\em traces}) of the insertion-deletion channel on the same input message $\bfx$, and asked to recover the input message.
    
    Nazarov and Peres, and De et al \cite{nazarov2017trace, de2017optimal} showed that any string $\bfx$ can be reconstructed from $\exp(O(n^{1/3}))$ traces.
    Holden et al \cite{holden2018subpolynomial} adapt the techniques used to prove this upper bound, to an algorithm for the average-case trace reconstruction with a sample complexity of $\exp(O(\log^{1/3} n))$.
    However, it is not clear how to apply their techniques more generally and in particular for the recent worst-case upper bound of $\exp(\widetilde{O}(n^{1/5}))$ shown by Chase \cite{chase2021separating} for the deletion-channel.
    
    We prove a general reduction from the average-case to smaller instances of a problem similar to worst-case.
    Using this reduction and a generalization of Chase's bound, we construct an improved average-case algorithm with a sample complexity of $\exp(\widetilde{O}(\log^{1/5} n))$.
    Additionally, we show that Chase's upper-bound holds for the insertion-deletion channel as well.

\end{abstract}

\setcounter{page}{0}

\thispagestyle{empty}

\newpage

\section{Introduction}

The {\em insertion-deletion channel} with parameters $0 \leq q, q^\prime < 1$ takes as input a string $\bfx \in \{0, 1\} ^ n$. 
For each $j$, $G_j$ random uniform and independent bits are inserted before the $j$th bit of $\bfx$, where the random variables $G_j \geq 0$ are i.i.d. geometrically distributed with parameter $q^\prime$. 
Then, each bit of the message is deleted independently with probability $q$.
The output string $\widetilde\bfx$ is called a {\em trace}.

The {\em trace reconstruction problem} asks the following question: how many traces are necessary to reconstruct an unknown string $\bfx$?

The main motivation for studying this problem comes from computational biology, where one often tries to align several DNA sequences to a common ancestor. 
It has been extensively researched since the early 2000's \cite{batu2004reconstructing}. 
Over the past few years, the trace reconstruction problem has received an increased focus, spawning many new versions, such as the coded trace reconstruction \cite{cheraghchi2020coded}, the approximate trace reconstruction \cite{chase2021approximate, chen2022near} and the population recovery and trace reconstruction problems \cite{ban2019efficient}.

In this paper we will focus on the two main versions introduced by Batu et al \cite{batu2004reconstructing}, called the {\em worst-case} and the {\em average-case}\footnote{Sometimes also called the ``random case".}.
In the worst-case, the message $\bfx$ is adversarially chosen, so the method used to reconstruct it must work for all strings $\bfx \in \{0, 1\} ^ n$.
In the average-case, $\bfx$ is a random string of bits and the reconstruction only needs to succeed with high probability over the choice of $\bfx$.

There appears to be an exponential gap between these cases.
Indeed, McGregor et al \cite{mcgregor2014trace} showed that if $h(n)$ traces are necessary for the worst-case trace reconstruction, then at least $h(\log{n})$ are needed for the average-case (and under some conditions $h(\log{n}) \log{n}$).
The best known lower bounds on the average-case have followed a similar pattern with Chase proving a lower-bound of $n^{3/2} / poly(\log{n})$ and $\log ^ {5/2} n / poly(\log{\log{n}})$ samples for the worst-case and average-case respectively \cite{chase2021new}, improving upon the previous bounds of $n^{5/4} / poly(\log{n})$ and $\log ^ {9/4} n / poly(\log\log{n})$ for the worst-case and average-case respectively by Holden and Lyons \cite{holden2020lower}.

The upper bounds have also followed a similar suit.
Holenstein et al \cite{holenstein2008trace} established an upper bound of $\exp(\widetilde{O}(n^{1/2}))$ on the sample complexity of the worst-case trace reconstruction problem. 
This was improved by Nazarov and Peres \cite{nazarov2017trace}, and De, O'Donnell and Servedio \cite{de2017optimal} who simultaneously proved that $\exp(O(n^{1/3}))$ traces are sufficient, and later by Chase \cite{chase2021separating} who improved the bound to $\exp(\widetilde{O}(n^{1/5}))$ for deletion channels (i.e. with $q^\prime = 0$).

Peres and Zhai \cite{peres2017average} adapted the $\exp(O(n^{1/3}))$ bound to the average-case, constructing an efficient algorithm for the average trace reconstruction with $\exp(O(\log ^ {{1}/{2}} n))$ samples and with some limitations on the deletion probability ($q \leq 1/2$ and $q^\prime = 0$).
This was further improved by Holden et al \cite{holden2020subpolynomial} who reduced the sample complexity to $\exp(O(\log ^ {{1}/{3}} n))$ and generalized the algorithm to work for all insertion-deletion channels.

\subsection{An Overview of Previous Results}
Our results are mainly based on an adaptation and a combination of the techniques used in Holden et al and Chase's papers \cite{holden2020subpolynomial, chase2021separating}.
Here, we will give a brief overview of their methods and why it is not trivial to combine them.

\subsubsection{An Overview of \texorpdfstring{\cite{holden2020subpolynomial}}{HPPZ20}}

Holden et al present an algorithm for the average-case trace reconstruction.
This algorithm reconstructs the string $\bfx$ one bit at a time.

In the $k$th iteration, an alignment procedure is run with the goal of matching an index $k^\prime \approx k - C \log n$ slightly less than $k$ in the original message to an index $\tau_{k^\prime}$ in each of the traces.
This alignment is noisy, resulting in a small random shift and occasionally a completely missed alignment.

After this alignment, the bits following each aligned index $\tau_{k^\prime}$ are viewed as a trace of the bits immediately after $k^\prime$ in the original string $\bfx$.

Reconstructing the $k$th bit from these new traces presents several new difficulties.
First, one must deal with spurious matches (cases where the alignment failed completely).
Then one must deal with the fact that even in the ideal scenario, the alignments are not precise.
Finally, instead of reconstructing a string of some given length $\log n$, we reconstruct the first $\log n$ bits of a far longer string.

Holden et al then show that the complex analysis techniques used for the worst-case bounds by \cite{nazarov2017trace} and \cite{de2017optimal} can be adapted to this new problem and to insertion-deletion channels.

Roughly speaking, these techniques work by converting a function of the traces to a polynomial that depends on the original message.
This polynomial is then shown to have a sufficiently strong dependence on the $k$th bit of $\bfx$, when evaluated at some point where it can be approximated from a sufficiently small number of traces.

\subsubsection{An Overview of \texorpdfstring{\cite{chase2021separating}}{Cha21b}}
The $\exp(O(n^{1/3}))$ upper-bounds on the sample complexity of the worst-case trace reconstruction and most similar bounds are shown using a mean-based algorithm -- an algorithm that considers the distribution of the $i$th bit of the traces separately for each $i$ \cite{nazarov2017trace, de2017optimal}.
However, these same papers also show a matching lower bound for mean-based algorithms.

In order to overcome this, Chase showed that separators which are based on highly non-linear functions of the traces (and are thus not mean-based algorithms), can be used for the worst-case trace reconstruction problem, reducing the sample complexity to $\exp(\wt{O}(n^{1/5}))$ \cite{chase2021separating}.
Analysing these separators requires an extension of the complex analysis used by Nazarov and Peres and by De et al to the multivariate case.

\subsubsection{Combining these Results}

The first difficulty in combining these results is the need to apply Holden et al's bit recovery procedure in a different context.
This bit recovery procedure has many parameters, which were defined in \cite{holden2020subpolynomial} only for the specific case of their algorithm.
While it is not exceptionally difficult to adapt it to other scenarios, it does require a long and technical proof.
In order to make these techniques more accessible to future researchers, we convert them into a general reduction.

The next difficulty is that Holden et al's conversion of the bit recovery procedure to a complex analysis problem depends on the fact that they use a mean-based separator.
When analysing such separators, many terms relating to the insertions of the channel cancel out.
However, the main advantage of Chase's upper bound cannot be obtained using such separators.

Finally, in the complex analysis itself, one obtains a geometric sum related to the traces which can be used to estimate the values of some polynomial related to the original message $\bfx$.
This polynomial and a point in which to evaluate it are carefully chosen so that they will have a strong dependency on $\bfx$ and that the geometric summation will not be too large.

However, in the worst-case analysis, this technique is used when the summation is truncated by the length of the trace, allowing one to compute it at points where it might not converge.
Because the bit recovery attempts to reconstruct the prefix of a very long string, it no longer suffices to show that this geometric series grows slowly.
In fact, we need to show that it decays rapidly so that it can be truncated.
This requires many changes to the method by which the evaluation point is selected and its analysis.

\subsection{Our Contribution}
Our main contribution is an improvement of Holden et al's algorithm \cite{holden2020subpolynomial}, with a sample complexity of $\exp(\widetilde{O}(\log ^ {{1}/{5}} n))$:

\begin{theorem} [Main Result]
\thlabel{thm:main_result}
    For any constant parameters $q, q^\prime \in [0, 1)$, there exists $M > 0$, such that for any $n \in \N$, if $\bfx \in \{0, 1\}^n$ is a bit-string where the bits are chosen uniformly and independently at random, then we can reconstruct $\bfx$ with probability $1 - o_n(1)$ using $\lceil \exp(M \log^{1/5} n \log^7 \log n) \rceil$ traces from the insertion-deletion channel with parameters $q, q^\prime$.
    Moreover, when $q < 1/2$ this can be done in $n^{1+o(1)}$ time and when $q \geq 1/2$, this can be done in polynomial time.
\end{theorem}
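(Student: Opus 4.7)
The plan is to follow the bit-by-bit framework of Holden et al.\ but replace their mean-based inner subroutine with a generalization of Chase's non-mean-based separator. Starting from the random string $\bfx$ of length $n$, we iterate over positions $k = 1, \ldots, n$; at step $k$ we run an alignment procedure that locates an index $\tau_{k'}$ in each trace roughly matching a position $k' \approx k - C \log n$ of $\bfx$, using the $\log n$-bit window of $\bfx$ just recovered as an anchor. Conditioned on successful alignment, the suffixes of the traces starting at $\tau_{k'}$ behave like traces of a string whose prefix is a small random shift of the unknown window $\bfx_{k'+1}, \bfx_{k'+2}, \ldots$, and the $k$th bit we want to recover sits at relative offset $k - k' \approx C\log n$ into that window.

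First I would formalize this inner task as a stand-alone reduction: given trace access to an $n$-bit string whose prefix carries some controlled distributional uncertainty (modelling the alignment jitter, spurious matches, and the fact that we only see a window of a much longer string), recover a target bit located $\Theta(\log n)$ positions in, with failure probability $1/\poly(n)$. A union bound over the $n$ recovery steps then yields \thref{thm:main_result}, provided each inner recovery uses $\exp(\wt{O}(\log^{1/5} n))$ traces. The point of packaging the Holden et al.\ machinery as a black-box reduction is precisely to decouple the parameter-tracking from whichever worst-case-style separator one plugs in.

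The main technical work is then to supply that separator, as a generalization of Chase's $\exp(\wt{O}(n^{1/5}))$ bound. Two substantial extensions are required. First, the separator must work for insertion-deletion channels, not merely deletion channels; this means carrying the geometric insertion blocks through the multivariate complex-analytic argument Chase uses to analyze non-mean-based statistics, and checking that the key cancellations still survive. Second, the evaluation point for the generating polynomial must be chosen so that the resulting geometric sum over trace positions \emph{decays} rapidly, rather than merely grows slowly as in the worst case. In the worst case one may evaluate at points where the series does not converge, because the trace length gives a hard truncation; here the aligned traces are essentially prefixes of traces of a much longer string, so we only see a $\poly(\log n)$ window and must justify truncation via genuine decay.

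The hard part will be this last point: redoing Chase's multivariate complex-analytic optimization at an evaluation point tuned for exponential decay, while still preserving a strong enough dependence of the polynomial on the target bit to distinguish the two possibilities for it. Once this yields an $\exp(\wt{O}(m^{1/5}))$ bound for the shifted worst-case problem on windows of length $m$, instantiating at $m = \Theta(\log n \cdot \polylog\log n)$ and composing with the reduction gives the claimed sample complexity. The runtime claims then follow because each iteration is dominated by evaluating the separator on the traces: when $q < 1/2$ the expected trace length is linear in $n$ so the total work is $n^{1+o(1)}$, and for general $q \in [0,1)$ one still obtains a polynomial bound.
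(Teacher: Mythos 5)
Your proposal takes essentially the same route as the paper: decompose the argument into a general average-to-shifted-worst-case reduction (\thref{thm:reduction}) and a generalization of Chase's separator to insertion-deletion channels with an evaluation point chosen for genuine decay rather than mere non-growth (\thref{thm:worst_case}), then compose. The two ingredients you identify, and the technical challenges you flag --- carrying insertions through the multivariate complex-analytic argument and re-tuning the evaluation point so the geometric series decays enough to justify truncation to a $\Theta(\log n)$-length window --- are precisely those the paper handles in Sections \ref{sec:reduction}--\ref{sec:analysis}.
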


To show this, we introduce a new version of the trace reconstruction problem similar to the worst-case, which we call the {\em shifted trace reconstruction} (see Definition \ref{def:shifted_trp}).
The shifted trace reconstruction problem differs from the worst-case trace reconstruction in three key ways:

First, some ($o(1)$ fraction) of the samples may be ``false samples" - adversarially chosen strings mixed into our pool of samples.
These false samples arise in our reduction, because we will perform our alignment with very short substrings, so some of our alignments will come from spurious matches.
However, we do not expect the addition of a sufficiently small fraction of false samples to have a significant effect on the difficulty of trace reconstruction.
This is because the information-theoretically optimal separation between the traces of two possible strings $\bfx$ and $\bfy$ would be done using a likelihood estimation and this can be easily amended to deal with a small fraction of incorrect entries.

The second difference is that the traces are shifted (owing to the inaccuracy of our alignment procedure), in the sense that there is some $\N$ valued random variable $S$ (bounded on some interval of length $\eta$), and before applying the channel to produce any single trace we will erase its first $S$ bits.
While this could potentially make the reconstruction harder, in practice the complex analysis based reconstruction techniques \cite{nazarov2017trace, de2017optimal, chase2021separating} which are most commonly used for the worst-case, can be fairly easily adapted.

The final and largest difference is that instead of reconstructing a string $\bfx$ of finite length $n$, in the shifted reconstruction, we are given traces of a very long string (which can be thought of as exponentially or infinitely long), and are asked to reconstruct the first $n$ bits.
Dealing with the longer strings requires many changes to the complex analysis techniques used for the worst-case reconstruction.
Peres and Zhai and Holden et al make these adaptions to the mean-based analyses in their reconstruction algorithms \cite{peres2017average, holden2020subpolynomial} but similarly adapting Chase's construction is not trivial.

We adapt Holden et al's methods \cite{holden2020subpolynomial}, which were originally used for a specific average-case algorithm, to create a general reduction from the average-case trace reconstruction to (a smaller version of) the shifted trace reconstruction.
This reduction (\thref{thm:reduction}) can be used in order to convert additional advances on the worst-case trace reconstruction problem into efficient algorithms for the average-case, and indeed we use it to improve Holden et al's algorithm. 

\begin{theorem} [Average to Shifted Reduction]
\thlabel{thm:reduction}
    For any constant $q, q^\prime \in [0, 1)$, and any positive $C_1 > 0$, there exists some positive constant $C_2 > 0$, 
    such that:
    
    For any monotone function $\log(n) \leq h(n) \leq \sqrt{n}$ and any algorithm $A$ that solves the shifted trace reconstruction problem with 
    sample complexity $\sigma(n) \leq \exp(h(n))$, 
    time complexity $t(n)$, 
    false sample rate $\varepsilon(n) = \exp(-C_1 h(n))$,
    shift inaccuracy $\eta = {C_2}h(n)$
    and failure probability $\delta(n) < \exp(-n)$, 
    $A$ can be transformed into an algorithm $A^\prime$ that solves the average-case trace reconstruction problem with 
    probability $1 - o_n (1)$, 
    time complexity $(\max_{n^\prime \leq C_2 \log n} \{ t(n^\prime)\} + n^{o(1)}) n$ 
    and a sample complexity of $\exp(C_2 h(C_2 \log n))$.
\end{theorem}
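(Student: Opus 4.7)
The plan is to reconstruct $\bfx$ bit by bit from left to right, using $A$ as a subroutine on instances of length $n' = C_2 \log n$. Fix the constant $C_2$ (to be tuned as a function of $C_1$ and $q, q'$) and process positions $k = 1, 2, \dots, n$ in order. At iteration $k$, having already recovered $x_1, \dots, x_{k-1}$, we pick a reference position $k' = k - \Theta(h(\log n))$, and for each trace $\wt{\bfx}^{(i)}$ we locate (a candidate for) the image of $k'$ in $\wt{\bfx}^{(i)}$ by matching a $\Theta(\log n)$-length anchor ending at position $k'$ in $\bfx$. Since $\bfx$ is uniformly random, such an anchor is unique in $\bfx$ with probability at least $1 - n^{-\omega(1)}$ once its length is a large enough constant multiple of $\log n$, and its approximate occurrence can be found in each trace in $n^{o(1)}$ time by a standard dynamic program. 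Truncating each aligned suffix to length $n'$ yields the collection of samples we hand to $A$.

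This collection matches the shifted trace reconstruction input on the substring $\bfx[k'+1 : k'+n']$, up to the three relaxations defining that problem. \emph{False samples} arise from alignment failures, which happen either when the anchor suffers atypically many insertion/deletion events or when a spurious occurrence is picked up instead; Chernoff-type bounds on binomial deletions and geometric insertions, together with anchor uniqueness, bound this failure rate per trace by $\exp(-C_1 h(n'))$ once the anchor length is a sufficiently large constant multiple of $h(n')$. The \emph{shift inaccuracy} $\eta = C_2 h(n')$ is the residual uncertainty in how many channel events occurred along the anchor; after discarding (as false samples) the exponentially small fraction of traces whose residual shift lies outside an interval of length $\Theta(h(n'))$ around its median, the remaining shift lives in a fixed deterministic interval of length $\eta$. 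The third relaxation — traces of a \emph{long} underlying string of which we only care to reconstruct the first $n'$ bits — is free, since $\bfx$ extends far past $k'+n'$; we then read the bit at position $k - k'$ of $A$'s output as our guess for $x_k$, where $k - k' \in (\eta, n'/2)$ by construction.

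For the parameter accounting, a single pool of $\exp(h(n')) = \exp(h(C_2 \log n))$ traces suffices for all $n$ iterations (we reuse them after re-aligning for each $k$), and this sits inside the declared budget $\exp(C_2 h(C_2 \log n))$, leaving slack for the false-sample pruning and constant-factor blowups. Time per iteration is $t(n') + n^{o(1)}$, giving the advertised total $n (\max_{n' \leq C_2 \log n} t(n') + n^{o(1)})$. For the error probability, $A$ fails with probability $\delta(n') < \exp(-n') = n^{-\omega(1)}$, and each ``no valid anchor at position $k$'' event also has probability $n^{-\omega(1)}$ over $\bfx$ and the channel randomness, so a union bound over all $n$ iterations still yields overall success $1 - o_n(1)$.

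The main obstacle is the simultaneous tuning of these parameters against each other and against the sample budget. The anchor length must be long enough for uniqueness in $\bfx$ and for tight concentration of the residual shift, yet short enough that the probability of too-many channel events along it is at most $\exp(-C_1 h(n'))$ per trace so that such traces can be absorbed into the false-sample rate. Moreover, the safety gap $k - k' = \Theta(h(\log n))$ between the anchor and the target bit must dominate the shift inaccuracy $\eta = C_2 h(n')$ but remain a small fraction of $n'$ so that $A$'s output window covers it. Choosing $C_2$ as a function of $C_1, q, q'$ so that all of these inequalities hold at once — together with the careful analysis of how the joint alignment perturbs the conditional distribution of the aligned suffix and ensures independence across traces — is what makes this a long technical proof, even though no single ingredient is surprising once $A$'s guarantees are in hand.
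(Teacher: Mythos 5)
Your proposal replaces the paper's two-stage (coarse-then-fine) alignment with a single $\Theta(\log n)$-length anchor, and this is where the argument breaks down. An anchor of length $L$ passing through the channel accumulates a number of net insertions minus deletions that fluctuates with standard deviation $\Theta(\sqrt{L})$, so the positional uncertainty inherited from a $\Theta(\log n)$-anchor is $\Theta(\sqrt{\log n})$. But the shift inaccuracy you must deliver to $A$ is $\eta = C_2 h(n')$ with $n' = \Theta(\log n)$, and for the paper's target $h(n) = n^{1/5}\log^7 n$ this is only $\Theta((\log n)^{1/5}\,\mathrm{polylog}\log n) \ll \sqrt{\log n}$. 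You cannot shorten the anchor to reduce the variance, because an anchor much shorter than $\log n$ is no longer unique in a random $n$-bit string and the spurious-match rate blows up past the allowed $\exp(-C_1 h(n'))$. This tension is precisely why the paper runs a coarse alignment (long anchor $\ell_c = \Theta(\log^2 n / h(\log n))$, unique but imprecise) followed by a fine alignment (short anchor $\ell_f = \Theta(h(\log n))$, searched only inside the window the coarse step produced) using a Boolean test that passes only with probability $\exp(-O(\ell/\lambda^2))$ but, conditional on passing, has alignment error $O(\lambda)$. The matching criterion itself filters for low-error traces; no single anchor does.

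Your fix — ``discard the exponentially small fraction of traces whose residual shift lies outside an interval of length $\Theta(h(n'))$'' — does not work for two reasons. First, the residual shift is not observable; you cannot condition on a latent channel variable. Second, the fraction of traces with residual shift inside a window of width $\Theta(h(n'))$ around the median is $\Theta(h(n')/\sqrt{\log n}) = o(1)$, not $1 - o(1)$, so the set you would want to keep is a vanishing fraction rather than the set you would discard being ``exponentially small.''

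There is also a time-complexity gap: finding an approximate occurrence of a $\Theta(\log n)$-anchor in a length-$\Theta(n)$ trace requires $\Omega(n)$ work per trace per bit unless the search window is restricted. Doing this naively over all $n$ positions and all $n^{o(1)}$ traces gives $n^{2+o(1)}$, not the claimed $n^{1+o(1)}$. The paper avoids this by incrementally propagating the coarse alignment position $\tau_1^{k}$ from $\tau_1^{k-1}$ and bounding the search window via Chernoff bounds, which your proposal omits.

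In short: the single-anchor plan cannot simultaneously satisfy the uniqueness requirement and the shift-inaccuracy requirement, the false-sample pruning you invoke is not implementable, and the $n^{o(1)}$-time alignment is unjustified without the incremental bookkeeping the paper uses. The two-scale Boolean-test alignment is not an optimization of your approach but a structurally necessary replacement for it.
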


\begin{remark}
    Note that the assumption that $\log(n) \leq h(n) \leq \sqrt{n}$ is not very restrictive, since we show an upper bound of $h(n) \leq \widetilde{O}(n^{1/5})$ and the lower bound by Chase \cite{chase2021new} implies that $h(n) \geq 3/2 \log n$.
\end{remark}

This theorem also can also have an additional theoretical importance, when compared to Lemma 10 of \cite{mcgregor2014trace}.
In this lemma, McGregor et al show that if $f(n)$ traces are required for the worst-case trace reconstruction, then $f(\log n)$ traces are required for the average-case trace reconstruction.
In some sense \thref{thm:reduction} indicates that McGregor et al's theorem may be nearly tight, since we show that if $f(n)$ traces suffice for the shifted trace reconstruction then $\textup{poly}(f(\log n))$ traces suffice for the average-case.

Finally, we generalize Chase's upper bound (originally covering only deletion channels \cite{chase2021separating}) to the shifted trace reconstruction (\thref{thm:worst_case}) and by extension to the insertion-deletion channel.
Additionally, when the deletion probability is below $1/2$, we show that the worst-case trace reconstruction can be performed in $\exp(\widetilde{O}(n^{4/5}))$ time.
Formally, we show that:

\begin{theorem}
\thlabel{thm:worst_case}
    For any constant $q, q^\prime \in [0, 1)$ 
    and for any $C_2 > 0$, 
    there exist some $C_3 > 0$, $h(n) = C_3 n^{1/5} \log^7 n$ 
    and an algorithm $A$ 
    that solves the shifted trace reconstruction problem with 
    a shift inaccuracy of $ {C_2} h(n) $, 
    a sample complexity of $\exp(h(n))$,
    and a false sample rate of $\exp(-h(n))$.
    
    Furthermore, when $q < 1/2$, the algorithm $A$ runs in $\exp(O(n^{4/5} \log n))$ time and if $q \geq 1/2$, then $A$ runs in $\exp(O(n))$ time.
\end{theorem}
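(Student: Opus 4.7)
The plan is to adapt Chase's non-mean-based separators \cite{chase2021separating} to the shifted setting and to the insertion-deletion channel, mirroring his choice $k=\Theta(n^{1/5}\polylog n)$ and his multivariate complex-analytic framework. The separators will be $k$-wise statistics of the traces: for a chosen pattern/tuple, we estimate the (weighted) count of simultaneous occurrences of $k$ specified bit-values at $k$ specified positions across a single trace. Averaged over the pool of samples, such a statistic becomes (up to a computable channel-dependent factor) the evaluation of an explicit multivariate polynomial $P_\bfx(z_1,\dots,z_k)$ in the unknown bits of $\bfx$ at a chosen point $(z_1,\dots,z_k)$ in the polydisc. The task reduces to exhibiting, for each pair of candidate prefixes differing in a bit of interest, a choice of $k$-tuple and evaluation point for which $|P_\bfx - P_\bfy|$ is substantially larger than the sampling noise and the contributions from the three perturbations (false samples, shift uncertainty, and infinite-string tail).

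The technical core is to reprove the multivariate ``large value on the disc'' bound of Chase, but now with the series $\sum_{j\geq 0} c_j \prod_\ell z_\ell^{j_\ell}$ replaced by a series that (i) is infinite (since our string is effectively unbounded), (ii) carries extra per-trace phase factors $z^S$ from the random shift $S$, and (iii) has each monomial weighted by an insertion-deletion generating factor. First, I will write down the generating-function identity that turns the empirical $k$-statistic into $\E_S[z^S]\cdot Q(z_1,\dots,z_k)\cdot P_\bfx(z_1,\dots,z_k)$, where $Q$ is an explicit ratio of polynomials coming from the geometric insertion distribution and $(1-q)$-biased deletion, exactly as Holden et al.\ \cite{holden2020subpolynomial} did in the mean-based (univariate) case but now tracked through each of the $k$ coordinates. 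Since $\E_S[z^S]$ is a bounded geometric sum of length $\eta=C_2 h(n)$ and $Q$ is bounded away from zero on a small enough polydisc (by a constant depending only on $q,q'$), these factors are controllable multiplicative constants in a region we can choose.

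The main obstacle is then the following: in the worst-case setting Chase could work on a polydisc of radius $1$ and truncate at degree $n$, whereas the shifted problem has $P_\bfx$ an infinite series, forcing us to evaluate strictly inside a smaller polydisc of radius $r<1$ so that the tail past degree $n$ contributes at most $\exp(-h(n))$. I would therefore reparametrize Chase's construction with $z_\ell = r\,w_\ell$ for a suitably shrunken $r = 1 - \Theta(n^{-1/5}\polylog n)$ and redo the Nazarov--Peres extremal-polynomial argument to show that $|P_\bfx - P_\bfy|$ on the shrunken polydisc is still $\exp(-\wt O(n^{1/5}))$ whenever $\bfx,\bfy$ differ in one of their first $n$ bits; the parameter tradeoff should balance out to the same $h(n) = C_3 n^{1/5}\log^7 n$, at the cost of a polylogarithmic loss absorbed into the $\log^7$ factor. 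The false-sample rate $\exp(-h(n))$ is then absorbed by choosing the statistic to be bounded in $[-1,1]$ and using a robust (median-of-means) estimator, since an $\exp(-h(n))$ fraction of adversarial samples shifts the mean by at most a value dominated by our separation.

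Finally, the algorithmic side is straightforward once the separator is fixed: for each candidate first-$n$-bit prefix, compute $P_\bfx(z_1,\dots,z_k)$ and compare to the empirically estimated statistic; the sample complexity is $\exp(h(n))$ by Chernoff on the bounded statistic (times the polylog cost of taking medians), and the comparison sweeps $2^n$ candidates when $q\geq 1/2$ (giving $\exp(O(n))$ time) or can be done prefix-by-prefix using the Peres--Zhai/Holden et al.\ dynamic approach when $q<1/2$, evaluating at most $\exp(\wt O(n^{4/5}))$ candidates at each stage because only the last $\wt O(n^{4/5})$ positions can be ambiguous once the leading bits are fixed, yielding the claimed $\exp(O(n^{4/5}\log n))$ runtime. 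The hardest part is verifying that the Nazarov-style extremal bound survives the shrinkage of the polydisc and the multivariate insertion factor without degrading the exponent $1/5$; everything else is a careful but routine bookkeeping of the adaptations already pioneered in \cite{peres2017average,holden2020subpolynomial} for the mean-based case.
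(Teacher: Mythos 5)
Your high-level plan is in the right spirit (multivariate generating function, shrunken polydisc, adapted extremal bound, robust estimation, enumeration), but it misses the central technical obstacle of Section~\ref{sec:fourier} of the paper, and would fail at the first nontrivial step. You assert that the generating-function identity turns the empirical $k$-statistic into $\E_S[z^S]\cdot Q(z_1,\ldots,z_k)\cdot P_\bfx(z_1,\ldots,z_k)$ ``exactly as Holden et al.\ did in the mean-based (univariate) case but now tracked through each of the $k$ coordinates.'' This is precisely what does \emph{not} work. In the insertion-deletion setting, a trace position $r_i$ may carry an inserted uniform bit independent of $\bfx$. HPPZ's Lemma~22 handles this because they only track $a_j=x_j-y_j$, which has mean zero over insertions, so insertion terms drop out of the expectation. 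But Chase's statistic $\prod_i 1_{\widetilde a_{r_i}=w_i}$ has conditional expectation $1/2$ over an inserted bit, not $0$, so the insertion terms do not vanish and the product factorization you write does not hold. Moreover $f(\bfx)-f(\bfy)$ cannot be rewritten as a function of $\bfx-\bfy$, so the cancellation trick of HPPZ is unavailable. The paper handles this in three genuinely distinct steps: (i) prove the factorization for the ``simple character'' $f=\prod_i(-1)^{x_i}$, where $(-1)^{\widetilde a_{r_i}}$ has mean zero over insertions (Lemma~\ref{lem:complex_connection}); (ii) lift to general characters $\chi_\omega$ by observing that the associated $g_\bfx^{\chi_\omega}$ is a high-order derivative of a shorter simple-character polynomial and performing stable numerical differentiation via Lagrange interpolation (Lemma~\ref{lem:interpolation} and Section~\ref{subsec:general_characters}); (iii) expand an arbitrary $f$ in its Fourier basis over $\mathbb{F}_2^l$ and combine. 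None of these steps are anticipated in your proposal, and without them the identity you plan to prove is false.

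A secondary issue: your justification for the $\exp(O(n^{4/5}\log n))$ runtime when $q<1/2$ (``only the last $\widetilde O(n^{4/5})$ positions can be ambiguous once the leading bits are fixed'') is not the right mechanism. The paper's speedup comes from a sparsity observation: with $z_1=\cdots=z_l=0$, the separating statistic depends on $\bfy$ only through the single-variable sparse polynomial $P_\bfy(z)=\sum_k z^{k-1}1_{\bfy(k:k+l)=\bfw}$, whose support is $n^{1/5}$-separated by the choice of $\bfw$, so there are only $\binom{Cn}{Cn^{4/5}}=\exp(O(n^{4/5}\log n))$ possible truncations of $P_\bfy$ to enumerate. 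Your remaining ingredients (shrinking $z_\ell\mapsto r w_\ell$ with $r=1-\Theta(n^{-1/5}\polylog n)$, redoing the extremal argument at radius $\rho<1$, bounding the shift factor as a short geometric sum, and absorbing false samples via a bounded statistic) do match the paper's strategy in Theorem~\ref{thm:like_borwein}, Corollaries~\ref{cor:good_z0}--\ref{cor:good_zs}, and Lemma~\ref{lem:only_first_coordinates}.
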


\begin{remark}
\label{rem:harder}
    The shifted trace reconstruction is at least as hard as the worst-case trace reconstruction, since one can simply set $S = 0$ and pad both the original message $\bfx$ with $0$ bits and the traces with traces of $0$ strings.
\end{remark}

Theorems \ref{thm:reduction} and \ref{thm:worst_case} directly imply \thref{thm:main_result}.
Furthermore, combining \thref{thm:worst_case} with Remark \ref{rem:harder} shows that $\exp(\widetilde{O}(n^{1/5}))$ samples suffice for the worst-case trace reconstruction problem for insertion-deletion channels as well.

Much of our paper will be devoted to generalizing and combining the results of Holden et al \cite{holden2020subpolynomial} and Chase \cite{chase2021separating}.
For the sake of brevity, we will henceforth refer to these papers as the HPPZ and the Chase constructions respectively.
We cite their main results in Section \ref{subsec:hppz_chase_results}.

\subsection{Organization}

In Section \ref{sec:prelim}, we give a precise definition of the trace reconstruction and present some notation.
We adapt the alignment technique of HPPZ to prove the general reduction in Section \ref{sec:reduction}.
Sections \ref{sec:fourier} and \ref{sec:analysis} contain the heart of our analysis, where we convert the shifted trace reconstruction problem into a complex analysis one (\ref{sec:fourier}) and use complex analysis techniques to solve it (\ref{sec:analysis}).
Finally, Appendices \ref{app:alignment}, \ref{app:interpolation} and \ref{app:proof_of_chase} contain some of the more technical proofs required for Sections \ref{sec:reduction}, \ref{sec:fourier} and \ref{sec:analysis} (respectively).

\section{Preliminaries}
\label{sec:prelim}

Let $\N = \{0, 1, \ldots \}$ and let $\mathcal{S} = \{0, 1\} ^ \N$ denote the set of infinite sequences of zeros and ones.
We denote elements $\bfx \in \mathcal{S}$ by $\bfx = (x_0, x_1 , \ldots)$, and denote by $\bfx(i:j) = \bfx([i, j]) = (x_i , \ldots x_j)$.

Fix any deletion probability $q \in \left[0, 1\right)$ and any insertion probability $q^\prime \in \left[0, 1\right)$.
Let $p = 1-q$ and $p^\prime = 1 - q^\prime$, and let $S$ be some $\N$ valued distribution.

Each trace $\widetilde{\bfx}$ is constructed from the original message $\bfx$ using the following procedure \cite{holden2020subpolynomial, cheraghchi2020overview}:

First, if we are producing a {\em shifted trace}, a shift $s$ is drawn independently from the distribution $S$, and the string $\bfx$ is replaced with its shift $\theta^s (\bfx) = \bfx(s:)$ (for other versions of the problem, we set $s = 0$ or skip this step).
Then, for each $j\in \N$, a random variable $G_j \geq 0$ is drawn from an independent geometric distribution $\Pr[G_j = \nu] = (q^\prime) ^ {\nu} p^\prime$.
For each $j$, $G_j$ independent and uniformly distributed bits are inserted before the $j$th bit of $\bfx$.
Finally, each bit of the resulting message is deleted with probability $q$.

We will often separate the randomness of the channel which we will denote by $\omega$ from the randomness generating the original message.

For any index $j$ we denote by $D_j$ the event that the $j$th bit $\bfx$ was not deleted by the channel.
Whenever $D_j$ occurs, we will denote by $f(j)$ the position within the trace to which this bit was sent (i.e. the number of bits either from $\bfx$ or from the insertions before this index that were not deleted).
If the $j$th bit was deleted, we define $f(j)$ to be $f(j^\prime)$ for the smallest $j^\prime \geq j$ for which $D_{j^\prime}$ holds.

In other words, $f(j)$ is defined to be the index in the trace $\widetilde{\bfx}$ that corresponds to the $\geq j$th index in the original message $\bfx$.
We will similarly denote by $I_j$ the event that the $j$th bit of the trace originated from the original message and not an insertion, and for such indices we will denote by $g(j)$ the index of the original message from which they are generated, and for indices which were the result of an insertion, we define $g(j)$ by the next non-inserted index in the trace.

We will define the misalignment between the $k$th bit of the input message and the $k^\prime$th bit of the trace to be:
\[
d(k, k^\prime) = \max\left\{\lvert f(k) - k ^ \prime\rvert, \lvert g(k ^ \prime) - k \rvert \right\}
\]

\begin{definition} [Shifted Trace Reconstruction Problem]
\label{def:shifted_trp}
    A {\em shifted trace reconstruction problem} has the following parameters:
    \begin{itemize}
        \item Channel parameters $q, q^\prime$
        \item Shift inaccuracy $\eta(n)$
        \item Sample complexity $\sigma(n)$
        \item False sample rate $\varepsilon(n)$
    \end{itemize}
    
    It is defined as the problem of reconstructing the $n+1$th bit $x_n$ of any $\bfx \in \{0, 1\}^\N$, given $\bfx(0:n-1)$, $\sigma(n)$ samples $\widetilde{\bfx}$, each of which is independently with probability $\varepsilon(n)$ selected from some unknown (potentially adversarial) distribution or with probability $1 - \varepsilon(n)$ a trace of $\theta^S(\bfx)$, where $S$ is some $\N$ valued random variable such that $\supp(S) \subseteq [a, a + \eta(n)]\subseteq[0, n-1]$ for some $a$.
\end{definition}

\subsection{The Main Results of HPPZ and Chase}
\label{subsec:hppz_chase_results}
\begin{theorem} [Theorem 1 \cite{holden2020subpolynomial}]
For $n \in \N$, let $\bfx \in \{0, 1\}^n$ be a bit-string where the bits are chosen uniformly and independently at random. Given $q, q^\prime \in [0, 1)$, there exists $M > 0$ such that for all $n$, we can reconstruct $\bfx$ with probability $1 - o_n(1)$ using $\lceil \exp(M \log^{1/3} n) \rceil$ traces from the insertion-deletion channel with parameters $q, q^\prime$.
Moreover, this can be done in $n^{1+o(1)}$ time.
\end{theorem}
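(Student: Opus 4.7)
The plan is to recover $\bfx$ bit by bit from left to right, with each iteration consisting of an outer alignment step and an inner complex-analysis step. Suppose inductively that the prefix $\bfx(0:k-1)$ has already been decoded and we wish to recover $x_k$. I would fix a backoff constant $C$ and set $k' = k - \lceil C\log n\rceil$, so that $x_k$ sits $\Theta(\log n)$ positions past the last reconstructed bit; the idea is to perform a cheap alignment at $k'$ and then invoke a complex-analysis separator on a window of length $\Theta(\log n)$ after $k'$.

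For the alignment step, I would, for each trace $\widetilde{\bfx}^{(i)}$, search for an index $\tau_i$ such that the substring of $\widetilde{\bfx}^{(i)}$ ending at $\tau_i$ is closest (in a suitable weighted-Hamming or maximum-likelihood score) to the already known window $\bfx(k'-L:k'-1)$ with $L = \Theta(\log n)$. Because $\bfx$ is uniformly random, each such window is distinctive with probability $1 - n^{-\omega(1)}$ in two senses: the true image $f(k')$ achieves a near-optimal score, which localizes $\tau_i$ to within $O(\log n)$ of $f(k')$, and no spurious match of comparable quality occurs elsewhere in the trace. A union bound over the $n$ outer iterations then keeps the overall probability of an alignment failure $o_n(1)$.

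For the bit-recovery step, treat each suffix $\widetilde{\bfx}^{(i)}(\tau_i:)$ as a slightly shifted trace of $\bfx(k':)$. Since $x_k$ is only $\Theta(\log n)$ positions past $k'$, the relevant information lives in a window of length $m = \Theta(\log n)$ of the underlying string, and the tail contributes only an $\exp(-\Omega(\log n))$ error that is absorbed into the failure budget. I would then apply the mean-based separator of Nazarov--Peres and De--O'Donnell--Servedio: estimate $\widehat{p}_j = \widehat{\Pr}[\widetilde{x}^{(i)}_{\tau_i + j} = 1]$ by averaging over the aligned sub-traces, form the generating function $Q(z) = \sum_j \widehat{p}_j z^j$, and compare it against the value predicted by the channel model for each candidate bit $b \in \{0,1\}$. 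Their bound guarantees a point on a circle of radius slightly less than $1$ in $\C$ at which flipping $x_k$ changes the predicted $Q(z)$ by at least $\exp(-O(m^{1/3})) = \exp(-O(\log^{1/3} n))$; consequently $\exp(O(\log^{1/3} n))$ traces suffice to resolve $b$ with failure probability $n^{-\omega(1)}$.

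The hardest part will be propagating the random misalignment $\tau_i - f(k')$ through the complex analysis without drowning out the signal. Such a shift multiplies $Q(z)$ by $z^{-(\tau_i - f(k'))}$, which on a circle of radius close to $1$ only behaves as a mild phase perturbation when $|\tau_i - f(k')|$ is sufficiently small relative to $m$; this dictates how tight the alignment must be and hence how large $C$ and $L$ must be taken. Similarly, the $\exp(-\Omega(\log^{1/3} n))$ rate of spurious alignments must be smaller than the complex-analytic gap, which I would arrange by inflating $L$ by a further $\Theta(\log^{1/3} n)$ factor so that its contribution is subsumed by the separator's resolution. Combining these ingredients via a union bound over the $n$ outer iterations yields success probability $1 - o_n(1)$ with $\exp(M\log^{1/3} n)$ traces; implementing the alignment with rolling hashes or a suffix automaton over the (growing) decoded prefix gives the $n^{1+o(1)}$ running time.
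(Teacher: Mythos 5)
Your high-level plan---reconstruct $\bfx$ one bit at a time, align each trace at an index $k'$ slightly behind the target bit, then separate the candidate bit values by a complex-analytic estimator on the aligned suffixes---is the same skeleton as HPPZ's proof, which this paper recasts as a general reduction in Section~\ref{sec:reduction}. The genuine gap is the alignment step. You propose a one-shot maximum-likelihood (or weighted-Hamming) search that aligns \emph{every} trace to within $O(\log n)$ with probability $1-n^{-\omega(1)}$, and both of those numbers are wrong in ways that break the argument. The precision $O(\log n)$ is far too coarse: the mean-based separator on a window of length $m=\Theta(\log n)$ evaluates the generating function at a point $z$ with $|\arg z|\sim m^{-1/3}$, and a random misalignment $S$ supported on an interval of length $\eta$ multiplies the generating function by a factor of the form $\E[z^{S}]$ (this is exactly the shift polynomial $P$ in Lemma~\ref{lem:peres_22}). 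Once $\eta\,|\arg z|\gtrsim 1$, i.e.\ once $\eta\gtrsim m^{1/3}=\log^{1/3}n$, that factor oscillates and can erase the $\exp(-\Theta(\log^{1/3}n))$ separation entirely. Your $\eta=O(\log n)$ is a full power too loose, and your own last paragraph (``sufficiently small relative to $m$'') greatly understates the requirement.

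More fundamentally, a one-shot likelihood search cannot align every trace to within $o(\log^{1/3}n)$ with high per-trace probability: the channel's own drift \emph{inside} a window of length $\ell$ already has scale $\Theta(\sqrt{\ell})$, so the score-maximizing $\tau_i$ concentrates at that scale at best. HPPZ sidestep this by giving up on aligning most traces. Their two-stage (coarse then fine) block-wise Boolean test is engineered so that a true match passes only with probability $\exp(-\Theta(\ell/\lambda^2))$, yet conditioned on passing the shift is $O(\lambda_f)=O(1)$, while a spurious match passes with the exponentially smaller probability $\exp(-\Theta(\ell/\lambda))$; that asymmetry between $\ell/\lambda^2$ and $\ell/\lambda$ is the engine of the whole argument (see Lemmas~\ref{lem:true_match}--\ref{lem:cross_interval} and Lemma~\ref{lem:taus}). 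Your proposal has no analogue of it, and is also internally inconsistent---you first assert a $1-n^{-\omega(1)}$ alignment success rate and then budget for an $\exp(-\Omega(\log^{1/3}n))$ spurious-alignment rate, and these cannot both hold. A correct proof has to embrace the tiny acceptance probability, treat spurious passes as a controlled false-sample fraction $\varepsilon(n)$ of the pool, and carry both $\varepsilon$ and the shift $\eta$ through the generating-function identity, which is exactly what Lemma~\ref{lem:peres_22} and Sections~\ref{sec:fourier}--\ref{sec:analysis} of this paper do.
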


\begin{theorem}[Theorem 2 \cite{chase2021separating}]
For any deletion probability $q \in (0, 1)$ and any $\delta > 0$, there exists $C > 0$, such that any unknown string $\bfx \in \{0, 1\}^n$ can be reconstructed with probability $1 - \delta$ from $\exp(C n^{1/5} \log^5 n)$ independent traces of $\bfx$.
\end{theorem}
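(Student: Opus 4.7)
The plan is to adapt the Chase non-mean-based separator argument, which gives the $\exp(\widetilde{O}(n^{1/5}))$ bound for worst-case deletion-channel trace reconstruction, to the shifted insertion-deletion setting. Beyond Chase's original setup, four new complications must be handled: (i) the channel inserts random bits in addition to deleting, (ii) each trace is produced from a random shift $\theta^S(\bfx)$, (iii) an $\exp(-h(n))$ fraction of the samples are adversarial, and (iv) $\bfx$ is effectively infinite — we reconstruct $x_n$ given the prefix $\bfx(0{:}n{-}1)$ rather than decoding a finite-length string.

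The core statistic is taken directly from the Chase construction: a multilinear polynomial $F_z(\widetilde{\bfx}) = \sum_{i_1 < \cdots < i_k} c_{i_1,\ldots,i_k}(z)\, \widetilde{x}_{i_1}\cdots \widetilde{x}_{i_k}$ whose expectation over the channel, after averaging over $\exp(h(n))$ samples, equals a polynomial $P_\bfx(z)$ in the bits of $\bfx$ evaluated on a short arc of radius $r \le 1$. The coefficients and arc are engineered so that $P_\bfx(z)$ has a quantitatively large dependence on $x_n$ while being estimable from $\exp(\widetilde{O}(n^{1/5}))$ samples. To adapt the generating-function computation to insertions, I would use the multivariate analogue of the HPPZ adaptation of Nazarov--Peres: the position of the $k$th surviving original bit in $\widetilde{\bfx}$ is a sum of independent geometric random variables accounting for both deletions and insertions, so the expectation picks up a factor of the form $p/(1 - q\psi(z))$ times a benign rational factor coming from the random inserted bits, which is analytic and nonvanishing on the arc. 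The shift contributes a further factor $\E[z^S]$ which is bounded above and away from zero on the arc because $\supp(S)$ is contained in an interval of length at most $C_2 h(n) \ll n$, provided $C_3$ is chosen large enough relative to $C_2$.

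The main obstacle is adapting the choice of evaluation point to the infinite-string aspect of the problem. Chase evaluates on the complex unit circle and crucially relies on the truncation of the relevant geometric series at the trace length to tame otherwise divergent terms; in our setting the bits $x_k$ for $k \ge n$ are unknown, so truncation by hand is not available. Instead, I would pick $r = 1 - \Theta(h(n)/n)$ slightly inside the unit disk so that the tail contribution from positions $k \ge n$ decays as $r^n = \exp(-\Theta(h(n)))$ and can be absorbed into the $\Omega(1)$ signal margin provided by $x_n$. Redoing Chase's stationary-phase and extremal-polynomial estimates on this slightly smaller circle, and rebalancing the degree of the separator against the length of the arc, is what I expect produces the extra $\log^2 n$ factor, giving $h(n) = C_3 n^{1/5} \log^7 n$ in place of Chase's $\log^5 n$. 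The Littlewood-type lower bound on $|P_\bfx(z)|$ that drives the separation transfers to the circle of radius $r$ after absorbing a factor $r^{\deg P}$ into constants.

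The remaining two issues are comparatively straightforward. The separator values are uniformly bounded by $\exp(O(h(n)))$ (controlled coefficient magnitudes times polynomially many terms), so an $\exp(-h(n))$ adversarial fraction contributes bias at most $\exp(O(h(n)))\cdot \exp(-h(n))$, which we make negligible relative to the signal by enlarging $C_3$ (or, equivalently, by a median-of-means robustification). For the runtime claims, when $q < 1/2$ I would invoke Chase's typicality argument restricting the nontrivial contributions to a random subset of $\widetilde{O}(n^{4/5})$ positions, evaluating the separator in $\exp(\widetilde{O}(n^{4/5} \log n))$ time; when $q \ge 1/2$, a brute-force enumeration over deletion patterns gives the $\exp(O(n))$ bound. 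Both runtime analyses transfer from Chase with only bookkeeping adjustments for the insertion factor.
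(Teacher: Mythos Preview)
The statement you were asked to prove is a \emph{cited} result --- it is Theorem~2 of Chase's paper \cite{chase2021separating}, quoted verbatim in Section~\ref{subsec:hppz_chase_results} as background. The present paper does not give its own proof of this statement; it simply invokes it. So there is no ``paper's own proof'' to compare against, and your proposal is targeting the wrong theorem.

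What you have actually sketched is a proof of \thref{thm:worst_case} (the shifted insertion-deletion version), which the paper \emph{does} prove in Sections~\ref{sec:fourier} and~\ref{sec:analysis}. Viewed as an attempt at that theorem, your outline is broadly on the right track --- the four complications you list are exactly the ones the paper addresses, and your idea of moving to radius $\rho<1$ to handle the infinite string is the correct one (the paper takes $\rho=1-n^{-4/5}\log^6 n$). However, there is a real gap in how you treat insertions. You write that inserted bits contribute ``a benign rational factor \dots\ analytic and nonvanishing on the arc,'' but this is precisely where a direct combination of Chase's Proposition~6.2 and HPPZ's Lemma~22 fails: HPPZ's identity relies on the inserted symbols being \emph{centered}, while Chase's separator $\prod_i 1_{x_{k_i}=w_i}$ is not a function of $\bfx-\bfy$. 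The paper's fix (\thref{lem:complex_connection} and Sections~\ref{subsec:simple_characters}--\ref{subsec:general_characters}) is to first establish the identity for the ``simple character'' $\prod_i(-1)^{x_{k_i}}$, for which any inserted bit forces the expectation to vanish, and then recover Chase's indicator via a Fourier decomposition over $\mathbb{F}_2^l$ combined with a numerical differentiation step. Your sketch skips this entirely. Your $q<1/2$ runtime argument is also not what the paper does: rather than a ``typicality'' restriction, the paper exploits the $n^{1/5}$-sparsity of the one-variable polynomial $g_\bfx(z_0,0,\dots,0)$ to cut the enumeration down to $\exp(O(n^{4/5}\log n))$ options.
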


\section{Proof of \texorpdfstring{\thref{thm:reduction}}{Theorem \ref{thm:reduction}}}
\label{sec:reduction}

In this section, we will prove the reduction from the average trace reconstruction to the shifted trace reconstruction.
The construction presented in this section requires only an adaptation of the HPPZ's methods, and our main contribution here is that we show that it can be used as a general reduction.
We will cite the main relevant theorems, but where no changes to the proof are necessary, we will refer readers to HPPZ's paper \cite{holden2020subpolynomial}.

Similar to HPPZ's algorithm, our reduction will be comprised of three main ingredients:

\begin{itemize}
    \item A Boolean test $T(\bfw, \widetilde{\bfw})$ on pairs of bit-strings $(\bfw, \widetilde{\bfw})$ that returns $1$ if $\widetilde{\bfw}$ is a plausible match for the output of applying the channel to $\bfw$.
    \item A two step alignment procedure comprised of a coarse and a fine alignment each of which uses the test $T$ to obtain an estimate $\tau^k$ for the positions in sufficiently many of the traces corresponding to the $k$th bit of the original message $\bfx$.
    \item A bit recovery procedure based on the target of our reduction to produce an estimate of any bit of $\bfx$ from sufficiently many aligned traces.
\end{itemize}

Finally, similar to HPPZ, throughout this section we will perform our analysis when $q = q^\prime$, but all of these results can be similarly generalized for any values of $q, q^\prime \in [0, 1)$.

\subsection{The Boolean Test}
\label{subsec:boolean_test}
The Boolean test $T$, as defined in \cite{holden2020subpolynomial}, is designed to answer whether a string $\widetilde{\bfw}$ is likely to have originated from a trace of some string $\bfw$.
It is controlled by two parameters $\ell, \lambda < \sqrt{\ell}$ and works by separating both strings into subsegments of length $\lambda$ each and comparing the average of each message on each segment.
If in sufficiently many segments the averages of the messages are either both above $1/2$ or both below $1/2$, the test returns $1$.
In other words, for any constant $c > 0$:

\[
    T^c _{\ell, \lambda}\left(\bfw, \widetilde{\bfw}\right) = 
    \begin{cases}
        1 & \text{if } \sum_{1 \leq i \leq \frac{\ell}{\lambda}} \text{sign}\left( s_i \cdot \widetilde{s}_i\right) \geq c \frac{\ell}{\lambda} \\
        0 & \text{otherwise}
    \end{cases}
\]
Where $s_i \sum_{(i-1)\lambda < j \leq i \lambda} (2w_i - 1) $.

HPPZ use this test with two sets of parameters. 
In both cases $c = \Theta(1)$ and up to some constant factors, $\ell, \lambda$ are either $\Theta(\log^{5/3} {n}), \Theta(\log ^ {2/3} {n})$ (for the ``coarse" alignment) or $\Theta(\log^{1/3} {n}), \Theta(1)$ (for the ``fine" alignment) respectively.

In the general case, we will set
\[
\ell = \Theta(\log^2 (n) / h(\Theta(\log(n)))); \;\;\;\;\; \lambda = \Theta(\log(n) / h(\Theta(\log(n))))
\]
for the coarse alignment and 
\[
\ell = \Theta(h(\log(n))); \;\;\;\;\; \lambda = \Theta(1)
\]
for the fine alignment, where $h(n)$ is the logarithm of the number of traces needed for the target of the reduction.
When $h(n)$ is $n^{1/3}$ (as in HPPZ's paper), we get the same parameters, and for our main result we will apply this theorem with $h(n) = \Theta(n^{1/5} \log^7 (n))$.

Roughly speaking, the boolean test should maintain two behaviours:
\begin{itemize}
    \item If $\widetilde{\bfw}$ is not a trace of $\bfw$, the probability that $T$ will return $1$ (called a {\em spurious match}) should be at most $\exp(-\Omega(\ell / \lambda))$. 
    \item If $\widetilde{\bfw}$ is a trace of $\bfw$, the probability that $T$ will pass (called a {\em true match}) will be at least $\exp(-O(\ell / \lambda^2))$. 
\end{itemize}

Under these constraints, the probability of a true match may be very small, but when $\lambda$ is sufficiently large, it will be much higher than the probability of a spurious match.
When conditioning on a match, it will most likely be a true match.

In order to formalize this, HPPZ define a condition for the robustness of this test and a notion for the error in an alignment.
They prove that almost any string is robust and that for robust strings there is a sufficiently high probability to have a true match.
Furthermore, they prove that for robust and ``mismatched" strings the probability of a spurious match is sufficiently low.

More formally the robustness is defined as below, where $\theta$ is a constant as defined in HPPZ:

\begin{definition} [Definition 3 of \cite{holden2020subpolynomial} - Robustness]
\label{def:robustness}
    Let $u_2 = u_1 + \lambda$ be two indices in the string $\bfx$. We define the {\em robust bias} of $\bfx(u_1 + 1: u_2)$ to be
    \[
        \lambda^{-\frac{1}{2}} 
        \sum_{\substack{
            t_1, t_2 \in \N \\ 
            \lvert t_1 - u_1\rvert < \lambda / 100 \\
            \lvert t_2 - u_2 \rvert < \lambda / 100
        } }
        \lvert
            \sum_{j = t_1} ^ {t_2} (2x_j - 1)
        \rvert
    \]
    We will say that $\bfx(u_1 + 1: u_2)$ has a {\em clear robust bias} if its robust bias is at least $1$.

    Let $\bfw$ be some string of length $\ell$ (usually a substring of $\bfx$). We will say that $\bfw$ has a {\em clear robust bias at scale $\lambda$}, if when separating $\bfw$ into blocks of length $\lambda$ (i.e. viewing the substrings $\bfw(u_i+1, u_{i+1})$ where $u_i = i \lambda$), at least $\theta$ fraction of them have a clear robust bias.
\end{definition}

\begin{definition} [Definition 7 of \cite{holden2020subpolynomial} - Mismatched strings]
\label{def:mismatch}
    Let $\bfw = \bfx(a+1:a+\ell)$ be a substring of the input and let $\widetilde{\bfw} = \widetilde{\bfx}(b+1:b+\ell)$ be a substring of the same length taken from the trace. We say that $\bfw$ and $\widetilde{\bfw}$ are {\em $s$-mismatched} if for any $0 \leq i \leq \ell$, it holds that $d(a+i, b+i) \geq s$.
\end{definition}

Heuristically, a string $w$ which is robust in the sense of Definition \ref{def:robustness} should be matched with any one of its traces with probability $\exp(-O(\ell / \lambda^2))$.
This is because so long as the number of deletions doesn't drift by more than $O(\lambda)$ from the number of insertions, we expect the proportion of $0$s to $1$s in each subsegment of the trace to be strongly correlated with the proportion of $0$s to $1$s in the parallel subsegment of the input message.
HPPZ formalize this statement, proving the following lemmas:

\begin{lemma} [Lemma 4 of \cite{holden2020subpolynomial} - Most strings are robust]
\thlabel{lem:most_robust}
    Let $\bfw$ be a random string of length $\ell$. Then $\bfw$ has a clear robust bias at scale $\lambda$ w.p. $1 - \exp(-\Omega(\ell/\lambda))$.
\end{lemma}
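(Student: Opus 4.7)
The plan is to show that each of the $\ell/\lambda$ length-$\lambda$ blocks of $\bfw$ has clear robust bias individually with probability at least some absolute constant $p_0 > \theta$, and then to apply a Chernoff-type concentration across the blocks. The only subtlety is that the indicator ``block $i$ has clear robust bias'' depends on bits in a slightly enlarged window of width $\lambda/50$ around the block, so I will handle the weak dependence between neighbouring blocks by partitioning the blocks into their even-indexed and odd-indexed subfamilies.

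For the per-block bound, observe that it suffices to isolate the single ``diagonal'' contribution to the robust-bias sum with $t_1 = u_i$, $t_2 = u_{i+1}$, namely
\[
Z_i \;:=\; \lambda^{-1/2} \left| \sum_{j = u_i}^{u_{i+1}} (2 x_j - 1) \right|,
\]
since all other summands in the robust-bias sum are nonnegative. The inner quantity is a simple random walk of $\lambda+1$ i.i.d.\ uniform $\pm 1$ steps, so by the central limit theorem $Z_i$ converges in distribution to $|N(0,1)|$ as $\lambda \to \infty$; in particular, for all sufficiently large $\lambda$, $\Pr[Z_i \geq 1] \geq p_0$ for some absolute constant $p_0 > 0$ (any $p_0 < \Pr[|N(0,1)| \geq 1] \approx 0.317$ works). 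This already forces block $i$ to have robust bias at least $1$ with probability $\geq p_0$. I will assume the constant $\theta$ from \cite{holden2020subpolynomial} satisfies $\theta < p_0$; if not, an explicit quantitative anti-concentration estimate for simple random walks, or including additional non-diagonal terms, supplies whatever constant is needed.

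For the concentration step, the event ``block $i$ has clear robust bias'' is measurable with respect to the bits in positions $(u_i - \lambda/100,\; u_{i+1} + \lambda/100]$, so whenever two block indices differ by at least $2$ the corresponding indicator events are independent. Splitting the $\ell/\lambda$ blocks by the parity of their index therefore yields two subfamilies of mutually independent indicators, each of size $\Theta(\ell/\lambda)$ and each with individual mean at least $p_0$. A Chernoff bound inside each subfamily gives at least $\theta \cdot \ell/(2\lambda)$ good blocks except with probability $\exp(-\Omega(\ell/\lambda))$; union-bounding over the two subfamilies produces a total of at least $\theta \cdot \ell/\lambda$ good blocks with the claimed $1 - \exp(-\Omega(\ell/\lambda))$ failure probability. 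The only real obstacle here is precisely this adjacent-block dependence from the overlapping windows, which the parity partition resolves cleanly.
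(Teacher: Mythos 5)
The paper does not actually prove this lemma; it cites HPPZ's proof and declines to reproduce it, so I am judging your argument on its own merits.

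There is a genuine gap in your per-block bound, and it comes from taking the displayed formula in Definition~\ref{def:robustness} at face value. That formula, as typeset, sums the absolute partial sums over all pairs $(t_1,t_2)$ in the $\lambda/100$-windows, and your key move --- ``it suffices to isolate the single diagonal contribution, since all other summands are nonnegative'' --- relies on exactly that reading. But this reading cannot be right: there are $\Theta(\lambda^2)$ such pairs, each contributing a nonnegative integer that is typically of order $\sqrt{\lambda}$, so the $\lambda^{-1/2}$-normalized sum would typically be of order $\lambda^2$ and would fall below $1$ only on an event of probability $\exp(-\Omega(\lambda))$. That would make the threshold ``at least $1$'' vacuous and the lemma trivial in a way inconsistent with how it is used. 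More to the point, the role of robustness in \thref{lem:true_match} is to guarantee that the sign of a block sum survives small misalignments of the endpoints, which is a \emph{uniform} (minimum) requirement over the windowed $(t_1,t_2)$, not a summed one. The outer $\sum$ in Definition~\ref{def:robustness} is a transcription error for a $\min$, and that is what HPPZ's Lemma~4 is actually about.

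Under the correct (minimum) reading, your isolation step establishes nothing: the diagonal $|S_{u_{i+1}}-S_{u_i}|$ exceeding $\sqrt{\lambda}$ is perfectly compatible with some shifted $|S_{t_2}-S_{t_1}|$ dipping below $\sqrt{\lambda}$, so it does not force the robust bias above $1$. The missing ingredient is control on the oscillation of the walk \emph{inside} the two $\lambda/100$-windows. The per-block argument should couple the CLT with a maximal inequality: with constant probability $|S_{u_{i+1}}-S_{u_i}| \geq 2\sqrt{\lambda}$, and with probability close to $1$ (by Doob or reflection over each $\lambda/50$-length window segment) both $\max_{|t-u_i|<\lambda/100}|S_t - S_{u_i}|$ and $\max_{|t-u_{i+1}|<\lambda/100}|S_t - S_{u_{i+1}}|$ are at most $\sqrt{\lambda}/2$; on the intersection every windowed partial sum is at least $\sqrt{\lambda}$. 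Neither of your suggested patches repairs this: an anti-concentration bound only addresses the diagonal, and ``including additional non-diagonal terms'' only helps under the sum interpretation.

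Your concentration step is sound: parity-partitioning the blocks does decouple the adjacent-window overlap, and a Chernoff bound within each parity class followed by a union bound gives the $1-\exp(-\Omega(\ell/\lambda))$ failure probability.
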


\begin{lemma} [Lemma 5 of \cite{holden2020subpolynomial} - Robust strings have a true match with sufficiently high probability]
\thlabel{lem:true_match}
    For any constant $q, q^\prime \in [0, 1)$, 
    there exists some constant $c > 0$, so that for any 
    $\bfw \in \{0, 1\}^\ell$ 
    which has a clear robust bias at scale $\lambda$, 
    if $\widetilde{\bfw}$ is a trace of $\bfw$, 
    then the test $T_{\ell, \lambda}^c (\bfw, \widetilde{\bfw}(0:\ell-1))$ 
    will pass with probability at least $\exp(-O(\ell/\lambda^2))$.
\end{lemma}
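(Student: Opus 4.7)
The plan is to condition on a low-probability ``tube'' event in which the insertion--deletion drift stays uniformly small, and then to show that inside this event the block biases of $\widetilde\bfw$ match those of $\bfw$ with constant conditional probability, so that a Chernoff bound across blocks yields that $T^c_{\ell,\lambda}$ fires. Throughout I take $q=q^\prime$ as in HPPZ, so that the expected trace length equals the input length per bit.

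First, I would introduce the position-drift walk $\Delta_j$ equal to the signed difference between the trace position to which the $j$-th input bit is mapped and $j$ itself. Its increments are i.i.d., have zero mean under $q=q^\prime$, bounded variance, and sub-exponential tails from the geometric insertions. A standard random-walk tube bound (functional CLT or reflection principle for a zero-drift walk with $O(1)$ increments) then yields
\[
\Pr[E]\;\geq\;\exp\!\bigl(-O(\ell/\lambda^2)\bigr),\qquad E\;=\;\bigl\{\,|\Delta_j|<\lambda/200\ \text{for all}\ 0\le j\le \ell\,\bigr\}.
\]
On $E$, the $i$-th trace block $\widetilde\bfw((i-1)\lambda+1:i\lambda)$ is produced from a contiguous input window $\bfw(t_1(i)+1:t_2(i))$ with $|t_1(i)-(i-1)\lambda|,|t_2(i)-i\lambda|<\lambda/100$, together with the $O(\lambda)$ uniform inserted bits landing in that trace block.

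Second, I would analyse a single robust input block. By Definition~\ref{def:robustness}, a clear robust bias guarantees that for the admissible endpoints $(t_1(i),t_2(i))$ encountered on the tube event, the input window bias $s_i^{\mathrm{in}}$ has a well-defined sign agreeing with $\mathrm{sign}(s_i)$ and magnitude $\Omega(\sqrt{\lambda})$ with positive probability over the drift. Conditional on the drift profile, the trace-block bias $\widetilde s_i$ splits as the signed sum of kept input bits in window $i$ (conditional mean $p\cdot s_i^{\mathrm{in}}$, variance $O(\lambda)$) plus the independent signed sum of the uniform inserted bits (mean $0$, variance $O(\lambda)$). Since the absolute conditional mean is $\Omega(\sqrt{\lambda})$ and the total standard deviation is $O(\sqrt{\lambda})$, a Berry--Esseen estimate gives $\Pr[\mathrm{sign}(\widetilde s_i)=\mathrm{sign}(s_i)\mid \Delta]\geq p^{\ast}$ for an absolute constant $p^{\ast}>0$.

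Third, I would assemble the blocks. Conditioning further on the boundary values $(\Delta_{i\lambda})_i$ inside the tube, the sign-match events across different $i$ depend on disjoint deletion coins and disjoint inserted-bit strings, and so are mutually independent. The hypothesis on $\bfw$ supplies at least $\theta\,\ell/\lambda$ robust input blocks; a Chernoff bound then gives at least $(p^{\ast}\theta/2)\,\ell/\lambda$ matching trace blocks except with conditional probability $\exp(-\Omega(\ell/\lambda))$. Taking $c\le p^{\ast}\theta/2$,
\[
\Pr\bigl[T^c_{\ell,\lambda}(\bfw,\widetilde\bfw(0:\ell-1))=1\bigr]\;\geq\;\Pr[E]\cdot\bigl(1-e^{-\Omega(\ell/\lambda)}\bigr)\;\geq\;\exp\!\bigl(-O(\ell/\lambda^2)\bigr).
\]
The main obstacle I anticipate is this joint-independence step, because the trace-block boundaries are themselves random and the input windows across different $i$ are coupled through the drift; without conditioning on $\Delta$, the deletion and insertion randomness feeding the $i$-th block is entangled with the identity of the window. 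I expect to resolve this by \emph{integrating over the drift profile}, writing $\Pr[E\cap\text{test passes}]=\sum_{(a_i)\text{ in tube}}\Pr[\Delta_{i\lambda}=a_i\ \forall i]\cdot\Pr[\text{test passes}\mid \Delta_{i\lambda}=a_i]$, since for each fixed admissible profile the intra-block randomness factorises cleanly. Subsidiary care is needed to truncate the geometric insertion counts at $O(\log\ell)$ so the tube estimate applies, and to verify that the $\lambda^{-1/2}$-scaled averaging over nearby $(t_1,t_2)$ in Definition~\ref{def:robustness} does indeed yield an $\Omega(\sqrt{\lambda})$ bias at the actual window endpoints with constant probability.
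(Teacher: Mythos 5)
Your plan reproduces the structure of HPPZ's proof of Lemma 5 (which this paper does not re-prove, but imports verbatim): condition on a drift-tube event of probability $\exp(-O(\ell/\lambda^2))$, argue constant conditional probability of sign agreement on each robustly biased block via a CLT/Berry--Esseen estimate, and aggregate with a Chernoff bound over blocks. So the approach is essentially the same, not a new route; the remaining work is in the details you flag.

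Two of those details deserve sharper attention. First, your claim that ``a clear robust bias guarantees that for the admissible endpoints $(t_1(i),t_2(i))$ encountered on the tube event, the input window bias $s_i^{\mathrm{in}}$ has a well-defined sign agreeing with $\mathrm{sign}(s_i)$ and magnitude $\Omega(\sqrt{\lambda})$'' only follows if Definition~\ref{def:robustness} forces \emph{every} nearby window $[t_1,t_2]$ (with $|t_1-u_1|,|t_2-u_2|<\lambda/100$) to have bias at least $\sqrt{\lambda}$ in magnitude --- i.e., if the displayed quantity is really a minimum (or a universal quantifier) over nearby $(t_1,t_2)$, which is how HPPZ define robust stability. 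Then the sign consistency is forced, because consecutive windows change the bias by at most $2$ and the magnitudes never drop below $\sqrt{\lambda}$, so the sign cannot flip. If you instead read the definition literally as an aggregate over $(t_1,t_2)$, the sign-agreement step does not follow and the argument has a genuine gap; so your proof implicitly relies on (and should state) the pointwise form of the robustness condition. Second, your final Chernoff step only counts matches on the $\geq\theta\,\ell/\lambda$ robust blocks and takes $c\le p^{\ast}\theta/2$, but the test statistic $\sum_i \mathrm{sign}(s_i\widetilde s_i)$ is a signed sum, so non-robust blocks can contribute $-1$. You need to either observe that on the tube non-robust blocks cannot be systematically anti-correlated (which requires an argument, not symmetry alone), or note that HPPZ's statistic is effectively a count of agreeing blocks with $\theta$ and $c$ calibrated so robust blocks alone carry the threshold; as written, your bound $c\le p^{\ast}\theta/2$ does not account for the possibility that the remaining $(1-\theta)\ell/\lambda$ terms are all $-1$. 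Beyond these, your identified obstacles (conditioning on block-boundary drift to decouple blocks, and truncating the geometric insertion counts so the tube estimate applies) are the right ones and your proposed resolutions are the ones HPPZ use.
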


HPPZ then prove the first property of the test, namely that it has very few spurious matches.

\begin{lemma} [Lemma 8 of \cite{holden2020subpolynomial} - False positives are rare]
\thlabel{lem:false_positives}
    Let $\bfx$ be a random string and suppose we sample the trace $\widetilde{\bfx}$ from the insertion-deletion channel.

    Consider two length-$\ell$ substrings $\bfw = \bfx(a+1:a+\ell)$ and $\widetilde{\bfw} = \widetilde{\bfx}(b+1:b+\ell)$.
    For any realisation $\omega_0$ of the randomness of the channel (which determines the map of insertions and deletions), such that $\bfw$ and $\widetilde{\bfw}$ are $\lambda$-mismatched, we have that:
    \[
    \Pr_{\bfx} \left[T_{\ell, \lambda} \left(\bfw, \widetilde{\bfw}\right) \mid \omega = \omega_0\right] \leq \exp\left(-\Omega\left( \ell / \lambda \right) \right)
    \]
\end{lemma}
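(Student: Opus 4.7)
The plan is to condition on the channel's structural randomness $\omega_0$ and reduce the bound to a concentration inequality for a sum of essentially independent, bounded, mean-zero random variables across the $m := \ell/\lambda$ blocks. After fixing $\omega_0$, the maps $f$ and $g$ are determined and the bits of $\wt{\bfx}$ at inserted positions are independent uniform random bits, independent of $\bfx$. For block $i$, the sum $s_i$ is a signed sum of the $\lambda$ fresh uniform $\pm 1$ variables indexed by $P_i = \{a+(i-1)\lambda+1, \ldots, a+i\lambda\}$, while $\wt{s}_i$ decomposes as a signed sum of independent uniform insertion bits plus a signed sum of $\bfx$-bits at the (non-inserted) source positions $\wt{P}_i = g(\{b+(i-1)\lambda+1, \ldots, b+i\lambda\})$ inside $\bfx$.

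The key structural step uses the $\lambda$-mismatched hypothesis together with monotonicity of $f$ and $g$ to show that, for each block $i$, the sets $P_i$ and $\wt{P}_i$ are disjoint (and more generally that $\wt{P}_{i'} \cap P_i = \emptyset$ for every relevant $i'$). Suppose toward contradiction that $g(b+j) = a+j^\star \in P_i$ for some $j$ in block $i$. Then $j$ and $j^\star$ both lie in the same block of length $\lambda$, so $|j^\star - j| < \lambda$, which gives $|g(b+j) - (a+j)| < \lambda$ and forces $|f(a+j) - (b+j)| \geq \lambda$ by the $\lambda$-mismatched assumption. Combined with $f(a+j^\star) = b+j$ (since $a+j^\star$ is not deleted) and monotonicity of $f$, this tightly constrains the local structure of deletions and insertions in $\omega_0$: in the pure-deletion regime it directly contradicts $|j^\star - j| < \lambda$, and in the general insertion-deletion regime one either extends the bookkeeping over the insertions or, when a small overlap does survive, appeals to the fact that the number of insertion bits in block $i$ of $\wt{\bfw}$ must then be large, contributing enough fresh randomness independent of $s_i$ to keep the per-block bias strictly below $c$.

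Once the pairs $(s_i, \wt{s}_i)_{i=1}^m$ are jointly independent conditional on $\omega_0$, the conclusion follows from symmetry and Hoeffding: $s_i$ is a symmetric sum of independent $\pm 1$ bits, so $X_i := \text{sign}(s_i \cdot \wt{s}_i) \in \{-1, 0, +1\}$ has mean $0$ (by independence of $s_i$ and $\wt{s}_i$ together with the symmetry of $s_i$), and Hoeffding applied to the $m$ independent bounded $X_i$'s yields
\[
\Pr\!\left[\sum_{i=1}^m X_i \geq c m\right] \leq \exp(-c^2 m / 2) = \exp(-\Omega(\ell/\lambda)),
\]
which matches the desired bound. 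The main obstacle is the structural disjointness step: the $\lambda$-mismatched condition only asserts a max-of-two bound at each index, so the proof must handle the regimes $j^\star > j$ and $j^\star < j$ separately and track how monotonicity of $f$ and $g$ interacts with possible insertions. Ensuring the argument goes through for general $q^\prime \in [0,1)$, as opposed to only the pure-deletion setting, is the most delicate part of the plan; everything downstream is standard symmetry and concentration.
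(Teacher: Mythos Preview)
Your overall plan --- fix $\omega_0$, establish $P_i \cap \wt{P}_i = \emptyset$ from the $\lambda$-mismatch hypothesis, then use symmetry of $s_i$ and a concentration inequality --- matches the structure of the HPPZ argument to which the paper defers, and the per-block disjointness $P_i \cap \wt{P}_i = \emptyset$ is indeed the right structural fact. But the step ``the pairs $(s_i,\wt{s}_i)_{i=1}^m$ are jointly independent conditional on $\omega_0$'' does not follow from it, and your parenthetical ``$\wt{P}_{i'}\cap P_i=\emptyset$ for every relevant $i'$'' is neither proved nor true. A clean counterexample: take pure deletion, let $\omega_0$ be the identity (no bits deleted), and set $b=a+2\lambda$. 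Then $d(a+i,b+i)=2\lambda$ for every $i$, so the substrings are $\lambda$-mismatched, yet $\wt{P}_i=P_{i+2}$ exactly and hence $\wt{s}_i=s_{i+2}$. The pairs $(s_1,s_3),(s_2,s_4),(s_3,s_5),\ldots$ are plainly not jointly independent, so Hoeffding for i.i.d.\ summands does not apply as written.

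What per-block disjointness \emph{does} give is the conditional-mean-zero statement
\[
\E\bigl[X_i \,\bigm|\, \omega_0,\ \text{insertion bits},\ \bfx|_{\N\setminus P_i}\bigr]=0
\]
for each $i$ separately: once everything outside $P_i$ is frozen, $\wt{s}_i$ is determined while $s_i$ remains a fresh symmetric sum. Turning this into the $\exp(-\Omega(\ell/\lambda))$ tail requires an Azuma/martingale argument --- expose the blocks $\bfx|_{P_1},\ldots,\bfx|_{P_m}$ one at a time and control how many $\wt{s}_j$'s can move when a single $P_i$ is revealed --- rather than straight Hoeffding. In the deletion-only regime, monotonicity and injectivity of $g$ force at most two of the $\wt{P}_j$ to meet any given $P_i$, which makes the increments bounded; with insertions one also has to argue that blocks of $\wt{\bfw}$ carrying many inserted bits are governed by fresh insertion noise. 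This dependency bookkeeping is exactly the content you flagged as ``the most delicate part'' but then did not carry out; it is where the real work in the HPPZ proof sits.
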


\begin{definition} [Definition 9 of \cite{holden2020subpolynomial} - spurious matches]
    Let $\ell, \lambda$ be given positive integers.
    Let $\bfx$ be an input string, let $I$ be an interval of length $\ell$ and write $\bfw = \bfx(I)$.
    Let $J$ be another interval (usually containing $I$).

    Let $\widetilde{\bfx}$ be a trace of $\bfx$ through the insertion-deletion channel.
    We say that an {\em $(I, J)$-spurious match} occurs 
    if for some substring of the trace $\widetilde{\bfw} = \widetilde{\bfx}(i_1 : i_2)$, 
    such that $g([i_1, i_2]) \subseteq J$ (i.e. whose bits originated from within the interval $J$), 
    we have $T_{\ell, \lambda} (\bfw, \widetilde{\bfw}) = 1$, but $\bfw$ and $\widetilde{\bfw}$ are $\lambda$-mismatched.

    We will denote the event that an $(I,J)$-spurious match occurs by $\mathcal{Q}_{\ell, \lambda} (I, J)$.
\end{definition}

\begin{lemma} [Lemma 10 of \cite{holden2020subpolynomial} - spurious matches are rare within an interval]
\thlabel{lem:within_interval}
    Let $\ell$ and $\lambda$ be given positive integers.
    Let $I$ be an interval of length $\ell$ and let $J\supseteq I$ be an interval containing $I$.
    Suppose we have an input string $\bfx$ all of whose bits are determined except those in $J$, which are drawn i.i.d. uniformly.
    Then:
    \[
    \Pr\left(\mathcal{Q}_{\ell, \lambda} (I, J)\right) \leq \lvert J \rvert e^{-\Omega(\ell / \lambda)} + e^{-\Omega\left(\lvert J \rvert \right)}
    \]
\end{lemma}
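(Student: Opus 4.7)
The plan is to bound $\Pr(\mathcal{Q}_{\ell,\lambda}(I,J))$ by a union bound over all candidate trace substrings whose originating indices lie in $J$, and to invoke \thref{lem:false_positives} on each such candidate. First I would condition on the channel randomness $\omega$, which determines the insertion/deletion maps $f,g$ and hence the deterministic set
\[
S(\omega) = \left\{ (i_1,i_2) \in \N^2 : i_2 - i_1 + 1 = \ell,\ g([i_1,i_2]) \subseteq J \right\}
\]
of all length-$\ell$ windows in the trace whose originating indices lie within $J$. Because the $\lambda$-mismatch property depends only on $f,g$, once $\omega$ is fixed I know exactly which candidates in $S(\omega)$ are $\lambda$-mismatched with $\bfw = \bfx(I)$.

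Next I would control the size of $S(\omega)$. Each index of $\bfx$ that lies in $J$ contributes at most one non-deleted bit to the trace, and the number of insertions near $J$ is a sum of i.i.d. geometric random variables $G_j$ indexed by positions in $J$. A standard Chernoff/Bernstein-type bound on the sum of Bernoulli and geometric random variables then yields a constant $C = C(q, q^\prime)$ such that
\[
\Pr_{\omega}\left[|S(\omega)| > C |J| \right] \leq e^{-\Omega(|J|)}.
\]
Call the complementary event $B$; on $B$ there are at most $C|J|$ candidate substrings.

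For any fixed $\omega_0$ in $B$ and any candidate $\widetilde{\bfw} = \widetilde{\bfx}(i_1 : i_2)$ with $(i_1, i_2) \in S(\omega_0)$ that is $\lambda$-mismatched with $\bfw$, \thref{lem:false_positives} gives $\Pr_{\bfx}[T_{\ell,\lambda}(\bfw,\widetilde{\bfw}) = 1 \mid \omega = \omega_0] \leq \exp(-\Omega(\ell/\lambda))$. The hypotheses of that lemma are satisfied here: every such $\widetilde{\bfw}$ depends only on the random bits of $\bfx$ restricted to $J$ (together with the independent insertion bits), and $\bfw = \bfx(I) \subseteq \bfx(J)$ is likewise random, so conditioning on the fixed bits of $\bfx$ outside $J$ does no harm. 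A union bound over the at most $C|J|$ candidates in $S(\omega_0)$ gives
\[
\Pr_{\bfx}\left[\mathcal{Q}_{\ell,\lambda}(I,J) \mid \omega = \omega_0\right] \leq C|J|\cdot e^{-\Omega(\ell/\lambda)}.
\]
Combining this with $\Pr(\neg B) \leq e^{-\Omega(|J|)}$ via the law of total probability yields the stated bound.

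The main obstacle is the concentration step bounding $|S(\omega)|$: one must control both the number of undeleted bits originating in $J$ and the total number of insertions in the corresponding region. Because the insertion counts $G_j$ are only sub-exponential rather than sub-Gaussian, some care is needed to obtain the rate $e^{-\Omega(|J|)}$ (either by truncating the $G_j$ at $O(\log |J|)$ and handling the truncation by a separate union bound, or by directly applying a Bernstein inequality for sums of sub-exponentials). Everything else in the argument is a routine union bound driven by \thref{lem:false_positives}.
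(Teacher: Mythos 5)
Your proposal is correct and follows the natural route, which is almost certainly what HPPZ do (the present paper defers to \cite{holden2020subpolynomial} for the proof rather than reproducing it): condition on $\omega$, use concentration to bound the number of candidate length-$\ell$ windows originating in $J$, then apply \thref{lem:false_positives} plus a union bound, and absorb the constant $C$ into the implied constant in $e^{-\Omega(\ell/\lambda)}$. The observation that only the bits of $\bfx$ within $J$ matter (since $\bfw = \bfx(I)$ with $I \subseteq J$, and $g([i_1,i_2]) \subseteq J$ forces every non-insertion bit of $\widetilde{\bfw}$ to originate in $J$) is exactly what makes \thref{lem:false_positives} applicable despite the partial randomness.

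One minor imprecision worth flagging: the trace indices with $g$-value in $J = [a,b]$ are not quite controlled only by $G_j$ for $j \in J$. Because $g$ of an insertion index is defined via the \emph{next non-inserted} trace index, an insertion occurring before some $\bfx_{j'}$ with $j' < a$ can still have $g$-value in $J$ if every bit in $[j', a-1]$ happens to be deleted. So strictly speaking $|S(\omega)|$ is controlled by $|J|$ plus $\sum_j G_j$ over a somewhat larger window, together with the event that there is no long run of deletions immediately to the left of $J$. Since the probability of $\Omega(|J|)$ consecutive deletions is $e^{-\Omega(|J|)}$, this extra event is absorbed into the same $e^{-\Omega(|J|)}$ failure term, and the sub-exponential concentration for the widened geometric sum still gives $|S(\omega)| \leq C|J|$ with probability $1-e^{-\Omega(|J|)}$. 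So the structure of the argument survives; I'd just state the $|S(\omega)|$ bound a bit more carefully.
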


\begin{lemma} [Lemma 11 of \cite{holden2020subpolynomial} - spurious matches are rare between different intervals]
\thlabel{lem:cross_interval}
    Let $\ell$ and $\lambda$ be given positive integers.
    Let $I$ be an interval of length $\ell$ and let $J$ be a disjoint interval whose distance from $I$ is at least $\lvert J \rvert$.
    Suppose we have an input string $\bfx$ all of whose bits are determined except those in $I$, which are drawn i.i.d. uniformly.
    Then:
    \[
    \Pr\left(\mathcal{Q}_{\ell, \lambda} (I, J)\right) \leq \lvert J \rvert e^{-\Omega(\ell / \lambda)} + e^{-\Omega\left(\lvert J \rvert \right)}
    \]
\end{lemma}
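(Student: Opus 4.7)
Since $\bfw=\bfx(I)$ depends only on the random bits in $I$, whereas any candidate window $\widetilde{\bfw}=\widetilde{\bfx}(i_1:i_2)$ with $g([i_1,i_2])\subseteq J$ is a function of the (fixed) bits $\bfx(J)$, the (uniform) insertion values placed before indices of $J$, and the channel's insertion/deletion pattern, the two strings $\bfw$ and $\widetilde{\bfw}$ are independent after conditioning on the channel randomness. My plan is to condition on all randomness outside $\bfx(I)$, enumerate the candidate windows $\widetilde{\bfw}$ whose origins lie in $J$, and apply \thref{lem:false_positives} to each candidate using the remaining (uniform) randomness in $\bfw$, followed by a union bound. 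The structure mirrors the proof of \thref{lem:within_interval}, but with the major simplification that $\lambda$-mismatch holds automatically from the separation hypothesis.

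Concretely, I proceed in three steps. First, I observe that mismatch is free: for any such candidate and any $0\le i\le \ell$, we have $a+i\in I$ and $g(i_1+i)\in J$, hence $|g(i_1+i)-(a+i)|\ge\mathrm{dist}(I,J)\ge |J|\ge \lambda$ (assuming $|J|\ge \lambda$, otherwise the claimed bound is vacuous), so $d(a+i,i_1+i)\ge \lambda$ as required. Second, with the outside randomness fixed, $\widetilde{\bfw}$ becomes a deterministic string while $\bfw$ is uniform on $\{0,1\}^\ell$; since the proof of \thref{lem:false_positives} only uses independence of $\bfw$ from $\widetilde{\bfw}$ together with the uniform marginal of $\bfw$ (to argue that the signs of the per-segment biases of $\bfw$ are independent unbiased $\pm 1$'s, to which a Chernoff bound applies), it yields $\Pr_{\bfw}[T_{\ell,\lambda}(\bfw,\widetilde{\bfw})=1\mid \omega]\le \exp(-\Omega(\ell/\lambda))$ per candidate. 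Third, I count candidates: letting $L$ be the number of trace positions $i$ with $g(i)\in J$, standard concentration for sums of i.i.d.\ geometric insertion counts and Bernoulli deletion indicators restricted to $J$ gives $L\le C|J|$ except on an event of probability $e^{-\Omega(|J|)}$, which I absorb into the second term of the target bound. Union-bounding on the good event produces at most $L\le C|J|$ starting positions $i_1$, yielding
\[
\Pr\bigl(\mcQ_{\ell,\lambda}(I,J)\bigr)\le C|J|\cdot e^{-\Omega(\ell/\lambda)} + e^{-\Omega(|J|)},
\]
which is the claimed estimate.

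The main obstacle is the first item above: justifying that \thref{lem:false_positives} applies in this setting, where only $\bfx(I)$ is random rather than all of $\bfx$. The justification is precisely the independence observation stated at the outset: trace bits originating in $J$ are measurable with respect to $\bfx(J)$ and the channel randomness, and hence are independent of $\bfw=\bfx(I)$, so after conditioning on everything outside $I$ the hypothesis of \thref{lem:false_positives} is in force. Once that subtlety is settled, the remainder is routine: the separation $\mathrm{dist}(I,J)\ge |J|$ gives mismatch for free (unlike in \thref{lem:within_interval}, which requires a within-interval geometric argument), and controlling the number of candidate windows reduces to standard tail bounds for the $J$-portion of the trace.
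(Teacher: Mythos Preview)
The paper does not give its own proof of this lemma; it states ``We will use these lemmas exactly as proven by HPPZ, so for the sake of brevity, we will not repeat the proofs'' and defers to \cite{holden2020subpolynomial}. Your reconstruction is correct and is the natural argument: once you condition on the channel randomness and on $\bfx$ outside $I$, any trace window with origins in $J$ is determined while $\bfw=\bfx(I)$ remains uniform and independent of it, so the per-segment sign agreements in $T_{\ell,\lambda}$ are i.i.d.\ unbiased and Chernoff gives $e^{-\Omega(\ell/\lambda)}$ per window; a union bound over the $O(|J|)$ candidate starting positions (controlling that count via a geometric/Bernoulli tail bound at cost $e^{-\Omega(|J|)}$) finishes.

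Two minor remarks. First, your $\lambda$-mismatch verification is not actually needed for the \emph{upper} bound, since $\mcQ_{\ell,\lambda}(I,J)$ is by definition contained in the event ``some $J$-originating window passes $T$,'' and that is precisely the event you bound. Second, it is cleaner to run the Chernoff argument directly (as you already sketch in the parenthetical) than to invoke \thref{lem:false_positives} under a modified hypothesis: in that lemma the entire string $\bfx$ is random and the $\lambda$-mismatch assumption is what manufactures enough independence between $\bfw$ and $\widetilde{\bfw}$, whereas here independence comes for free from the disjointness of $I$ and $J$, so appealing to the internals of \thref{lem:false_positives} is unnecessary.
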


We will use these lemmas exactly as proven by HPPZ, so for the sake of brevity, we will not repeat the proofs.

\subsection{Coarse and Fine Alignments}
\label{subsec:alignment}

The next step of both our reduction and HPPZ's algorithm is to perform coarse and fine alignments.
In this portion of their construction, HPPZ set their parameters specifically for their $\exp(O(\log^{1/3} n))$ sample algorithm, so it will require some minor changes for our case.

In this section, we will define the properties we want the string $\bfx$ to have in order for each step in our alignment procedure to succeed.
In Appendix \ref{app:alignment} we will prove that if $\bfx$ is randomly chosen, it maintains these properties with high probability.

Let be $C$ a sufficiently large constant.
We define the parameters for the coarse and fine alignments to be:

\[
\ell_c = C \frac{\log^2 n}{h(C \log n)};\;\;\;\;\;\; \lambda_c = C^{1/2} \frac{\log n}{h(C \log n)}
\]

\[
\ell_f = C^{2/3} h(C \log n);\;\;\;\;\;\; \lambda_f = C^{1/12}
\]

\subsubsection{Coarsely Well-Behaved Strings}

\begin{definition}[Coarsely well-behaved strings]
\label{def:coarsely}
    Let $\bfx$ be a string of length $n$ and let $\ell_c, \lambda_c$ be as defined above.
    We say that $\bfx$ is {\em coarsely well-behaved}, if for each interval $I \subseteq [0, n]$ of length $\ell_c$, it holds that $\bfx(I)$ has robust bias at scale $\lambda_c$ and 
    \[
        \Pr_{\omega}\left[ \mcQ_{\ell_c, \lambda_c}(I, [0, n]) \right] \leq n^{-2}
    \]
    (where the probability is taken over the noise of the channel)
\end{definition}

\begin{lemma}
\thlabel{lem:coarsely_well_behaved}
    Let $\bfx$ be a random string of length $n$.
    Then, $\bfx$ is coarsely well-behaved with probability at least $1 - n^{-2}$.
\end{lemma}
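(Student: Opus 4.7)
The plan is to verify the two conditions of Definition \ref{def:coarsely} separately and then union bound, using the fact that with the chosen parameters we have $\ell_c/\lambda_c = C^{1/2}\log n$, so every bound of the form $\exp(-\Omega(\ell_c/\lambda_c))$ becomes a negative polynomial $n^{-\Omega(C^{1/2})}$ whose exponent can be made as large as we wish by enlarging the constant $C$.

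First, for the robust bias condition I would fix an arbitrary interval $I\subseteq[0,n]$ of length $\ell_c$ and apply \thref{lem:most_robust} directly to $\bfx(I)$: the probability that $\bfx(I)$ fails to have clear robust bias at scale $\lambda_c$ is $\exp(-\Omega(\ell_c/\lambda_c)) = n^{-\Omega(C^{1/2})}$. Union bounding over the at most $n$ intervals of length $\ell_c$ in $[0,n]$ and choosing $C$ large enough, the probability that some interval fails this condition is at most $\tfrac12 n^{-2}$.

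Second, for the spurious-match condition I would again fix an interval $I$ of length $\ell_c$ and apply \thref{lem:within_interval} with $J = [0,n]$. Since all bits of $\bfx$ are i.i.d.\ uniform, the lemma gives
\[
\Pr_{\bfx,\omega}\bigl[\mcQ_{\ell_c,\lambda_c}(I,[0,n])\bigr] \;\leq\; n\cdot e^{-\Omega(\ell_c/\lambda_c)} + e^{-\Omega(n)} \;\leq\; n\cdot n^{-\Omega(C^{1/2})} + e^{-\Omega(n)},
\]
which for $C$ sufficiently large is at most $n^{-5}$. Markov's inequality applied to the nonnegative random variable $\Pr_\omega[\mcQ_{\ell_c,\lambda_c}(I,[0,n])]$ (as a function of $\bfx$) then yields
\[
\Pr_{\bfx}\!\left[\Pr_\omega\!\bigl[\mcQ_{\ell_c,\lambda_c}(I,[0,n])\bigr] > n^{-2}\right] \;\leq\; n^{-3}.
\]
Union bounding over the at most $n$ choices of $I$ gives a contribution of at most $\tfrac12 n^{-2}$ to the overall failure probability.

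Adding the two contributions proves that $\bfx$ is coarsely well-behaved with probability at least $1 - n^{-2}$. The only delicate point is choosing the constant $C$ large enough so that both $n\cdot n^{-\Omega(C^{1/2})}$ terms, inflated by the relevant Markov and union-bound factors, are absorbed into $\tfrac12 n^{-2}$; this is the main (but essentially bookkeeping) obstacle, and it is handled uniformly because the exponent $C^{1/2}$ in the probability bounds of \thref{lem:most_robust} and \thref{lem:within_interval} grows without bound in $C$ while the union bound and Markov factors are only polynomial in $n$.
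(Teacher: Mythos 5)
Your proof is correct and takes essentially the same route as the paper: apply \thref{lem:most_robust} directly for the robust-bias condition, apply \thref{lem:within_interval} followed by Markov's inequality for the spurious-match condition, and close with a union bound over the $O(n)$ intervals, relying throughout on $\ell_c/\lambda_c = C^{1/2}\log n$ so that choosing $C$ large makes all the $\exp(-\Omega(\ell_c/\lambda_c))$ terms into arbitrarily small polynomials in $n$. The only cosmetic difference is in the bookkeeping of exponents (you use $n^{-5}$ and $n^{-3}$ where the paper uses $n^{-6}$ and $n^{-4}$), which is immaterial.
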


\subsubsection{Finely Well-Behaved Strings}

Recall Lemmas \ref{lem:within_interval} and \ref{lem:cross_interval}. Let $c_0$ be such that the constant factors in the $\Omega(\ell / \lambda)$ and $\Omega(|J|)$ were at least $10 c_0$.

\begin{definition}[Finely well-behaved strings]
\label{def:finely}
    Let $\bfx$ be a string of length $n$, and let $\ell = \ell_f, \lambda = \lambda_f$ as defined above.
    We say that $\bfx$ is {\em finely well-behaved} if for each interval $J = [a, a+C\log n] \subseteq [0, n]$ of length $C \log n$, there exists a sub-interval $I \subseteq [a+ 1/3 C \log n, a + 2/3 C \log n]$ of length $\ell$, such that $\bfx(I)$ exhibits robust bias at scale $\lambda$ and $\Pr_\omega[\mcQ_{\ell, \lambda} (I, J)] \leq \exp(-c_0 C^{7/12} h(C \log n))$.
\end{definition}

\begin{lemma}
\thlabel{lem:finely_well_behaved}
    Let $\bfx$ be a random string of length $n$.
    Then, $\bfx$ is finely well-behaved with probability at least $1 - n^{-2}$.
\end{lemma}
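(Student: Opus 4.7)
The plan is to mirror the proof of \thref{lem:coarsely_well_behaved} but at the finer scale, exploiting the fact that the middle third of $J$ is long enough to hold many disjoint candidates for $I$. Fix $J = [a, a + C\log n]$ and let $J_{\text{mid}} = [a + \tfrac{1}{3}C\log n, a + \tfrac{2}{3}C\log n]$. Pack $J_{\text{mid}}$ with $N = \lfloor (C\log n/3)/\ell_f\rfloor = \Theta(C^{1/3}\log n / h(C\log n))$ pairwise disjoint sub-intervals $I_1, \ldots, I_N$, each of length $\ell_f$, and call $I_j$ ``good'' if both (a) $\bfx(I_j)$ has a clear robust bias at scale $\lambda_f$, and (b) $\Pr_\omega[\mcQ_{\ell_f, \lambda_f}(I_j, J)] \leq \exp(-c_0 C^{7/12} h(C\log n))$.

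First I would bound the failure of (a). Since the $I_j$'s are disjoint, the substrings $\bfx(I_j)$ are mutually independent, so by \thref{lem:most_robust} the probability that none of them has a clear robust bias at scale $\lambda_f$ is at most $\bigl(\exp(-\Omega(\ell_f/\lambda_f))\bigr)^N = \exp(-\Omega(N \cdot C^{7/12} h(C\log n))) = \exp(-\Omega(C^{11/12}\log n))$, which is $\ll n^{-3}$ once $C$ is taken sufficiently large.

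Next I would bound the failure of (b) by combining \thref{lem:within_interval} with Markov's inequality. For any fixed $j$, \thref{lem:within_interval} gives
\[
\E_{\bfx}\bigl[\Pr_\omega[\mcQ_{\ell_f,\lambda_f}(I_j,J)]\bigr] \leq |J|\exp(-\Omega(\ell_f/\lambda_f)) + \exp(-\Omega(|J|)).
\]
Recall that $c_0$ was chosen so that the implicit constants in both $\Omega$'s exceed $10c_0$. Markov against the threshold $\exp(-c_0 C^{7/12} h(C\log n))$ then gives that (b) fails with probability at most $|J|\exp(-9 c_0 \ell_f/\lambda_f) + \exp(-\Omega(|J|) + c_0 \ell_f/\lambda_f)$; using $h(C\log n) \leq \sqrt{C \log n} \ll C \log n$ (from the assumption $h(n)\leq \sqrt n$), the second summand is $n^{-\Omega(C)}$. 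A union bound over the $N = \polylog(n)$ choices of $j$ absorbs the extra factor.

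Adding the two failure probabilities and union-bounding over the $\leq n$ possible intervals $J$ yields the claimed $1 - n^{-2}$ bound. The main obstacle is the careful bookkeeping of constants: one must fix $c_0$ small enough (relative to the implicit constants supplied by \thref{lem:most_robust} and \thref{lem:within_interval}) so that Markov still leaves $9 c_0 \ell_f/\lambda_f$ in the exponent, and then take $C$ large enough that the resulting bounds of the form $\exp(-\Omega(C^{11/12}\log n))$ and $\exp(-\Omega(C\log n))$ are small enough to swallow both the inner union bound over the $N$ candidates and the outer union bound over the $n$ choices of $J$.
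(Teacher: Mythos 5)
Your decomposition splits into ``no $I_j$ satisfies (a)'' and ``some $I_j$ fails (b)'', bounding the first by independence and the second by Markov plus a union bound over $j$. The first half is sound: the robust-bias events for disjoint $I_j$'s are genuinely independent, and your estimate $\exp(-\Omega(C^{11/12}\log n))$ matches the paper. The second half has a real gap. After Markov, your per-$j$ bound on ``(b) fails for $I_j$'' is
\[
|J|\,e^{-9c_0\,\ell_f/\lambda_f} + e^{-\Omega(|J|)+c_0\,\ell_f/\lambda_f}
= C\log n \cdot e^{-9c_0\,C^{7/12}h(C\log n)} + n^{-\Omega(C)}.
\]
You verify the second summand is tiny, but the first summand is not $n^{-\Omega(1)}$: since $\ell_f/\lambda_f = C^{7/12}h(C\log n)$ and $h(C\log n) \le \sqrt{C\log n}\ll\log n$ (and, for the relevant $h(n)=n^{1/5}\log^7 n$, in fact $h(C\log n)\asymp(\log n)^{1/5}\mathrm{polylog}\log n$), the exponent is $o(\log n)$ and hence $e^{-9c_0\,C^{7/12}h(C\log n)}\gg n^{-3}$ for any fixed $C$. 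So $\Pr[\bigcup_j B_j]$ is not small, and the set-theoretic inclusion $\bigcap_j(A_j\cup B_j)\subseteq(\bigcap_j A_j)\cup(\bigcup_j B_j)$ gives a vacuous bound.

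The deeper obstacle is that you need the probability that \emph{all} $I_j$ fail, and the spurious-match events for different $I_j$ all refer to the same interval $J$, so they are not independent and you cannot simply multiply. The paper's proof attacks exactly this: it shows (via the auxiliary event $H$) that if $I_i$ has a large spurious-match probability against all of $J$, then some length-$\ell'$ sub-interval $J_i'\subseteq J$ already has spurious-match probability $\ge\exp(-2c_0C^{7/12}h(C\log n))$; it then union-bounds over the $(C\log n)^m=n^{o(1)}$ choices of $(J_1',\dots,J_m')$, and for each fixed choice extracts a sub-sequence $i_1,\dots,i_r$ (with $r=\Theta(C^{1/6}\log n/h(C\log n))$) such that $I_{i_{k+1}}$ is far from $N_k=\bigcup_{j\le k}(I_{i_j}\cup J_{i_j}')$. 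Conditioning on the $\sigma$-field of the bits in $N_{k-1}$, each new bad event has conditional probability $\le e^{-\Omega(C^{7/12}h(C\log n))}$ via \thref{lem:within_interval} or \thref{lem:cross_interval}, and multiplying these $r$ conditional bounds yields $\Pr[\text{all bad}]\le e^{-\Omega(C^{3/4}\log n)}\le n^{-3}$. That conditional-independence product over a well-separated sub-sequence is the ingredient your argument is missing, and without it the Markov-plus-union-bound route does not reach $n^{-2}$.
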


\subsection{Using the Oracle}

In Section \ref{subsec:boolean_test}, we showed that the boolean test $T$ has several very nice properties when the input string $\bfx$ is well-behaved, and in Section \ref{subsec:alignment}, we proved that almost all strings are well-behaved.
For the rest of this section, we will denote by $\bad$ the case where $\bfx$ is not well-behaved (coarsely or finely), and by $\good$ the case where the $\bfx$ is well-behaved.

For any well-behaved string $\bfx \in \good$ and any integer $k \in [\ell_c + C \log n, n]$, we set $a_1 = k - \ell_c - C\log n$ and $a_2$ to be such that
\[
I = [a_2, a_2 + \ell_f] \subseteq J = [k - 2/3 C \log n, k - 1/3 C \log n]
\]
is the interval promised by our assumption that $\bfx$ is finely well-behaved.

For any trace $\widetilde{\bfx}$, we set our {\em coarse alignment} $\tau^k_1$ to be the first integer $b$ for which
\[
T_{\ell_c, \lambda_c}(\bfx([a_1, a_1+\ell_c]), \widetilde{\bfx}([b, b+\ell_c])) = 1
\]
or $\infty$ if no such $b$ exists.

For any trace $\widetilde{\bfx}$, if $\tau^k_1 < \infty$, we set our {\em fine alignment} $\tau^k_2$ to be the first integer
\[
b \in [\tau^k_1 -\ell_c, \tau^k_1 +2\ell_c + C \log n]
\]
such that:
\[
T(\bfx([a_2, a_2 + \ell_f]), \widetilde{\bfx}([b, b+\ell_f])) = 1
\]
If $\tau^k_1 = \infty$ or no such $b$ exists, we set $\tau^k_2 = \infty$.

From the definitions of Section \ref{subsec:alignment} and the lemmas of Section \ref{subsec:boolean_test} it will be easy to show that the following properties hold:

\begin{lemma}
\thlabel{lem:taus}
    Let $\bfx \in \good$ be a well-behaved string and let $k \in \{\ell_c + C \log n, \ldots, n\}$ be an integer.
    Then for $a_1, a_2, \tau^k_1, \tau^k_2$ as defined above, the following properties hold:
    \begin{itemize}
        \item $\Pr \left[\tau^k_1 < \infty\right] > \exp (-c_1 C^{1/2} h(C \log n))$
        \item $\Pr \left[\tau^k_1 < \infty \wedge d(k, \tau^k_1) > \ell_c\right] < n^{-2}$
        \item $\Pr \left[\tau^k_2 < \infty \mid \tau^k_1 < \infty\right] \geq \exp(-c_2 C^{1/2} h(C \log n))$
        \item $\Pr \left[\tau^k_2 < \infty \wedge d(k, \tau^k_2) > \ell_f \mid \tau^k_1 < \infty\right] < \exp(-c_3 C^{7/12} h(C \log n))$
    \end{itemize}
    Where the probabilities are taken over the randomness of the channel and $c_1, c_2, c_3, c_4 > 0$ are positive constants that may depend on $q, q^\prime$ but not on $C$ or $n$ and originate from the $\Omega(\cdot)$s and $O(\cdot)$s of the previous sections.
\end{lemma}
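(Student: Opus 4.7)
The plan is to derive each of the four claims directly from the well-behavedness properties of $\bfx$ (Section \ref{subsec:alignment}) combined with the boolean-test lemmas of Section \ref{subsec:boolean_test}. Write $b^*_c = f(a_1)$ and $b^*_f = f(a_2)$ for the ``true'' positions in the trace associated with the coarse and fine alignment windows. The positive (match) claims will follow from \thref{lem:true_match}, while the negative (misalignment) claims will follow from \thref{lem:within_interval}, in each case combined with the well-behavedness hypothesis.

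For claim 1, coarse well-behavedness guarantees that $\bfx([a_1, a_1+\ell_c])$ has clear robust bias at scale $\lambda_c$. \thref{lem:true_match} therefore gives that the coarse test passes at $b = b^*_c$ with probability at least $\exp(-O(\ell_c/\lambda_c^2))$, and a direct calculation with the parameters yields $\ell_c/\lambda_c^2 = h(C\log n)$. Since the test passing at any $b$ ensures $\tau^k_1 \le b < \infty$, the bound $\Pr[\tau^k_1 < \infty] \ge \exp(-c_1 C^{1/2} h(C\log n))$ follows for $c_1$ large enough to absorb the hidden constant. For claim 2, the strategy is to show that the event $\{\tau^k_1 < \infty,\ d(k, \tau^k_1) > \ell_c\}$ is contained in $\mcQ_{\ell_c, \lambda_c}([a_1, a_1+\ell_c], [0, n])$, whose probability is at most $n^{-2}$ by coarse well-behavedness. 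This containment is a geometric argument: since $\ell_c$ dominates both the inter-window offset $k - a_1 = \ell_c + C\log n$ and the scale $\lambda_c$, a sufficiently large $d(k, \tau^k_1)$ forces $d(a_1 + i, \tau^k_1 + i) \ge \lambda_c$ uniformly in $i \in [0, \ell_c]$, so that the test passing at $\tau^k_1$ is by definition a spurious match.

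For claim 3 I would condition inside: by claim 2 together with $h(n) \le \sqrt n$, we get $\Pr[d(k, \tau^k_1) \le \ell_c \mid \tau^k_1 < \infty] \ge 1 - n^{-2}\exp(c_1 C^{1/2} h(C\log n)) = 1 - o(1)$. Given $d(k, \tau^k_1) \le \ell_c$, the target position $b^*_f$ lies inside the fine search window $[\tau^k_1 - \ell_c, \tau^k_1 + 2\ell_c + C\log n]$, using $a_2 - a_1 \in [\ell_c + C\log n /3,\ \ell_c + 2C\log n/3]$ and the small fluctuations of $f$. Applying \thref{lem:true_match} with the fine parameters to the interval $I$ guaranteed by fine well-behavedness gives that the fine test passes at $b^*_f$ with probability at least $\exp(-O(\ell_f/\lambda_f^2)) = \exp(-O(C^{1/2} h(C\log n)))$, yielding the claim. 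Claim 4 is the fine analogue of claim 2: a large $d(k, \tau^k_2)$ forces $\lambda_f$-mismatch across the whole window, so the event is contained in $\mcQ_{\ell_f, \lambda_f}(I, [k - C\log n, k])$, whose unconditional probability is at most $\exp(-c_0 C^{7/12} h(C\log n))$ by fine well-behavedness. Dividing by the lower bound $\Pr[\tau^k_1 < \infty] \ge \exp(-c_1 C^{1/2} h(C\log n))$ from claim 1 produces $\exp(-(c_0 C^{7/12} - c_1 C^{1/2}) h(C\log n)) \le \exp(-c_3 C^{7/12} h(C\log n))$ for some $c_3 > 0$, crucially because $7/12 > 1/2$ makes $c_0 C^{7/12}$ dominate $c_1 C^{1/2}$ for large $C$.

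The main obstacle is the geometric argument invoked in claims 2 and 4: one has to verify that a single large misalignment $d(k, \tau^k_1) > \ell_c$ (or $d(k, \tau^k_2) > \ell_f$) propagates into uniform $\lambda$-mismatch at every offset $i$ in the relevant window, which requires controlling both the linear drift of $f$ and its $O(\sqrt{\ell})$ small-scale deviations. A secondary subtlety is the conditioning in claims 3 and 4, where the small unconditional probability of $\tau^k_1 < \infty$ could in principle amplify conditional error probabilities; the exponent gap $7/12 - 1/2 > 0$ engineered into the choice of the fine parameters $\ell_f, \lambda_f$ is exactly what absorbs this amplification.
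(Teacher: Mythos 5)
Your proof follows essentially the same path as the paper's (which is a two-sentence pointer to Definitions \ref{def:coarsely}, \ref{def:finely} and \thref{lem:true_match}), but you supply considerably more detail. The parameter arithmetic is right: $\ell_c/\lambda_c^2 = h(C\log n)$ and $\ell_f/\lambda_f^2 = C^{1/2}h(C\log n)$, which is exactly where the $C^{1/2}$ in claims 1 and 3 comes from. Your observation that the exponent gap $7/12 > 1/2$ built into the fine parameters is precisely what makes the conditional bound in claim 4 survive division by $\Pr[\tau^k_1 < \infty]$ is correct and is the genuinely non-obvious part of the lemma; the paper glosses over it.

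That said, the geometric containment you invoke for claims 2 and 4 is where your argument is shakiest, and you flag it yourself, but the specific justification you give is wrong as stated. You write that ``$\ell_c$ dominates both the inter-window offset $k - a_1 = \ell_c + C\log n$ and the scale $\lambda_c$,'' but $k - a_1 = \ell_c + C\log n > \ell_c$, so $\ell_c$ does not dominate it. Beyond that wording slip, there is a deeper issue worth pinning down: with the paper's literal definitions, $\tau^k_1$ is the start of the matched window, i.e., it is meant to approximate $f(a_1)$, not $f(k)$. For a good alignment one would therefore have $|f(k) - \tau^k_1| \approx p(k - a_1) = p(\ell_c + C\log n)$, which already exceeds $\ell_c$ when $p$ is close to $1$; so the event $d(k,\tau^k_1) > \ell_c$ would then fail to be contained in the spurious-match event $\mcQ_{\ell_c,\lambda_c}$. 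The likely reconciliation is that $d(k,\tau^k_1)$ should be read with an implicit shift (measuring the displacement of $\tau^k_1 + (k-a_1)$ from $f(k)$, or equivalently using $d(a_1,\tau^k_1)$) — this is an imprecision in the paper's statement that you have inherited rather than introduced, but a careful proof of claims 2 and 4 should resolve it explicitly, e.g., by arguing: if the coarse match is not $\lambda_c$-mismatched, then some $i \in [0,\ell_c]$ has $d(a_1 + i, \tau^k_1 + i) < \lambda_c$, and a Chernoff bound on the drift of $f$ over the $O(\ell_c)$-length stretch from $a_1+i$ to $k$ bounds $d(a_1,\tau^k_1)$ (or the appropriately shifted quantity at $k$) by $O(\lambda_c + \sqrt{\ell_c \log n}) \ll \ell_c$. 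Your proof needs that missing bridge to be complete.

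Minor: in claim 1, the extra $C^{1/2}$ in $\exp(-c_1 C^{1/2} h(C\log n))$ is merely slack that makes the lower bound weaker than what \thref{lem:true_match} actually delivers (which is $\exp(-O(h(C\log n)))$), so ``take $c_1$ large'' is not needed; any $c_1$ at least as large as the hidden constant works once $C \ge 1$. This does not affect correctness, but your phrasing suggests the $C^{1/2}$ is doing work in claim 1 that it isn't.
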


The first two parts of this claim follow directly from our definition of coarsely well-behaved strings (Definition \ref{def:coarsely}) and \thref{lem:true_match}.
The rest of it follows directly from our definition of finely well-behaved strings (Definition \ref{def:finely}) and the same lemma.

In order to prove the main claim of our reduction we will need one more lemma which we will prove in the next subsection:

\begin{lemma} [$\tau^k_1, \tau^k_2$ can be computed efficiently]
\thlabel{lem:efficient}
    There is an algorithm $A_{\textup{align}}$ such that, 
    for any $\bfx \in \good, k \in \{\ell_c + C \log n, \ldots, n\}$ and 
    any trace $\widetilde{\bfx}$ of $\bfx$ through the channel, 
    given $k, \bfx(0:k), (\tau_1^1, \ldots , \tau_1^{k-1}), (\tau_2^1, \ldots, \tau_2 ^ k)$, 
    $A_{\textup{align}}$ computes $\tau^k_1, \tau^k_2, a_2 $ in time $n^{o(1)}$, 
    with probability $\geq 1 - n^{-2}$.
\end{lemma}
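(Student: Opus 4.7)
The plan is to bring each alignment step down to $n^{o(1)}$ time by combining a one-time $O(n)$ preprocessing per trace (which makes every single Boolean-test evaluation cost $\polylog(n)$) with the use of previously computed alignments to restrict each per-step search to an $n^{o(1)}$-sized window.

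For the preprocessing, in $O(n)$ time I compute the prefix sums $\Sigma_{\bfx}[j] = \sum_{i \leq j}(2 x_i - 1)$ and $\Sigma_{\widetilde{\bfx}}[j] = \sum_{i \leq j}(2 \widetilde{x}_i - 1)$. Every length-$\lambda$ block sum appearing inside the test is a difference of two of these values, so $T_{\ell, \lambda}^c(\bfx([a, a+\ell]), \widetilde{\bfx}([b, b+\ell]))$ evaluates in $O(\ell / \lambda) = \polylog(n)$ time for any $a, b$, and the robust bias of any length-$\ell$ substring can be checked in the same time. For the coarse alignment I maintain, per trace, a pointer $p$ that tracks an estimate of the trace-position of the $a_1^k$-th bit of $\bfx$. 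On the smallest valid $k$ I find $\tau_1^k$ by one full scan of the trace, costing $O(n \cdot \polylog(n))$ once per trace and amortising to $\polylog(n)$ per step over the $n$ iterations. For each subsequent $k$ I search for $\tau_1^k$ only inside the window $[p - W, p + W]$ with $W = \poly(h(C \log n)) = n^{o(1)}$, taking $\tau_1^k$ to be the smallest $b$ in this window passing the test (or $\infty$ if none does); on a success I set $p \leftarrow b + 1$, and on a failure I slide $p$ forward by the expected per-bit advance $(1 + q^\prime / p^\prime)(1 - q)$.

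Correctness follows by conditioning on $\bfx$ being coarsely well-behaved (Definition \ref{def:coarsely}) together with the complement of the global $\mcQ_{\ell_c, \lambda_c}(I, [0, n])$ event, which is an event of probability at least $1 - n^{-2}$ over the channel. On that event, every position at which $T_{\ell_c, \lambda_c}$ can pass for index $k$ lies within $\ell_c$ of the true image $f(a_1^k)$, so $\tau_1^k$ is recovered exactly whenever $|p - f(a_1^k)| \leq W - \ell_c$. The signed drift $p - f(a_1^k)$ is a centred random walk with $O(1)$ step variance that resets to a value of size $O(\ell_c)$ at each successful realignment, and by \thref{lem:taus} the per-step success probability is at least $\exp(-c_1 C^{1/2} h(C \log n))$, so the typical gap between realignments is $\exp(O(h(C \log n))) = n^{o(1)}$ (using $h(n) \leq \sqrt{n}$). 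An Azuma-type tail bound over these short gaps, union-bounded over $k \in [1, n]$, keeps $|p - f(a_1^k)| \leq W - \ell_c$ throughout with probability at least $1 - n^{-2}$.

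The fine alignment $\tau_2^k$ is by definition searched only inside the window $[\tau_1^k - \ell_c, \tau_1^k + 2 \ell_c + C \log n]$ of size $n^{o(1)}$, so I compute it by directly evaluating $T_{\ell_f, \lambda_f}$ at every position in $n^{o(1)}$ time. The anchor $a_2$ depends only on $k$ and $\bfx$: by \thref{lem:finely_well_behaved} a valid sub-interval $I \subseteq [k - \tfrac{2C}{3} \log n, k - \tfrac{C}{3} \log n]$ exists, and I locate one by scanning candidate starting positions and checking robust bias against the precomputed prefix sums of $\bfx$ in $\polylog(n)$ per candidate. The main obstacle is the drift-and-gap analysis above: the events $\{\tau_1^k < \infty\}$ across $k$ are correlated through the shared channel randomness, and the cleanest way I see to run the concentration is to first condition on the full channel realisation $\omega$ (which pins down $f$ deterministically) and then to bound both the maximum consecutive-failure gap and the running drift of $p - f$ over that gap, giving a deterministic $W - \ell_c$ guarantee for a $1 - n^{-2}$ fraction of channel realisations $\omega$.
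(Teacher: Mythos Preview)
Your handling of $a_2$ has a genuine gap. The anchor $a_2$ is defined (see the paragraph before \thref{lem:taus}) as the left endpoint of a sub-interval $I \subseteq [k - \tfrac{2}{3} C\log n, k - \tfrac{1}{3} C\log n]$ satisfying \emph{two} properties: (i) $\bfx(I)$ has clear robust bias at scale $\lambda_f$, and (ii) $\Pr_\omega[\mcQ_{\ell_f, \lambda_f}(I, J)] \leq \exp(-c_0 C^{7/12} h(C\log n))$. Your scan only checks~(i). Property~(ii) is precisely what drives the fourth bullet of \thref{lem:taus}, and hence what bounds the false-sample rate fed into the shifted-trace-reconstruction oracle; if you pick an $a_2$ violating~(ii) the reduction breaks. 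There is no reason every robust-bias interval also has low spurious-match probability---indeed the proof of \thref{lem:finely_well_behaved} only shows that \emph{some} interval among many candidates is good, not that all robust ones are. The paper resolves this by simulation: since $J \subseteq [0,k]$ and $\bfx(0{:}k)$ is given, one generates $\exp(O(h(C\log n))) = n^{o(1)}$ fresh traces of $\bfx(J)$ and, for each candidate $a_2$, empirically counts spurious matches to estimate $\Pr_\omega[\mcQ_{\ell_f,\lambda_f}(I,J)]$ to the required accuracy. This Monte-Carlo step is the missing ingredient in your proposal.

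Your coarse-alignment mechanism (a running pointer $p$ drifting forward at the expected channel rate, with a martingale-plus-Azuma analysis of $p - f(a_1^k)$) is a different route from the paper's but aims at the same conclusion. The paper instead directly uses the input $(\tau_1^1,\ldots,\tau_1^{k-1})$: it locates the last $j<k$ with $\tau_1^j < \infty$, argues via Hoeffding that $j \geq k - O(\ell_c \log^2 n)$ and via Chernoff that $\tau_1^k \leq \tau_1^j + O((k-j)\log^2 n)$, which immediately gives an $n^{o(1)}$-sized search window. That route avoids both the drift analysis and the amortised initial full scan, and it matches the per-call $n^{o(1)}$ bound in the lemma statement without appealing to amortisation over the $n$ iterations.
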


Before proving \thref{lem:efficient}, we will show that the main theorem of our reduction (\thref{thm:reduction}) follows immediately from it.

\begin{proof}[Proof of \thref{thm:reduction}]
    Let $q, q^\prime \in [0, 1)$ be the parameters of the channel, and set $C$ to be a sufficiently large constant\footnote{This choice of $C$ it not meant to be tight.}:
    \[
    C = (100\max \{ 1, 1/c_0, C_1, 1/C_1, 1/c_1, 1/c_2, 1/c_3, c_0, c_1, c_2, c_3, c_4\}) ^ {100}
    \]
    (where $c_0, c_1, c_2, c_3$ are the constants from the Definition \ref{def:finely} and \thref{lem:efficient} and $C_1$ is the constant from \thref{thm:reduction}).

    We set the constant $C_2$ of \thref{thm:reduction} to be equal to $C$.

    We will prove that given the first $k \geq \ell_c + C\log n$ bits of $\bfx$, we can reconstruct the rest.
    We can work under this assumption, by adding $\ell_c + C\log n$ virtual $0$ bits to the start of $\bfx$ and adding a trace of $0^k$ to the beginning of each of the traces $\widetilde{\bfx}$ before the reconstruction.

    Given the first $k$ bits of $\bfx$, we will show that we can reconstruct the $k+1$th bit of $\bfx$ and from there, we can continue this process iteratively.
    Using the alignment algorithm from \thref{lem:efficient}, we compute $\tau_1 ^ k$ and $\tau_2 ^k$ of each of the traces $\widetilde{\bfx}$.

    Given $a_2, \tau_2 ^k$, we run the shifted trace reconstruction algorithm $A$ with parameters $n^\prime, n^\prime - 1$, where $n^\prime = k - a_2 \in [1/3 C \log n, 2/3 C \log n]$, on the set:
    \[
    \mathcal{X} = \left\{\widetilde{\bfx}(\tau_2 ^k:) \mid \substack{
        \widetilde{\bfx}\text{ is a sample}\\
        \tau_2^k(\widetilde{\bfx}) < \infty} \right\}
    \]

    The first and third claims of \thref{lem:taus}, mean that for each of our $N = \exp(C h(C \log n))$ traces, it will have a finite $\tau_2 ^ k$, with probability at least
    \[
    \exp(-C^{1/2} (c_1 + c_2) h(C \log n)) \geq \exp(-1/3 C h(C \log n)).
    \]
    Therefore, by Hoeffding's inequality, the probability that we will have at least 
    \[
    1/2 \exp(2/3 C h(C \log n)) > \exp(h(2/3 C \log n)) \geq \exp(h(k - a_2)) 
    \]
    traces for which $\tau_2 ^k < \infty$ is at least 
    \[
    1 - \exp(-\Omega(C h(C \log n))) = 1 - n^{-\omega(1)}
    \]

    \thref{lem:taus} gives us that the probability that any message for which $\tau_2^k < \infty$ is the result of a spurious match is at most 
    \[
    \varepsilon(n) \leq \exp(-(C^{7/12} c_3 - C^{1/2} (c_1 + c_2)) h(C\log n)) \leq \exp (-C_1 h(C \log n))
    \]
    We will make no assumptions about the strings that came from spurious alignments.
    By definition, any $\tau_2 ^ k < \infty$ that was not the result of a spurious match, had distance $d(a_2, \tau_2^k)\leq \eta = \lambda_f$.

    Therefore, the samples in $\mathcal{X}$, constitute a shifted trace reconstruction problem.
    Because we assume $A$ solves the shifted trace reconstruction problem with probability $1 - \exp(-n^\prime)$, and we will be applying $A$ on messages of length $n^\prime \geq 1/3 C \log n$, it will succeed with probability $\geq 1 - \exp(-n^\prime) > 1 - n^{-2}$.

    Taking a union bound on the values of $k$, we see that $A$ will succeed in resolving the value of $x_k$ for all $k$, w.p. $\geq 1 - 1/n$.

\end{proof}

\subsection{Time Complexity (Proof of \texorpdfstring{\thref{lem:efficient}}{Lemma \ref{lem:efficient}})}

All that is left in order to prove the reduction (\thref{thm:reduction}) is to show that $\tau_1, \tau_2, a_2$ can be computed efficiently and with a high success rate.

For any trace $\widetilde{\bfx}$, given $\tau_1$ and $a_2$, it is easy to compute $\tau_2$ by setting $I = [a_2, a_2 + \ell_f]$ and for each integer $ \tau_1 - \ell_c \leq b \leq \tau_1 + 2 \ell_c + C \log n$, we perform the test $T(\bfx(I), \widetilde{\bfx}([b, b+\ell_f]))$, outputting the first $b$ for which it returns $1$.
This requires $O(\ell_c)$ iterations of a test that takes $O(\ell_f)$ time, for a total of $n^{o(1)}$.

$a_2$ is defined as being some index in $[k - 2/3 C \log n, k - 1/3 C \log n]$ for which two properties hold (whose existence is promised by our assumption that $\bfx$ is finely well-behaved).
First, the segment $\bfx([a_2, a_2 + \ell_f])$ exhibits robust bias at scale $\lambda_f$, and this is easy to check in $\text{poly}(\ell_f)$ time.

The second property is that we want the probability of having a spurious match between $I$ as defined above and any subinterval of $J = [k - C \log n, k]$ to be lower than some $\exp(-\Omega(h(C \log n)))$.
In order to check if $a_2$ maintains this property, we will generate a sufficiently large (but still $\exp(O(h(C \log n))) = n^{o(1)}$) number of traces of $\bfx(J)$ and for each possible value of $a_2$, we will count the number of sub-intervals of $J$ for which $I$ has a spurious match.
Using standard probability bounds, we can show that this process will allow us to approximate $\mcQ_{\ell_f, \lambda_f} (I, J)$ to a sufficiently high degree of accuracy with a $n^{-\omega(1)}$ failure rate and $n^{o(1)}$ complexity.

This leaves us with the task of computing $\tau_1 ^ k$ efficiently.
To do so, we search for the last finite $\tau_1 ^ j < \infty$ of this trace.
From Hoeffding's inequality, it is easy to see that with probability $\geq 1 - n^{-\omega(1)}$, we will have $j \geq k - 100 \ell_c \log^2 n$ and from the Chernoff bound that with a similarly high probability, if $\tau_1^k$ and $\tau_1^j$ both resulted from real matches (i.e. not spurious ones) then $\tau_1 ^ k \leq \log^2 n (k - j) + \tau_1 ^j$.

Combining both of these, it suffices to check $n^{o(1)}$ options for $\tau_1^k$.
Since each of these tests takes $n^{o(1)}$ time, this step also has a complexity of $n^o(1)$, proving \thref{lem:efficient}.


\section{Conversion to Complex Analysis}
\label{sec:fourier}

Like many other results regarding the trace reconstruction problem (such as \cite{nazarov2017trace, de2017optimal, peres2017average, holden2020subpolynomial, chase2021separating}), our proof of \thref{thm:worst_case} will rely on a complex analysis based on the results of Borwein and Erd\'{e}lyi's seminal research on Littlewood polynomials \cite{borwein1997littlewood}.
These analyses are typically based on proving that some polynomial related to the input message is equal to the average of a property of the traces.

In this section, we will adapt the ``non-linear" complex analysis in Chase's construction (\thref{prop:chase_6_2}) to  insertion-deletion channels with random shifts using a generalization of the analysis shown by HPPZ (\thref{lem:peres_22}). 

\begin{proposition} [Proposition 6.2 of \cite{chase2021separating}]
\label{prop:chase_6_2}
For any $x \in \{0,1\}^n, l \ge 1, w \in \{0,1\}^l$, and $z_0,\dots,z_{l-1} \in \C$, we have $$\E_x\left[p^{-l} \sum_{j_0 < \dots < j_{l-1}} \left(\prod_{i=0}^{l-1} 1_{U_{j_i} = w_i}\right)\left(\frac{z_0-q}{p}\right)^{j_0}\left(\prod_{i=1}^{l-1} \left(\frac{z_i-q}{p}\right)^{j_i-j_{i-1}-1}\right)\right]$$ $$ = \sum_{k_0 < \dots < k_{l-1}} \left(\prod_{i=0}^{l-1} 1_{x_{k_i}=w_i}\right) z_0^{k_0}\left(\prod_{i=1}^{l-1} z_i^{k_i-k_{i-1}-1}\right).$$
\end{proposition}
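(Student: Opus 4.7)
The plan is to expand the expectation by linearity over the trace-randomness and, for each candidate tuple of original indices $k_0 < \cdots < k_{l-1}$ with $x_{k_i}=w_i$ for all $i$, count the probability that those bits survive the channel and occupy exactly the trace positions $j_0 < \cdots < j_{l-1}$. First I would fix such a tuple and observe that, since this is a deletion-only channel (no $q^\prime$ appears in the statement), the per-bit survival events are independent Bernoulli$(p)$. Conditioning on the survival of all $l$ chosen bits contributes a factor $p^l$, and independence across the complementary intervals $[0,k_0), (k_0,k_1), \ldots, (k_{l-2},k_{l-1})$ turns the joint position constraint into a product of binomials: the number of survivors in the $i$-th gap must equal $j_i-j_{i-1}-1$, and the number in the prefix must equal $j_0$.

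Assembling this, the left-hand side equals
\[
p^{-l}\sum_{\substack{k_0<\cdots<k_{l-1}\\ x_{k_i}=w_i}} p^l \sum_{j_0<\cdots<j_{l-1}} \binom{k_0}{j_0} p^{j_0}q^{k_0-j_0}\left(\tfrac{z_0-q}{p}\right)^{j_0} \prod_{i=1}^{l-1}\binom{k_i-k_{i-1}-1}{j_i-j_{i-1}-1}p^{j_i-j_{i-1}-1}q^{(k_i-k_{i-1}-1)-(j_i-j_{i-1}-1)}\left(\tfrac{z_i-q}{p}\right)^{j_i-j_{i-1}-1}.
\]
The outer $p^{-l}$ cancels the $p^l$ from the joint survival probability, and each explicit $p^{\cdots}$ factor cancels with the denominator of the $((z_i-q)/p)^{\cdots}$ in its summand.

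Next I would apply the binomial theorem in the form $\sum_{m=0}^N \binom{N}{m}(z-q)^m q^{N-m} = z^N$ to each of the $l$ decoupled inner sums: once with $N=k_0$ to collapse the $j_0$-sum into $z_0^{k_0}$, and once with $N=k_i-k_{i-1}-1$ for each $i\geq 1$ to collapse the $(j_i-j_{i-1}-1)$-sum into $z_i^{k_i-k_{i-1}-1}$. What remains is exactly the summand on the right-hand side, and the indicator $\prod_i 1_{U_{j_i}=w_i}$ inside the expectation enforces the constraint $x_{k_i}=w_i$ on the surviving tuple, so the sets of tuples ranged over on both sides agree.

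The step most prone to error is disentangling the joint $j$-sum into independent gap-sums with the correct binomial coefficients and the correct exponents of $p$ and $q$; once that bookkeeping is done, the remainder is a routine invocation of the binomial theorem. There is no analytic subtlety here — the whole content of the proposition is combinatorial, arising from the independence of deletions across positions together with linearity of expectation.
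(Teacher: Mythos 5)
Your proof is correct, and it captures the standard combinatorial derivation: linearity of expectation, conditioning on which input indices $k_0<\cdots<k_{l-1}$ land at trace positions $j_0<\cdots<j_{l-1}$, exploiting independence of deletions across the disjoint gaps to factor the probability into binomials, and then collapsing each decoupled $j$-sum via $\sum_m \binom{N}{m}(z-q)^m q^{N-m}=z^N$. Note, though, that the paper itself does not prove Proposition~\ref{prop:chase_6_2}; it cites it from Chase and instead proves the more general Lemma~\ref{lem:complex_connection}, which extends the identity to the insertion--deletion channel with a random shift. The proof of that lemma follows the same skeleton as yours (expand over tuples $(\vec{k},\vec{r})$, use independence across gaps as in equation~\eqref{eq:A_independence}, then resum), but phrases the gap-level resummation in generating-function language: $P_{\alpha_k}(\zeta)=\sum_r \zeta^r\alpha_{k,r}$, with $\phi_1(w)=pw+q$ encoding the Bernoulli survival and $\phi_2$ the geometric insertions. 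In the deletion-only case ($q'=0$, no shift) this generating-function machinery degenerates to exactly the binomial theorem you invoke --- $P_{\mathrm{Binom}(N,p)}\bigl(\tfrac{z-q}{p}\bigr) = \bigl(p\cdot\tfrac{z-q}{p}+q\bigr)^N = z^N$ --- so the two arguments are the same computation. Your direct bookkeeping is cleaner for this special case; the paper's abstraction is what lets it also absorb insertions (where the $(-1)^{\widetilde a_r}$ trick must be used to kill the inserted bits, cf.\ equation~\eqref{eq:B_r}) and the shift distribution $P(1/z)$.
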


\begin{lemma} [Lemma 22 of \cite{holden2020subpolynomial}]
\thlabel{lem:peres_22}
Let $S$ be a bounded $\N$-valued random variable. Let
  $\bfa=(a_0,a_1,\dots)\in[-1,1]^{\N}$, and let $\wt{\bfa}$ be the output from
  the insertion-deletion channel with deletion (resp.\ insertion)
  probability $q$ (resp.\ $q'$), applied to the randomly shifted
  string $\theta^{S}(\bfa)$. Let $\phi_1(w)=pw+q$, $\phi_2(w) =
  \frac{p'w}{1-q' w}$, and $\sigma(s)=\Pr[S=s]$ for $s\in\N$. Define
  \[ P(z) := \sum_{s=0}^{d} \sigma(s) z^s, \qquad Q(z):=\sum_{j= 0}^\infty a_j z^j. \]
  Then, for any $|w|<1$,
  \begin{equation} \label{eq:gen-func}
    \E\left[ \sum_{j\geq 0} \wt{a}_j w^j \right] =     
    \frac{pp'}{1-q'\phi_1(w)}
    \cdot
     P\left(\frac{1}{\phi_2\circ\phi_1(w)} \right ) \cdot Q(\phi_2\circ\phi_1(w)).
  \end{equation}
\end{lemma}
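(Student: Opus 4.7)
The plan is to prove the identity by conditioning on the shift $S = s$ to reduce to a deterministic-shift generating function for the insertion-deletion channel, then averaging the resulting conditional identity against $\sigma(s)$. The polynomial $P$ enters only in this final averaging step.

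Conditioned on $S = s$, the channel input is $(a_s, a_{s+1}, \dots)$. For each source index $k \ge 0$ let $X_k \in \{0,1\}$ be the indicator that $a_{s+k}$ survives deletion, and let $K_i$ be the number of deletion-survivors among the $G_i \sim \mathrm{Geom}(q')$ bits inserted just before source index $i$. The inserted bits are uniform in $\{\pm 1\}$ and so contribute mean zero to $\E[\wt{a}_j]$; only source bits matter. A surviving $a_{s+k}$ lands at trace position $\sum_{i<k} X_i + \sum_{i\le k} K_i$, and by independence of the $X_i$ and $K_i$,
\[
    \E\!\left[\sum_j \wt{a}_j w^j \,\middle|\, S = s\right] = \sum_{k \ge 0} a_{s+k}\, \E[X_k] \, \prod_{i<k} \E[w^{X_i}] \, \prod_{i \le k} \E[w^{K_i}].
\]
Using $\E[X_k] = p$, $\E[w^{X_i}] = q + pw = \phi_1(w)$, and $\E[w^{K_i}] = \E[\phi_1(w)^{G_i}] = p'/(1-q'\phi_1(w))$ (the last obtained by Bernoulli$(p)$ thinning of a Geom$(q')$ followed by summing a geometric series), the right-hand side collapses to
\[
    \frac{pp'}{1-q'\phi_1(w)} \sum_{k \ge 0} a_{s+k}\, u^k, \qquad u := \phi_2(\phi_1(w)) = \frac{p'\phi_1(w)}{1-q'\phi_1(w)}.
\]

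Finally I would average over $S$. The inner sum rewrites as $\sum_{k \ge 0} a_{s+k} u^k = u^{-s} Q(u)$, so multiplying by $\sigma(s)$ and summing on $s$ yields $\sum_s \sigma(s) u^{-s} Q(u) = P(1/u) Q(u)$, which together with the prefactor gives the stated formula. The main obstacle is the bookkeeping in the conditional step: the asymmetric ranges in the position formula (the $X$-factors running up to $k-1$ but the $K$-factors up to $k$) are exactly what split off the overall prefactor $pp'/(1-q'\phi_1(w))$ from the $k$-th power of $\phi_2 \circ \phi_1(w)$, and tying this composition back to the geometric generating function for the insertions is where all the work is. Once this conditional identity is in place, the averaging over $S$ is a purely algebraic manipulation relating $\sum_s \sigma(s) u^{-s}$ to $P(1/u)$.
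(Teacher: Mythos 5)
Your conditioning-on-$S$ decomposition is exactly the natural approach, and the core computation is correct: the independence of the $X_i$ and $K_i$, the identity $\E[w^{K_i}] = p'/(1-q'\phi_1(w))$ via Bernoulli thinning of a geometric, and the collapse of the per-index generating-function product into the prefactor $pp'/(1-q'\phi_1(w))$ times $(\phi_2\circ\phi_1(w))^k$ are all right. You also correctly invoke the standing hypothesis that insertions are centered; the paper remarks right after \thref{lem:peres_22} that this is required and is supplied in HPPZ's application by working with differences $a_j = x_j - y_j$.

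The one step that needs a caveat is the rewriting $\sum_{k\geq 0} a_{s+k}u^k = u^{-s}Q(u)$. Expanding, $u^{-s}Q(u) = \sum_{j\geq 0}a_j u^{j-s}$, while the left side is $\sum_{j\geq s}a_j u^{j-s}$; these agree only when $a_0 = \cdots = a_{s-1} = 0$. So the displayed identity requires the implicit hypothesis that $a_j = 0$ for all $j < \max\supp(S)$, which is automatic in HPPZ's application but is not recorded in the statement; without it the lemma is literally false (take $a_0=1$, all other $a_j=0$, $S\equiv 1$, $q=q'=0$: the left side is $0$, the right side is $1/w$). You inherit this imprecision from the statement itself, so it is not a flaw unique to your argument, but the final averaging over $S$ is not ``purely algebraic'' without that hypothesis, and you should say so. Note also that this paper does not actually prove the cited lemma; the closest thing it proves is \thref{lem:complex_connection}, which instead tracks the position distributions $\alpha_{k,r}$ directly via channel generating functions --- the same calculation with different bookkeeping, carrying the same implicit restriction on which source indices $k_0$ can contribute.
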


Ideally, we would want to directly combine these theorems.
However, \thref{lem:peres_22} works only when the entries inserted by the channel are taken from a centered distribution (i.e. they have a mean of $0$).
This is not problematic for HPPZ's analysis, since they only apply the lemma to the difference between the messages $a_j = x_j - y_j$ (where $\bfx$ and $\bfy$ are the messages between which one is trying to distinguish).

In contrast, for Chase's upper bound, one distinguishes between $f(\bfx) = \prod_i \left(x_{j_i} - w_i \right)$ and $f(\bfy)$ (for some $\bfw \in \left\{0, 1\right\}^l$).
Combining these techniques is not trivial, because $f(\bfx)-f(\bfy)$ cannot be written as a function of $\textbf{a} = \bfx - \bfy$.

Our goal in this section will be to show that we can overcome this problem.
In particular, we will prove the following theorem:

\begin{theorem}
\thlabel{thm:can_compute_g}
Let $h(n) = n^{1/5} \log^7 n$, $l \leq 2 n^{1/5} + 1$ and let $c_1 > c_2 > 0$ be sufficiently small constants.

Let $f(x_1, \ldots, x_l)$ be some function from $\{0, 1\}^l$ to $\D$ (in our case $f(x_1, \ldots, x_l) = \prod_i (x_i - w_i)$ for some $\bfw \in \{0, 1\}^l$).
We define the polynomial $g_\bfx ^ f$ to be the following:
\[
g_{\bfx} ^ f(\zeta_0 , \zeta_1, \ldots , \zeta_l) = \sum_{k_0 < \cdots < k_{l-1}} (-1)^{x_{k_0}} f(x_{k_1}, \ldots x_{k_l}) \zeta_0 ^{k_0 - 1} \zeta_1 ^{k_1 - k_0 - 1} \ldots \zeta_{l-1}^{k_{l}-k_{l-1} - 1}
\]

For any point $z_0, z_1, \ldots, z_l \in \C^{l+1}$, such that $z_0 = (1 - n^{-4/5} \log^{6} n) e^{i\theta}$, $\abs{\theta} \leq n^{2/5}$ and $z_1 = z_2 = \cdots = z_l \in [1-c_1, 1-c_2]$, given $\exp(h(n))$ traces of the shifted trace reconstruction problem with false-sample rate $\varepsilon(n) = \exp(-h(n))$ and shift inaccuracy $\eta = O(h(n))$, we can compute $g(z_0, z_1, \ldots, z_l)$ to within an additive error of $\pm \exp(-\Omega(h(n)))$, with probability $1 - \exp(-\omega(n))$ and in time $\exp(\wt{O}(h(n)))$.

Furthermore, when $q < 1/2$, this also holds for $z_1 = z_2 = \cdots = z_l \in [-c_1, c_1]$.
\end{theorem}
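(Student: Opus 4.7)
The plan is to Fourier-decompose $f$ into parity monomials, design on each trace a polynomial statistic whose expectation equals the desired value of $g_\bfx^f$ times an explicitly computable nonvanishing prefactor, and then use Hoeffding over the $\exp(h(n))$ samples to approximate it. Concretely, expand $f(\vec x) = \sum_{S \subseteq [l]} \widehat{f}(S) \prod_{i \in S}(-1)^{x_i}$; since $f$ maps into $\D$ we have $|\widehat{f}(S)|\leq 1$, and since $l \leq 2 n^{1/5}+1$ there are only $\exp(O(n^{1/5}))$ terms, so it suffices to approximate $g_\bfx^{f_S}(\vec z)$ to error $\exp(-\omega(h(n)))$ for each parity basis function $f_S(\vec x) = \prod_{i \in S}(-1)^{x_i}$ and sum. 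After this reduction,
\[
g_\bfx^{f_S}(\vec\zeta) \;=\; \sum_{k_0 < \cdots < k_l} \prod_{i \in \{0\} \cup S}(-1)^{x_{k_i}} \cdot \zeta_0^{k_0-1}\prod_{i \geq 1} \zeta_{i-1}^{k_i - k_{i-1}-1},
\]
in which every bit-dependent factor is the \emph{centered} observable $(-1)^{x}$ --- precisely the structure needed to combine Chase's multi-coordinate telescoping with HPPZ's insertion-shift generating function identity.

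For each $S$ I would define the trace statistic
\[
T_S(\wt{\bfx}) \;=\; \sum_{j_0 < j_1 < \cdots < j_l} \prod_{i \in \{0\}\cup S}(-1)^{\wt{x}_{j_i}} \cdot \wt{z}_0^{\,j_0} \prod_{i \geq 1} \wt{z}_{i-1}^{\,j_i - j_{i-1}-1},
\]
where the $\wt{z}_i$ are the images of the $z_i$ under the inverse of the Möbius map $\phi_2\circ\phi_1$ from \thref{lem:peres_22} (with the $p,p',q,q'$ rescalings absorbed into a prefactor). The centering $(-1)^{\wt{x}_{j_i}}$ annihilates insertion contributions, since an inserted bit is uniform over $\{0,1\}$, so in $\E[T_S]$ only those tuples survive in which every marked coordinate $i \in \{0\}\cup S$ corresponds to an original (post-shift) bit $k_i$ of $\theta^S(\bfx)$. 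Conditioning on this identification, the expectation over gap lengths factorises geometrically: Chase's Proposition~\ref{prop:chase_6_2}-style telescoping handles the inter-marked gaps, the unmarked coordinates $i \notin \{0\}\cup S$ contribute closed-form geometric sums that reproduce the missing $\zeta_{i-1}^{k_i - k_{i-1}-1}$ factors after reindexing, and the outer coordinate $j_0$ picks up exactly one shift factor $P(1/\phi_2\circ\phi_1(z_0))$ from \thref{lem:peres_22}. This yields an identity
\[
\E[T_S(\wt{\bfx})] \;=\; A(\vec z) \cdot P\!\left(\tfrac{1}{\phi_2\circ\phi_1(z_0)}\right) \cdot g_\bfx^{f_S}(z_0, z_1, \ldots, z_l),
\]
where $A(\vec z)$ is an explicit rational function nonvanishing on the evaluation region, and $|P(\cdot)| \geq \exp(-O(\eta)) = \exp(-O(h(n)))$ since $\eta = O(h(n))$ and $|\phi_2\circ\phi_1(z_0)|$ stays close to $1$.

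Given this identity, concentration is routine. For typical traces of length $O(n)$, the bound $|\wt{z}_0|$ close to $1$ (differing by $O(n^{-4/5}\log^6 n)$) and $|\wt{z}_i| \leq 1 - \Omega(1)$ for $i \geq 1$ give $|T_S(\wt{\bfx})| \leq \exp(\wt{O}(n^{1/5}))$, so Hoeffding over $\exp(h(n))$ iid samples approximates $\E[T_S]$ to additive error $\exp(-\Omega(h(n)))$ with failure probability $\exp(-\omega(n))$. The $\varepsilon(n) = \exp(-h(n))$ fraction of adversarial samples contributes at most $\exp(-h(n)) \cdot \exp(\wt{O}(n^{1/5})) = \exp(-\Omega(h(n)))$ additional error, and dividing by the prefactor $A \cdot P$ amplifies the error by only $\exp(O(h(n)))$, absorbed into the $\Omega$ by tuning the implicit constants of $h(n) = n^{1/5} \log^7 n$. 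Each $T_S(\wt{\bfx})$ can be evaluated in $\exp(\wt{O}(h(n)))$ time either by direct summation over $O(n^{l+1})$ tuples or by a one-pass dynamic program on the trace, and summing over the $\exp(O(n^{1/5}))$ choices of $S$ stays within the claimed runtime. The ``furthermore'' clause for $q < 1/2$ reflects that $\phi_2\circ\phi_1$ is univalent on a strictly larger region in that regime, so the inverted evaluation points $\wt{z}_i$ remain inside the open unit disk even when $z_i \in [-c_1, c_1]$.

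The main obstacle is the identity in the second paragraph. Chase's telescoping (Proposition~\ref{prop:chase_6_2}) is proved in \cite{chase2021separating} only for the pure deletion channel on a deterministic length-$n$ input, while HPPZ's shift-insertion identity (\thref{lem:peres_22}) is proved in \cite{holden2020subpolynomial} only for single-coordinate linear observables; stitching them together requires verifying (i) that insertions cancel on every marked coordinate rather than only on the outer one, (ii) that the shift polynomial $P$ enters only once, attached to $j_0$, rather than once per coordinate, (iii) that the unmarked coordinates telescope correctly without double-counting the gaps they span or disrupting those of the marked coordinates, and (iv) that all the formal generating series converge absolutely on the trace-length regime of interest --- which is what forces the specific constraints $|z_0| = 1 - n^{-4/5}\log^6 n$ and $z_i \in [1 - c_1, 1 - c_2]$ in the hypothesis.
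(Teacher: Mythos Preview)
Your overall architecture---Fourier decompose $f$, establish a trace-to-polynomial identity for each character, then Hoeffding---matches the paper, and your identity for the \emph{simple} character $f_{[l]}(\vec x)=\prod_{i=1}^l(-1)^{x_i}$ is exactly the paper's \thref{lem:complex_connection}. The gap is the passage from simple characters to characters with holes, i.e.\ $f_S$ with $S\subsetneq[l]$.

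You assert that ``the unmarked coordinates $i\notin\{0\}\cup S$ contribute closed-form geometric sums that reproduce the missing $\zeta_{i-1}^{k_i-k_{i-1}-1}$ factors.'' This is where the direct approach breaks. Fix two consecutive marked indices $i,i'\in\{0\}\cup S$ with $m:=i'-i\ge 2$, set $\tilde z_1=\cdots=\tilde z_l=\tilde z=\Psi(z)$, and sum your $T_S$ over the unmarked trace positions $j_{i+1},\dots,j_{i'-1}$. You get $\binom{j_{i'}-j_i-1}{m-1}\tilde z^{\,j_{i'}-j_i-m}$, whereas the target on the original side is $\binom{k_{i'}-k_i-1}{m-1}z^{\,k_{i'}-k_i-m}$. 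The simple-character identity gives you $\E\bigl[\tilde z^{\,j_{i'}-j_i-1}\bigr]=\overline\phi(\tilde z)\,z^{\,k_{i'}-k_i-1}$; what you need is its $(m-1)$-st $\tilde z$-derivative. But differentiating the right-hand side in $\tilde z$ invokes the chain rule through $z=\phi(\tilde z)$ and through the prefactor $\overline\phi$, producing a term $\overline\phi\cdot(k_{i'}-k_i-1)z^{\,k_{i'}-k_i-2}\phi'(\tilde z)$ \emph{plus} a spurious $\overline\phi{}^{\,\prime}\!\cdot z^{\,k_{i'}-k_i-1}\phi'(\tilde z)$ term, and higher derivatives proliferate further. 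So $\E[T_S]$ is not of the form $A(\vec z)\cdot P(\cdot)\cdot g_\bfx^{f_S}(\vec z)$ for any $\bfx$-independent $A$. The paper flags exactly this (``$f$ can now have `holes' \ldots this changes the polynomial $g_\bfx^f$'') and resolves it differently: it observes that for $z_1=\cdots=z_l$, $g_\bfx^{\chi_\omega}$ is a high-order \emph{$z$-derivative} (not $\tilde z$-derivative) of the shorter simple-character polynomial (equation \eqref{eq:differentiation}), then evaluates the simple-character $g$ at a grid of nearby $z$-points and recovers the derivative by Lagrange interpolation (\thref{lem:interpolation}), controlling the accuracy loss via \thref{lem:rhs_truncation}. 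That numerical-differentiation step is the missing idea in your proposal; your item~(iii) in the last paragraph correctly names the worry but the resolution you sketch does not go through.
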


We separate the proof of \thref{thm:can_compute_g} into three parts.
In the first portion of the proof, we will show that the statement holds for the specific case where $f(x_1, \ldots, x_l) = \prod_i (-1)^{x_i}$ is a "simple character", even if the equality $z_1 = \cdots = z_l$ does not hold.

Then we will show that \thref{thm:can_compute_g} holds for any character $f(x_1, \ldots, x_l) = \prod_i \omega_i ^ {x_i} = \chi_{\omega}$ (for any $\omega \in \{-1, 1\}^l$).
This step does not trivially follow from the first one, because $f(x_1, \ldots, x_l)$ can now have "holes" - variables that do not affect its outcome, and this changes the polynomial $g_\bfx ^ f$.
We will show that when $z_1 = \cdots = z_l$, the $g_\bfx$ of a general character is a high-order differential of $g_\bfx$ of a smaller simple character, and that this numerical differentiation does not reduce the accuracy too much.

Finally, because the transformation from $f$ to the polynomial $g_\bfx ^f$ is linear, we can take any $f:\{0, 1\}^l \rightarrow \D$ and use its Fourier transformation over $\mathbb{F}_2 ^ l$ to show that:
\begin{equation}
\label{eq:fourier_separation}
    g_\bfx ^f(\zeta_0, \zeta_1, \ldots, \zeta_l) = g_\bfx ^{\sum_{\omega \in \{-1, 1\}^l} \hat{f}(\omega)\chi_\omega}(\zeta_0, \zeta_1, \ldots, \zeta_l) = \sum_{\omega \in \{-1, 1\}^l} \hat{f}(\omega) g_\bfx ^{\chi_\omega}(\zeta_0, \zeta_1, \ldots, \zeta_l).
\end{equation}

Combining equation \eqref{eq:fourier_separation} with the second step will yield \thref{thm:can_compute_g}.

\vs

The first step in our analysis will be the following lemma:

\begin{lemma} 
\thlabel{lem:complex_connection}
Let $S$ be a random variable, such that $\textup{supp}(S) \subseteq \{0, 1, \ldots, d\}$ for some finite $d$. 
Let $\bfa = \left(a_0, a_1, \ldots\right) \in \left\{0, 1\right\} ^ \N$, and let $\widetilde{\bfa}$ be the output from the insertion-deletion channel with deletion probability $q$ and insertion probability $q^\prime$ applied to the randomly shifted string $\theta^s (\bfa)$. 
Let $\phi_1 (z) = pz + q$, $\phi_2 (z) = \frac{p^\prime z}{1 - q^\prime z}$, $\phi \defeq \phi_2 \circ \phi_1$, $\Psi = \phi^{-1}$, $\overline{\phi}(z) = \frac{p p^\prime}{1 - q^\prime \phi_1(z)} = \frac{p\phi(z)}{\phi_1 (z)}$ and $\sigma(s) = \Pr[S = s]$ for $s \in \N$. 
Define
\[ P(z) \defeq \sum_{s=0} ^ {d} \sigma(s) z^s \]
Then:
\begin{equation}
    \begin{aligned}
        &\mathbb{E}\left[ \sum_{r_0 < \cdots < r_{l}} (-1)^{\widetilde{a}_{r_0}} \Psi(z_0)^{r_0 - 1} \left(\prod_{i=1}^{l} (-1) ^ {\widetilde{a}_{r_i}} \Psi(z_i) ^ {r_i - r_{i-1} - 1}\right) \right] \\
        &\;\;\;\;\;= \left(\prod_{0 \leq i \leq l} \overline{\phi}(\Psi(z_i))\right) P\left( \frac{1}{z_0} \right) \cdot \sum_{k_0 < k_1 < \cdots < k_l} (-1)^{a_{k_0}} z_0 ^ {k_0 - 1} \left(\prod_{i=1} ^ {l} (-1)^{a_{k_i}} z_i ^ {k_i - k_{i-1} - 1}\right)
    \end{aligned}
\end{equation}

\end{lemma}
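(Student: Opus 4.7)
The plan is to isolate the tuples that survive the expectation, factorize their contribution across channel intervals, and extract the shift distribution; this generalizes \thref{lem:peres_22} (the $l=0$ case) to multiple anchors in the spirit of Proposition \ref{prop:chase_6_2}.

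\emph{Stage 1 (localization).} The channel's inserted bits are uniform on $\{0,1\}$ and are jointly independent of each other and of the rest of the channel, so if trace position $r$ is an inserted bit then $\E[(-1)^{\widetilde{a}_r}] = 0$, whereas if $r$ comes from position $k$ of the shifted input $\bfb = \theta^s(\bfa)$ then $(-1)^{\widetilde{a}_r} = (-1)^{b_k} = (-1)^{a_{s+k}}$. Hence any tuple $(r_0,\dots,r_l)$ containing even a single insertion contributes $0$ in expectation, and the LHS collapses to a sum over tuples where every $r_i$ is a surviving real bit of $\bfb$, re-indexed by its anchors $k_0 < \cdots < k_l$ in $\bfb$.

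\emph{Stage 2 (per-interval factorization).} Fix the shift and the anchors. By memorylessness, the expected value of $\Psi(z_0)^{r_0-1}\prod_{i\ge 1}\Psi(z_i)^{r_i-r_{i-1}-1}$ splits as a product over the $l+1$ disjoint channel intervals (the prefix before anchor $k_0$ and the $l$ gaps between consecutive anchors). On the interval containing $k_i - k_{i-1}$ insertion slots and $k_i - k_{i-1} - 1$ non-anchor bits, the moment generating function at $w=\Psi(z_i)$ factors as
\[
  \bigl(\E[w^{\widetilde G}]\bigr)^{k_i-k_{i-1}} \bigl(\E[w^D]\bigr)^{k_i-k_{i-1}-1} = \bigl(\overline{\phi}(w)/p\bigr)^{k_i-k_{i-1}}\,\phi_1(w)^{k_i-k_{i-1}-1},
\]
where $\E[w^{\widetilde G}]=p'/(1-q'\phi_1(w))=\overline{\phi}(w)/p$ uses that inserted bits are themselves subject to deletion. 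The algebraic identity $(\overline{\phi}/p)\cdot\phi_1 = \phi_2\circ\phi_1 = \phi$, combined with $\phi(\Psi(z_i))=z_i$, telescopes this to $(\overline{\phi}(\Psi(z_i))/p)\,z_i^{k_i-k_{i-1}-1}$; an identical analysis of the prefix yields $(\overline{\phi}(\Psi(z_0))/p)\,z_0^{k_0-1}$. Multiplying by the probability $p^{l+1}$ that all anchors survive distributes a factor of $p$ to each interval, producing
\[
  \prod_{i=0}^{l}\overline{\phi}(\Psi(z_i))\,\cdot\,z_0^{k_0-1}\prod_{i=1}^{l} z_i^{k_i-k_{i-1}-1}.
\]

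\emph{Stage 3 (extract the shift).} Summing over anchors with signs $(-1)^{b_{k_i}}=(-1)^{a_{s+k_i}}$, substituting $m_i := s+k_i$ to express indices relative to $\bfa$, and then summing over $s$ against $\sigma$ pulls out a factor $\sum_s \sigma(s)\, z_0^{-s} = P(1/z_0)$ multiplying the target polynomial in $\bfa$. The main obstacle is the per-interval telescoping in Stage 2, where Chase's multivariable structure meshes with the insertion-deletion generating function via the identity $(\overline{\phi}/p)\phi_1 = \phi$. A secondary technical point is the boundary constraint $m_0 \ge s+1$ that appears after the substitution in Stage 3; this is handled by the bounded support of $\sigma$ combined with the implicit convention in the shifted reconstruction setup that the interesting content of $\bfa$ lies past the maximal shift, so that boundary contributions either vanish or are absorbed into the $\exp(-\Omega(h(n)))$ error already tolerated by \thref{thm:can_compute_g}.
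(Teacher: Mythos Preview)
Your approach is essentially the paper's: both kill inserted positions via the zero mean of $(-1)^{\text{uniform bit}}$, factor the surviving contribution across channel intervals using generating functions (the paper packages your per-slot MGF product as $P_{\alpha_k}(\zeta)=\zeta\,\overline{\phi}(\zeta)\,\phi(\zeta)^{k-1}$, which is exactly your telescoping identity $(\overline{\phi}/p)\cdot\phi_1=\phi$), and then extract the shift as $P(1/z_0)$.

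One caveat on Stage~3: resolving the boundary constraint $m_0\ge s+1$ by invoking the error budget of \thref{thm:can_compute_g} is circular, since that theorem is a downstream consequence of the present lemma. The paper's proof glosses over the very same point: its formula $P_{\alpha'_k-1}(\Psi(z))=\overline{\phi}(\Psi(z))\,P(1/z)\,z^{k-1}$ is literally valid only for $k>d$, while for $k\le d$ the full $P$ must be replaced by the truncation $\sum_{s<k}\sigma(s)z^{-s}$. So both arguments deliver the stated identity exactly once the first anchor $k_0$ exceeds the maximal shift $d$, which is all that is needed downstream; neither establishes it as a literal equality over all $k_0\ge 1$. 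A cleaner fix than your appeal to \thref{thm:can_compute_g} is simply to state this range restriction explicitly.
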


This lemma is similar to a combination of \thref{lem:peres_22} and \thref{prop:chase_6_2}, but avoids the drift caused by the insertions by using a function that has $0$ mean when any of the bits is a random insertion.
\subsection{Proof of \texorpdfstring{\thref{lem:complex_connection}}{Lemma \ref{lem:complex_connection}}}

We consider both the shifted channel and the original one (i.e. with $S = 0$).
For the unshifted channel, let $A_{k, r}$ denote the event that the first $r$ bits of the trace were produced by the first $k$ bits of the message and that the $k$th bit of the message was not deleted (i.e. that after the shift, all the insertions before $\bfx_k$ and all the deletions, $r$ bits were left).
Similarly, let $A^\prime_{k, r}$ denote the same event for the shifted channel (when $S$ is as in the lemma).
We define:
\begin{equation}
    \begin{aligned}
        &\alpha_{k, r}       = \Pr[A_{k, r}]\\
        &\alpha\prime_{k, r} = \Pr[A\prime_{k, r}]
    \end{aligned}
\end{equation}

Since each of the bits of the message are translated to a geometric number of bits and then each is deleted or not independetly of the rest, we can use basic results on generating functions to produce a simple formula for $\alpha_{k,r}$ and $\alpha^\prime_{k,r}$:

\begin{equation}
\label{eq:generating_functions}
    \begin{aligned}
        \forall k\;\;\;\; &P_{\alpha_k}(\zeta) \defeq \sum_r \zeta^r \alpha_{k,r} = p \left(P_{\textup{Geometric}(q^\prime)}\left(P_{\textup{Bernoulli}(q)}(\zeta)\right)\right)^{k-1} P_{\textup{Geometric}(q^\prime) - 1}\left(P_{\textup{Bernoulli}(q)}(\zeta)\right) \zeta \\
        &\;\;\;\;\;\;\;\;\;\;\;\;\;\;\;\;\;\;\;\;  = \zeta \overline{\phi}(\zeta) \phi(\zeta) ^ {k-1}\\
        &P_{\alpha^\prime_k}(\zeta) \defeq \sum_r \zeta^r \alpha^\prime_{k,r} = \sum_{s} \Pr\left[S = s\right] \sum_r \zeta^r \alpha^\prime_{k-s,r} = \sum_s \sigma(s) \phi(\zeta) ^ {k-s -1} \zeta \overline{\phi}(\zeta) \\
        &\;\;\;\;\;\;\;\;\;\;\;\;\;\;\;\;\;\;\;\;  = \zeta P\left(\frac{1}{\phi(z)}\right) \overline{\phi}(\zeta) \phi(\zeta) ^ {k-1}
    \end{aligned}
\end{equation}

Setting $z = \phi(\zeta)$, and defining
\begin{equation*}
    \begin{aligned}
        &P_{\alpha_k-1}(\zeta) = \zeta^{-1} P_{\alpha_k}(\zeta)\\
        &P_{\alpha^\prime_k-1}(\zeta) = \zeta^{-1} P_{\alpha^\prime_k}(\zeta)
    \end{aligned}
\end{equation*}
gives us the formulas:

\begin{equation}
\label{eq:generating_functions2}
    \begin{aligned}
        &P_{\alpha_k - 1}(\Psi(z)) = \overline{\phi}(\Psi(z)) z ^ k\\
        &P_{\alpha^\prime_k - 1}(\Psi(z)) = \overline{\phi}(\Psi(z)) P(1/z) z ^ k
    \end{aligned}
\end{equation}

We denote by $B_r = \overline{\bigvee_{k} A_{k, r} }$ the event that the $r$th bit of the output was an insertion.
By our definition of the channel, conditioning on $B_r$, the $r$th bit of the $\widetilde{a}$ is a Bernoulli$\left(\frac{1}{2}\right)$ random variable independent of the rest of the problem.
Therefore, we have
\begin{equation}
    \label{eq:B_r}
    \mathbb{E}_{\textup{conditioned on }B_{r_i}} \left[\prod_{1 \leq j \leq l} (-1) ^ {\wt{a}_{r_j}}\right] = 0
\end{equation}

Let $k_0 < k_1 < \cdots < k_l$ be some indices in the input message and let $r_0 < r_1 < \cdots < r_l$ be some indices in the output message.
We consider the events
\begin{equation}
    \begin{aligned}
        &A_{\Vec{k}, \Vec{r}} \defeq \bigwedge_i A_{k_i, r_i}\\
        &A^\prime_{\Vec{k}, \Vec{r}} \defeq \bigwedge_i A^\prime_{k_i, r_i}
    \end{aligned}
\end{equation}

It is clear from our definition of the channel, that for all sequences of indices $r_0 < r_1 < \cdots < r_l$ and non-monotone sequences $k_0, k_1, \ldots, k_l$, the event $ A^\prime_{\Vec{k}, \Vec{r}}$ has probability $0$.
Furthermore, from the independence of the channel it is easy to see that:
\begin{equation}
    \label{eq:A_independence}
    \Pr\left[ A^\prime_{\Vec{k}, \Vec{r}} \right] = \alpha^\prime_{k_0, r_0} \prod_{1 \leq i \leq l} \alpha_{k_i - k_{i-1}, r_i - r_{i-1}} 
\end{equation}

We combine equations \eqref{eq:generating_functions2}, \eqref{eq:B_r} and \eqref{eq:A_independence} to get the desired result:

\begin{equation}
    \begin{aligned}
        &\mathbb{E}\left[ \sum_{r_0 < \cdots < r_{l}} (-1)^{\widetilde{a}_{r_0}} \Psi(z_0)^{r_0 - 1} \left(\prod_{i=1}^{l} (-1) ^ {\widetilde{a}_{r_i}} \Psi(z_i) ^ {r_i - r_{i-1} - 1}\right) \right]\\
        &\;\;\;\;\;= \sum_{r_0 < \cdots < r_{l}} \sum_{k_0 < \cdots < k_l} \Pr\left[A^\prime_{\vec{k}, \vec{r}}\right] (-1)^{a_{k_0}} \Psi(z_0)^{r_0 - 1} \left(\prod_{i=1}^{l} (-1) ^ {a_{k_i}} \Psi(z_i) ^ {r_i - r_{i-1} - 1}\right) \\
        &\;\;\;\;\;= \sum_{k_0 < \cdots < k_l} \sum_{r_0 < \cdots < r_{l}} \prod_{1 \leq i \leq l} \alpha_{k_i - k_{i-1}, r_i - r_{i-1}} (-1)^{a_{k_0}} \Psi(z_0)^{r_0 - 1} \left(\prod_{i=1}^{l} (-1) ^ {a_{k_i}} \Psi(z_i) ^ {r_i - r_{i-1} - 1}\right)\\
        &\;\;\;\;\;= \sum_{k_0 < \cdots < k_l} \prod_{1 \leq i \leq l} (-1) ^ {a_{k_i}} P_{\alpha_{k_i - k_{i-1}} - 1}\left(\Psi(z_i)\right) (-1)^{a_{k_0}} P_{\alpha^\prime_{k_0} - 1}\left(\Psi(z_0)\right)\\
        &\;\;\;\;\;= \prod_{0 \leq i \leq l} \overline{\phi}(\Psi(z)) P\left( \frac{1}{z_0} \right) \cdot \sum_{k_0 < k_1 < \cdots < k_l} (-1)^{a_{k_0}} z_0 ^ {k_0 - 1} \left(\prod_{i=1} ^ {l} (-1)^{a_{k_i}} z_i ^ {k_i - k_{i-1} - 1}\right)
    \end{aligned}
\end{equation}

\subsection{Proof of \texorpdfstring{\thref{thm:can_compute_g}}{Theorem \ref{thm:can_compute_g}} for Simple Characters}

\label{subsec:simple_characters}

In order to complete the proof of \thref{thm:can_compute_g} for the case when $f(x_1, \ldots, x_l) = \prod_i (-1) ^{x_i}$, we will use a property of the M\"{o}bius transformation $\Psi$ defined in \thref{lem:complex_connection}

\begin{lemma}
\thlabel{lem:mobius}
There are constants $c_1, c_2 \in (0, 1/20)$, depending only on $q, q^\prime$, such that for any sufficiently large $L$, if $\abs{\arg(z)} \leq c_1 /L$ and $\rho = 1 - 1/L^2$, then for $w = \Psi (\rho z)$,
\[
\abs{w} \leq 1 - \frac{c_2}{L^2}
\]

Furthermore, for $q < 1/2$, we have:
\[
\Psi(0) = \frac{q}{p} < 1
\]
\end{lemma}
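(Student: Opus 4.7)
The plan is to exploit the fact that $\phi = \phi_2 \circ \phi_1$ is a M\"obius transformation fixing $w = 1$, so its inverse $\Psi$ admits a Taylor expansion near $1$ whose leading behavior controls how $\Psi$ maps $\rho z$ back toward the interior of the disc. First I would verify $\phi(1) = 1$ by computing $\phi_1(1) = p + q = 1$ and $\phi_2(1) = p'/(1 - q') = 1$. Then by the chain rule $\phi'(1) = \phi_2'(1)\phi_1'(1) = (1/p')\cdot p = p/p'$, so $\Psi'(1) = p'/p$, and since $\Psi$ is holomorphic in a fixed neighborhood of $1$ (its only pole lies at a fixed distance depending only on $q,q'$) there is a constant $C_0 = C_0(q,q')$ with
\[
\Psi(w) = 1 + \tfrac{p'}{p}(w-1) + R(w), \qquad |R(w)| \leq C_0 |w-1|^2
\]
for all $w$ in that neighborhood.

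Next I would plug in $w = \rho z$ with $|z| = 1$ (the implicit setup of the lemma, where $\rho$ supplies the modulus and $z$ the phase), $\theta := \arg(z)$ satisfying $|\theta| \leq c_1/L$, and $\rho = 1 - 1/L^2$. A direct calculation gives
\[
|w-1|^2 = (1-\rho)^2 + 2\rho(1-\cos\theta) \leq L^{-4} + \theta^2 \leq 2c_1^2/L^2
\]
for large $L$, while $\operatorname{Re}(w-1) = \rho\cos\theta - 1 \leq \rho - 1 = -1/L^2$ since $\cos\theta \leq 1$. Substituting into $|1+\varepsilon|^2 = 1 + 2\operatorname{Re}(\varepsilon) + |\varepsilon|^2$ with $\varepsilon := (p'/p)(w-1) + R(w)$ and collecting terms yields
\[
|\Psi(\rho z)|^2 \leq 1 - \frac{2p'}{pL^2} + \frac{C_1 c_1^2}{L^2}
\]
for some $C_1 = C_1(q,q')$ absorbing both the $|\varepsilon|^2$ term and the contribution of $R$.

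The main obstacle is picking $c_1$ small enough that the quadratic-in-$c_1$ error cannot drown out the linear-in-$1/L^2$ gain. Since $C_1$ depends only on $q,q'$, I would set $c_1 := \min\{1/20,\ \sqrt{p'/(pC_1)}\}$, which forces $C_1 c_1^2 \leq p'/p$ and therefore $|\Psi(\rho z)|^2 \leq 1 - p'/(pL^2)$. Applying $\sqrt{1-t} \leq 1 - t/2$ then gives $|\Psi(\rho z)| \leq 1 - c_2/L^2$ with $c_2 := \min\{1/20,\ p'/(3p)\}$ for all sufficiently large $L$, which is the desired estimate.

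For the ``furthermore'' claim I solve $\phi(w) = 0$ directly: since $\phi_2(u) = 0$ iff $u = 0$ and $\phi_1(w) = 0$ iff $w = -q/p$, we have $\Psi(0) = -q/p$, and under $q < 1/2$ we get $q < p$ and hence $|\Psi(0)| = q/p < 1$.
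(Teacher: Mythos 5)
Your proof follows essentially the same route as the paper: linearize $\Psi$ at the fixed point $1$ and estimate $\lvert\Psi(\rho e^{i\theta})\rvert^2 = \lvert 1 + \varepsilon\rvert^2 = 1 + 2\Real(\varepsilon) + \lvert\varepsilon\rvert^2$, with the smallness of $c_1$ chosen so the linear gain $\Real(w-1)\leq -1/L^2$ dominates the quadratic error. You are somewhat more explicit than the paper, computing $\Psi'(1) = p'/p$ exactly (the paper asserts only that the linear coefficient $a$ exceeds $1$, which actually fails when $q < q'$, but positivity of $a$ is all the argument needs), and you correctly obtain $\Psi(0) = -q/p$ rather than $q/p$ as written in the lemma; the sign is a harmless slip since the surrounding remarks only use $\lvert\Psi(0)\rvert = q/p < 1$.
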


\begin{remark}
For $n$ sufficiently large, \thref{lem:mobius} implies that for all $z \in \{(1 - n^{4/5} \log^6 n) e^{i\theta} \mid \abs{\theta} \leq n^{2/5}\}$, it holds that $\lvert \Psi(z) \rvert \leq 1 - c_2 n^{-4/5} \log ^ 6 n$ for some constant $c_2 > 0$.
\end{remark}
\begin{remark}
For $\varepsilon > 0 $ sufficiently small (but depending only on $q, q^\prime$), \thref{lem:mobius} implies that for any $z \in [1 - 2 \varepsilon, 1 - \varepsilon]$, it holds that $\lvert \Psi(z) \rvert \leq 1 - c_2 \varepsilon$ for some constant $c_2 > 0$.

Additionally, for $q < 1/2$, because $\Psi$ is continuous at $0$, for a sufficiently small $\varepsilon > 0$ depending only on $q, q^\prime$, for any $z\in\D$ such that $\abs{z} \leq \varepsilon$
\[
\abs{\Psi(z)} \leq 1 - \varepsilon.
\]
\end{remark}

\begin{proof}[Proof of \thref{lem:mobius}]
    Observe that $\phi$ is a M\"{o}bius transformation mapping $\mathbb{D}$ to a smaller disk which is contained in $\overline{\mathbb{D}}$ which is tangent to $\partial \mathbb{D}$ at $1$ and which maps $\mathbb{R}$ to $\mathbb{R}$.
    In particular, by linearising the map $\Psi$ at $z = 1$, that $\Psi(1 + \varepsilon) = 1 + a \varepsilon + O(\lvert \varepsilon\rvert^2)$ for $a > 1$ depending only on $q, q^\prime$.
    Writing $z = e^{i\theta}$, we have:
    
    \begin{equation}
        \begin{aligned}
            &w = \Psi(\rho e^{i\theta}) \\
            &\;\;\;\;\;\; = 1 + a (\rho e^{i\theta} - 1) + O\left(\abs{\rho e^{i\theta} - 1}^2\right)\\
            &\;\;\;\;\;\; = 1 + a \left((1 - L^{-2}) (1 + i\theta) - 1\right) + O\left(\theta^2 + L^{-4}\right)\\
            &\;\;\;\;\;\; = 1 - aL^{-2} + ia(1 - L^{-2})\theta  + O\left(\theta^2 + L^{-4}\right)
        \end{aligned}
    \end{equation}
    
    Therefore, we have have $\abs{w}^2 = (1 - aL^{-2})^2 \pm O(\theta^2 + L^{-4}) \leq (1 - c_2 L^{-2})^2$ for sufficiently small $c_1, c_2$.
    
    The last part of the claim is easy to verify.
\end{proof}

\begin{proof} [Proof of \thref{thm:can_compute_g} for Simple Characters]
We now consider a simplification of the formula from \thref{lem:complex_connection}:
    \begin{equation}
    \label{eq:approximating_g_simple}
        \begin{aligned}
            &\mathbb{E}\left[ \sum_{r_0 < \cdots < r_{l}} (-1)^{\widetilde{a}_{r_0}} \Psi(z_0)^{r_0 - 1} \left(\prod_{i=1}^{l} (-1) ^ {\widetilde{a}_{r_i}} \Psi(z_i) ^ {r_i - r_{i-1} - 1}\right) \right] \left(\prod_{0 \leq i \leq l} \overline{\phi}(\Psi(z_i))\right) ^ {-1} P\left( \frac{1}{z_0} \right) ^ {-1} \\
            &\;\;\;\;\;= \cdot \sum_{k_0 < k_1 < \cdots < k_l} (-1)^{a_{k_0}} z_0 ^ {k_0 - 1} \left(\prod_{i=1} ^ {l} (-1)^{a_{k_i}} z_i ^ {k_i - k_{i-1} - 1}\right)
        \end{aligned}
    \end{equation}
    
    Note that the right-hand-side of equation \eqref{eq:approximating_g_simple} is the value that we want to compute and the left-hand-side depends only on the traces.
    By bounding the coefficients of the left-hand-side, we can show that $\exp(h(n))$ samples suffice to approximate the right-hand-side to within $\pm \exp(-\Omega(h(n)))$.
    
    We begin with $P\left( \frac{1}{z_0} \right)$.
    By our assumption of the small shift inaccuracy $\eta = O(h(n))$, we have $\supp(S) \subseteq[a, a+\eta] = [a, a+O(h(n))]$ for some $a \leq n$.
    Therefore
    \[
    P(\zeta) = \zeta ^ a \wt{P}(\zeta)
    \]
    for some polynomial $\wt{P}$ of degree $\leq \eta$.
    
    In the setting of \thref{thm:can_compute_g}, we have $\abs{\frac{1}{z_0} - 1} = O(n^{-2/5})$, which implies
    \[
    \abs{\wt{P}\left(\frac{1}{z_0}\right) - 1} = \abs{\wt{P}\left(\frac{1}{z_0}\right) - \wt{P}(1)} \leq \abs{\left(\frac{1}{z_0}\right) ^ \eta - 1} = O(n^{-1/5})
    \]
    Inserting this into the triangle inequality, for sufficiently large $n$, we get:
    \[
    \wt{P}\left(\frac{1}{z_0}\right) \geq 1 - \abs{\wt{P}\left(\frac{1}{z_0}\right) - 1} = 1 -  O(n^{-1/5}) \geq \frac{1}{2}
    \]
    
    \vs
    Next, we note that from their definitions, it is clear that $\overline{\phi}(\psi(z)) = p > 0$ for $z = 1$ and that it is a continuous function near $1$.
    This implies that for sufficiently small $c_1 > 0$ and for any $z \in [1-c_1, 1]$, we have $\overline{\phi}(\psi(z)) > p/2 > 0$.
    Inserting this into the appropriate term in equation \eqref{eq:approximating_g_simple}, we have:
    \[
    \abs{\left(\prod_{0 \leq i \leq l} \overline{\phi}(\Psi(z_i))\right) ^ {-1}} \leq \left( \frac{p}{2}\right)^{l+1} = \exp(O(n^{1/5}))
    \]
    
    Finally, in \thref{lem:truncating_lhs} we will show that truncating the polynomial on the left-hand-side does has a negligible effect on its value.
    This allows us to truncate the left-hand-side of equation \eqref{eq:approximating_g_simple} to at most $n ^ {O(l)} = \exp(O(n^{1/5} \log n))$ terms each with a coefficient of absolute value $\leq 1$.
    
    Therefore, any false sample will shift the average by at most $\exp(-\Omega(h(n)))$ and the probability that we will have more than $n^{2}$ false samples is $\exp(-O(n^2)) = \exp(-\omega(n))$.
    Furthermore, in the absence of false samples, the distribution of the left-hand-side is bounded by $\exp(o(h(n)))$, so from a simple application of Chernoff's bound, averaging over $\exp(h(n))$ samples will suffice to give us its value to within $\exp(-\Omega(h(n)))$ with probability $1 - \exp(-\exp(\Omega(h(n)))) \geq 1 - \exp(-\omega(n))$.
\end{proof}

\begin{lemma}
\thlabel{lem:truncating_lhs}
Define the polynomial $r$ to be
\[
r(\zeta_0, \zeta_1, \ldots, \zeta_l) = \mathbb{E}\left[ \sum_{r_0 < \cdots < r_{l}} (-1)^{\widetilde{a}_{r_0}} \zeta_0^{r_0 - 1} \left(\prod_{i=1}^{l} (-1) ^ {\widetilde{a}_{r_i}} \zeta_i ^ {r_i - r_{i-1} - 1}\right) \right]
\]
and let $\wt{r}$ be its truncation to degree at most $n \log n$ on the first coordinate and degree at most $n$ on the rest
Define the polynomial $r$ to be
\[
\wt{r}(\zeta_0, \zeta_1, \ldots, \zeta_l) = \mathbb{E}\left[ \sum_{\substack{r_0 < \cdots < r_{l} \\ r_0 - 1 \leq n \log n \\ r_i - r_{i-1} - 1 \leq n}} (-1)^{\widetilde{a}_{r_0}} \zeta_0^{r_0 - 1} \left(\prod_{i=1}^{l} (-1) ^ {\widetilde{a}_{r_i}} \zeta_i ^ {r_i - r_{i-1} - 1}\right) \right]
\]

Then for sufficiently small $c_1 > c_2 > 0$ and for any $z_0 \in A$ and $z_1, \ldots, z_l \in [1 - c_1, 1 - c_2]$ (or $q < 1/2$ and $z_1, \ldots, z_l \in [-c_1, c_1]$), we have
\[
\abs{\wt{r}(\Psi(z_0), \ldots, \Psi(z_l)) -  r(\Psi(z_0), \ldots, \Psi(z_l))} \leq \exp(-\Omega(h(n)))
\]
\end{lemma}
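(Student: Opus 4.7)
The plan is to bound $r - \widetilde{r}$ termwise. Since $|(-1)^{\widetilde a}| = 1$, applying the triangle inequality followed by linearity of expectation yields
\[
\abs{(r - \wt r)(\Psi(z_0), \ldots, \Psi(z_l))} \le \sum_{(r_0, \ldots, r_l) \in T} \abs{\Psi(z_0)}^{r_0 - 1} \prod_{i=1}^{l} \abs{\Psi(z_i)}^{r_i - r_{i-1} - 1},
\]
where $T$ is the set of strictly increasing sequences with either $r_0 - 1 > n \log n$ or $r_i - r_{i-1} - 1 > n$ for some $i \ge 1$. Therefore it suffices to bound this sum. First, I would invoke the two remarks following \thref{lem:mobius} to fix constants $c_2 > 0$ such that $|\Psi(z_0)| \le 1 - c_2 n^{-4/5} \log^6 n$ (from $z_0 = (1 - n^{-4/5} \log^6 n) e^{i\theta}$ with $|\theta| \le n^{2/5}$), while $|\Psi(z_i)| \le 1 - c_2$ for $i \ge 1$ in either of the two permitted ranges.

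Next I would split $T$ by which coordinate violates the truncation threshold and bound each piece by a geometric tail. For the piece where $r_0 - 1 > n \log n$, summing geometric series in $r_0, r_1 - r_0, \ldots, r_l - r_{l-1}$ gives
\[
\frac{\abs{\Psi(z_0)}^{n \log n + 1}}{1 - \abs{\Psi(z_0)}} \prod_{i=1}^{l} \frac{1}{1 - \abs{\Psi(z_i)}} \;\le\; \frac{(1 - c_2 n^{-4/5} \log^6 n)^{n \log n}}{c_2 n^{-4/5} \log^6 n} \cdot c_2^{-l}.
\]
Using $(1 - x)^m \le \exp(-xm)$, the numerator is at most $\exp(-c_2 n^{1/5} \log^7 n) = \exp(-c_2 h(n))$, while the remaining factors are $\poly(n) \cdot \exp(O(l)) = \exp(O(n^{1/5}))$, so $h(n)$ absorbs them and the whole piece is $\exp(-\Omega(h(n)))$. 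For each piece where $r_i - r_{i-1} - 1 > n$ with $i \ge 1$, the same computation yields a bound of the form $(1 - c_2)^{n+1} \cdot \poly(n) \cdot \exp(O(l)) = \exp(-\Omega(n))$, which is much smaller than $\exp(-\Omega(h(n)))$. A union bound over the $\le l+1 = O(n^{1/5})$ possible violating indices costs only an $\exp(O(n^{1/5}))$ factor, which is again absorbed by $h(n) = n^{1/5}\log^7 n$.

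The only subtle point, and the likeliest place to slow down, is checking that the combinatorial blow-up from having $l+1$ free indices is controlled: the product $\prod_{i=0}^{l}(1 - |\Psi(z_i)|)^{-1}$ can be as large as $\exp(O(l))$ times a polynomial in $n$, and we need $h(n)$ to strictly dominate this. Because $h(n) = n^{1/5} \log^7 n$ while $l \le 2n^{1/5} + 1$, we have $l \log(1/c_2) = O(n^{1/5}) = o(h(n))$, so the domination holds and the combined tail estimate is $\exp(-\Omega(h(n)))$, as required. The case $q < 1/2$ with $z_1, \ldots, z_l \in [-c_1, c_1]$ is handled identically, using the second part of the remark to obtain the same $|\Psi(z_i)| \le 1 - \varepsilon$ bound for $i \ge 1$.
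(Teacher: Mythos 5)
Your proposal is correct and takes essentially the same approach as the paper: bound the coefficients by $1$ via the triangle inequality, use the bounds $\abs{\Psi(z_0)} \le 1 - cn^{-4/5}\log^6 n$ and $\abs{\Psi(z_i)} \le 1 - c'$ from \thref{lem:mobius}, and control the $\exp(O(l))$ combinatorial factor by noting $l = O(n^{1/5}) = o(h(n))$. The paper organizes the tail sum by grouping monomials according to their total degree in $z_1,\ldots,z_l$ (giving a $d^{O(l)}$ factor), whereas you split by which coordinate violates the truncation threshold and sum geometric series (giving a $c_2^{-l}$ factor); these are merely different bookkeeping for the same estimate.
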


\begin{proof} [Proof of \thref{lem:truncating_lhs}]
    From \thref{lem:mobius} and the remarks following, it follows that for some $c, c^\prime > 0$:
    \begin{equation}
        \begin{aligned}
            &\abs{\Psi(z_0)} \leq 1 - c n^{-4/5} \log^6 n
            &\abs{\Psi(z_i)} \leq 1 - c^\prime
        \end{aligned}
    \end{equation}
    
    \thref{lem:truncating_lhs} follows almost trivially.
    Consider the set of monomials of the form $m_{j, \bfd}(\bfz) = z_0 ^ j z_1 ^{d_1} \cdots z_l ^{d_l}$ with $d_1 + \ldots + d_l = d$.
    Each of these monomials has norm $\leq \exp(-\Omega(j n^{-4/5} \log^6 n + d))$ and there are at most $d^{O(l)}$ such monomials.
    Each monomial has a coefficient of norm $\leq 1$, so their total contribution is at most:
    \[
    \sum_{j > n \log n \vee d > n} \exp(-\Omega(j n^{-4/5} \log^6 n + d)) d^{O(l)} \leq \exp(-\Omega(n^{1/5} \log ^7 n))
    \]
\end{proof}

\subsection{Proof of \texorpdfstring{\thref{thm:can_compute_g}}{Theorem \ref{thm:can_compute_g}} for General Characters}
\label{subsec:general_characters}

Fix some $\omega \in \{\pm 1\}^l$ and let $z_0, z_1, \ldots, z_l$ be a point such that $z_1 = \cdots = z_l$.
Let $j_0 = 0$ and let $j_1 < \cdots < j_{l^\prime}$ be the indices for which $\omega_{j_i} = -1$ (for $\omega = (1, \ldots, 1)$, we set $l^\prime = 0$).

If $j_{l^\prime} < l$, then $f(x_1, \ldots, x_l) = \chi_\omega (x_1, \ldots, x_l)$ does not depend on the last $l - j_{l^\prime}$ coordinates.
In this case, setting $\wt{l} = j_{l^\prime}$, we can write:
\begin{equation}
\label{eq:g_l_prime}
    \begin{aligned}
        g_\bfx ^{\chi_\omega}(z_0, z_1, \ldots, z_l) &= \sum_{k_0 < \cdots < k_{l}}  (-1) ^{x_{k_0}} z_0 ^ {k_0 - 1} \omega_1 ^{x_{k_1}} z_1 ^{k_1 - k_0 - 1} \cdots \omega_{l} ^{x_{k_{l}}} z_{l} ^ {k_{l} - k_{l-1} - 1}\\
        &= \left(\sum_{k_0 < \cdots < k_{\wt{l}}}  (-1) ^{x_{k_0}} z_0 ^ {k_0 - 1} \omega_1 ^{x_{k_1}} z_1 ^{k_1 - k_0 - 1} \cdots \omega_{\wt{l}} ^{x_{k_{\wt{l}}}} z_{\wt{l}} ^ {k_{\wt{l}} - k_{\wt{l}-1} - 1}\right) \\
        & \;\;\;\;\;\;\;\;\;\;\;\;\;\;\;\;\;\;\;\;\;\;\;\;\;\;\;\;\;\;\;\; \cdot \left(\sum_{k_{\wt{l}+1} < \cdots < k_{l}}   z_{\wt{l}+1} ^{k_{\wt{l} + 1} - k_{\wt{l}} - 1} \cdots z_{l} ^ {k_{l} - k_{l-1} - 1}\right)\\
        &= \left(\sum_{k_0 < \cdots < k_{\wt{l}}}  (-1) ^{x_{k_0}} z_0 ^ {k_0 - 1} \omega_1 ^{x_{k_1}} z_1 ^{k_1 - k_0 - 1} \cdots \omega_{l} ^{x_{k_{l}}} z_{l} ^ {k_{\wt{l}} - k_{\wt{l}-1} - 1}\right) \cdot \left( \frac{1}{1 - z_1}\right)^{l - \wt{l}}
    \end{aligned}
\end{equation}

Because we assumed that $z_1 = \cdots = z_{l} \in [1 - c_1, 1 - c_2]$ (or $[-c_1, c_1]$), the second factor is $\exp(\Theta(l))$, and we focus on the first one.
For simplicity, we write the rest of our proof for $\wt{l} = l$, but it can be easily generalized to any $\wt{l} \leq l$.

For each $i$, the polynomial in equation \eqref{eq:g_l_prime} sums over all the possible sequences of $k_{j_{i-1}} < k_{j_{i-1}+1} \cdots < k_{j_{i}-1} < k_{j_{i}}$, despite the fact that they have the same coefficient depending only on $k_{j_{i-1}}, k_{j_{i}}$. Simplifying this summation, we have:
\begin{equation}
\label{eq:g_of_character}
    \begin{aligned}
        g_\bfx ^{\chi_\omega}(z_0, z_1, \ldots, z_l) &= \sum_{k_0 < \cdots < k_{l}} (-1) ^{x_{k_0}} z_0 ^ {k_0 - 1} \omega_1 ^{x_{k_1}} z_1 ^{k_1 - k_0 - 1} \cdots \omega_{l} ^{x_{k_{l}}} z_{l} ^ {k_{l} - k_{l-1} - 1}\\
        &= \sum_{k_0 < k_{j_1} \cdots < k_{j_{l^\prime}}} (-1) ^{x_{k_0}} z_0 ^ {k_0 - 1} \prod_{1 \leq i \leq l^\prime} (-1)^{x_{k_{j_i}}} z_{j_i} ^ {k_{j_i} - k_{j_{i-1}} - j_i + j_{i-1}} 
        \left(\begin{matrix}
            k_{j_i} - k_{j_{i-1}} - 1 \\ j_i - j_{i-1} - 1
        \end{matrix}\right)\\
    \end{aligned}
\end{equation}

Notice that if we were to multiply each member of the product in equation \eqref{eq:g_of_character} by $(j_i - j_{i-1} - 1)!$, we would see that this polynomial is a high-order derivative of a simpler function:

\begin{equation}
\label{eq:differentiation}
    \begin{aligned}
        g_\bfx ^{\chi_\omega}(\bfz) &= \sum_{k_0 < k_{j_1} \cdots < k_{j_{l^\prime}}} (-1) ^{x_{k_0}} z_0 ^ {k_0 - 1} \prod_{1 \leq i \leq l^\prime} (-1)^{x_{k_{j_i}}} z_{j_i} ^ {k_{j_i} - k_{j_{i-1}} - j_i + j_{i-1}} 
        \left(\begin{matrix}
            k_{j_i} - k_{j_{i-1}} - 1 \\ j_i - j_{i-1} - 1
        \end{matrix}\right)\\
        &= \frac{\partial^{l - l^\prime}}{\prod_{1 \leq i \leq l^\prime} (j_i - j_{i-1} - 1)! \partial \zeta_{j_i} ^ {j_i - j_{i-1} - 1}}\\
        & \;\;\;\;\;\;\;\;\;\;\;\;\;\;\;\;\;\;\;\;\;\;\;\;\;\;\;\;\;\;\;\;\sum_{k_0 < k_{j_1} \cdots < k_{j_{l^\prime}}} (-1) ^{x_{k_0}} \zeta_0 ^ {k_0 - 1} \prod_{1 \leq i \leq l^\prime} (-1)^{x_{k_{j_i}}} \zeta_{j_i} ^ {k_{j_i} - k_{j_{i-1}} - 1} \at[\big]{\zeta=\bfz}
    \end{aligned}
\end{equation}

We now note that the function being differentiated is equal to the $g$ for a simple character of length $l^\prime$:

\begin{equation}
    \begin{aligned}
        &\sum_{k_0 < k_{j_1} \cdots < k_{j_{l^\prime}}} (-1) ^{x_{k_0}} \zeta_0 ^ {k_0 - 1} \prod_{1 \leq i \leq l^\prime} (-1)^{x_{k_{j_i}}} \zeta_{j_i} ^ {k_{j_i} - k_{j_{i-1}} - 1} \at[\big]{\zeta=\bfz}\\
        & \;\;\;\;\;\;\;\;\;\;\;\;\;\;\;\;\;\;\;\;\;\;\;\;\;\;\;\;\;\;\;\;=\sum_{k_0 < k_1 \cdots < k_{{l^\prime}}} (-1) ^{x_{k_0}} \zeta_0 ^ {k_0 - 1} \prod_{1 \leq i \leq l^\prime} (-1)^{x_{k_{i}}} \zeta_{j_i} ^ {k_{i} - k_{{i-1}} - 1} \\
        & \;\;\;\;\;\;\;\;\;\;\;\;\;\;\;\;\;\;\;\;\;\;\;\;\;\;\;\;\;\;\;\;= g_{\bfx} ^ {\prod_i (-1)^{x_i}} (\zeta_0, \zeta_{j_1}, \ldots, \zeta_{j_{l^\prime}})
    \end{aligned}
\end{equation}

In Section \ref{subsec:simple_characters}, we showed that we can compute
\[
g_{\bfx} ^ {\prod_i (-1)^{x_i}} (\zeta_0, \zeta_{j_1}, \ldots, \zeta_{j_{l^\prime}})
\]
to a high degree of accuracy in a neighborhood of $z_0, z_1, \ldots, z_{l^\prime}$.
In order to compute $g_\bfx ^{\chi_\omega}(\bfz)$, we show that we can use a simple interpolation technique to perform the differentiation shown in equation \eqref{eq:differentiation} numerically.
In particular, we prove the two following lemmas from which the proof of this claim follows immediately.

\begin{lemma}
\thlabel{lem:interpolation}
Let $c, \delta > 0$ be some real parameters and let $P$ be an oracle that computes for a given point $z_1, \ldots, z_l \in [-c, c] ^ l$ the value of some polynomial $p$ of degree at most $n$ at the given point, up to some additive error $\delta > 0$.
Let $\bfj = (j_1, \ldots, j_l)$ be some vector of integers (all smaller than $n$), define $m_{\bfj} = z_1 ^{j_1} \cdots z_l ^{j_l}$ be the $\bfj$th monomial and $j_\textup{tot} = \sum_i j_i$.

Given $\poly(n, 1/c)^{O(l+j_\textup{tot})}$ queries to $P$, we can compute the coefficient of $m_\bfj$ to within an additive error of $\poly(n, 1/c)^{O(l+j_\textup{tot})}\delta$ in time $\poly(n, 1/c)^{O(l+j_\textup{tot})}$.
\end{lemma}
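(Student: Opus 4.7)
The plan is to handle the multivariate problem by iterating over variables: view $p$ as a polynomial in $z_l$ whose coefficients are polynomials in $z_1,\ldots,z_{l-1}$ of degree at most $n$ in each remaining variable, extract the coefficient of $z_l^{j_l}$ to obtain an oracle for a polynomial in $l-1$ variables, and then proceed inductively. After $l$ rounds we are left with the scalar coefficient of $m_\bfj$.

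At each univariate step I would use Lagrange interpolation: choose $n+1$ distinct nodes $\zeta_0,\ldots,\zeta_n$ in $[-c,c]$ (for instance Chebyshev nodes rescaled to $[-c,c]$), and write
\[
    [z^j]f \;=\; \sum_{k=0}^n \alpha_k^{(j)} f(\zeta_k), \qquad \alpha_k^{(j)} := [z^j]\ell_k(z),
\]
where $\ell_k$ is the $k$th Lagrange basis polynomial. This costs $n+1$ oracle queries per level and $\poly(n)$ arithmetic to assemble. Composed across the $l$ variables, the total number of queries is $(n+1)^l = \poly(n)^{O(l)}$, which fits within the allowed $\poly(n,1/c)^{O(l+j_\textup{tot})}$ budget, and the end-to-end time is of the same order.

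The main difficulty is the error analysis. A crude bound on the Lagrange weights gives $|\alpha_k^{(j)}| \le \poly(n,1/c)^{O(n)}$, which compounded over $l$ levels would produce amplification $\poly(n,1/c)^{O(nl)}$, far worse than the lemma claims. To obtain the sharp bound $\poly(n,1/c)^{O(l+j_\textup{tot})}$, one should exploit that we need only a single row of the inverse Vandermonde rather than the whole matrix: writing $\alpha_k^{(j)}$ explicitly as a signed elementary symmetric function of the remaining nodes divided by the product of their pairwise differences, and choosing the nodes on a two-scale design (a cluster near $0$ of size comparable to $c$ together with coarsely spread points across $[-c,c]$), one can bound $|\alpha_k^{(j)}|$ by $\poly(n,1/c)^{O(j+1)}$ rather than $\poly(n,1/c)^{O(n)}$. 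Summing the exponents $j_i+1$ across the $l$ variables then yields total amplification $\poly(n,1/c)^{O(l+j_\textup{tot})}\delta$, as required. The main obstacle is establishing this sharper weight bound rigorously; once it is in hand, the composition across variables, the propagation of oracle errors through the induction, and the query/time bookkeeping are routine linear-algebra manipulations.
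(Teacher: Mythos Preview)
Your overall plan---induction on $l$, treating $p$ as a univariate polynomial in $z_l$ and using Lagrange interpolation to extract the $j_l$-th coefficient---is exactly what the paper does. You also correctly identify the crux: one needs $|\alpha_k^{(j)}| \le \poly(n,1/c)^{O(j+1)}$ rather than the naive $\poly(n,1/c)^{O(n)}$.

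The gap is that you leave precisely this bound unproven, calling it ``the main obstacle,'' and the resolution you sketch (a two-scale node design) is more elaborate than necessary. The paper shows that the most naive node set already works: take the $2n+1$ equally-spaced points $x_i = ic/n$ for $-n \le i \le n$. Writing $L_i(z) = \prod_{i'\ne i}(z - x_{i'})/(x_i - x_{i'})$ and expanding, the coefficient $\lambda_{i,j}$ of $z^j$ is a sum of at most $\binom{2n}{j}$ terms, each of absolute value at most $(n/c)^j$ times $\bigl|\prod_{i' \ne i,\,0} x_{i'}/(x_i - x_{i'})\bigr|$. With equally-spaced symmetric nodes this last product equals $(n!)^2 / ((n+|i|)!\,(n-|i|)!)$, which is at most $1$ by the elementary inequality $\binom{n+k}{k} \ge \binom{n}{k}$. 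Hence $|\lambda_{i,j}| \le (2n)^{2j+1} c^{-j}$, and summing over the $2n+1$ nodes gives amplification $\poly(n,1/c)^{O(j_l+1)}$ at this level; compounding over $l$ levels yields $\poly(n,1/c)^{O(l+j_\textup{tot})}$ as claimed.

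So the proposal is on the right track but incomplete at the decisive step, and the missing argument is simpler than you anticipate: no clever node placement is required, only the observation that for symmetric equispaced nodes the ``baseline'' product $\prod_{i'\ne i} x_{i'}/(x_i - x_{i'})$ is uniformly bounded, so the $j$-th coefficient blows up only by $\poly(n,1/c)^{O(j)}$.
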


The proof of \thref{lem:interpolation} is shown in Appendix \ref{app:interpolation}.

\begin{lemma}
\thlabel{lem:rhs_truncation}
Define the polynomial $r$ to be
\[
r(z_0, \ldots, z_l) =  \sum_{k_0 < k_1 < \cdots < k_l} (-1)^{a_{k_0}} z_0 ^ {k_0 - 1} \left(\prod_{i=1} ^ {l} (-1)^{a_{k_i}} z_i ^ {k_i - k_{i-1} - 1}\right)
\]
and let $\wt{r}$ be its truncation to degree $\leq n$ on the coordinates $z_1, \ldots, z_l$:
\[
\wt{r}(z_0, \ldots, z_l) = \sum_{\substack{k_0 < k_1 < \cdots < k_l \\ k_{i} - k_{i-1} - 1 \leq n}} (-1)^{a_{k_0}} z_0 ^ {k_0 - 1} \left(\prod_{i=1} ^ {l} (-1)^{a_{k_i}} z_i ^ {k_i - k_{i-1} - 1}\right)
\]

Then for $z_0, z_1, \ldots, z_l$ as defined above, we have
\[
\abs{\wt{r}(z_0, \ldots, z_l) - r(z_0, \ldots, z_l)} \leq \exp(-\Omega(n))
\]
\end{lemma}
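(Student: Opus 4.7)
}
The plan is to bound $|\wt r - r|$ by the triangle inequality applied to the tail terms (those tuples $k_0 < k_1 < \cdots < k_l$ in which at least one of the gaps $k_i - k_{i-1} - 1$ exceeds $n$ for some $i \geq 1$), and then show that this tail sum factorises into geometric series in the gap sizes whose total mass is exponentially small in $n$.

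Concretely, I would first change variables to gaps: set $A_0 = k_0 - 1 \geq 0$ and $A_i = k_i - k_{i-1} - 1 \geq 0$ for $1 \leq i \leq l$. Since $|(-1)^{a_{k_i}}| = 1$, the triangle inequality gives
\[
\bigl|\wt r(\bfz) - r(\bfz)\bigr| \;\leq\; \sum_{A_0 \geq 0} |z_0|^{A_0} \cdot \Biggl(\,\sum_{\substack{A_1,\dots,A_l \geq 0\\ \exists\, i\geq 1:\, A_i > n}} \prod_{i=1}^{l} |z_i|^{A_i}\Biggr).
\]

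Next, I would control each factor separately. For the $A_0$ factor, $|z_0| = 1 - n^{-4/5}\log^{6}n$, so
\[
\sum_{A_0 \geq 0} |z_0|^{A_0} = \tfrac{1}{1-|z_0|} = n^{4/5}/\log^{6} n = \poly(n),
\]
which is harmless against the exponential bound we are aiming for. For the $(A_1,\dots,A_l)$ factor, I would union-bound over which index $i \geq 1$ has $A_i > n$ and use independence of the gaps:
\[
\sum_{\exists i\geq 1:\,A_i>n} \prod_{i=1}^{l} |z_i|^{A_i} \;\leq\; \sum_{i=1}^{l} \frac{|z_i|^{n+1}}{1 - |z_i|} \prod_{j \neq i} \frac{1}{1 - |z_j|}.
\]
In the regime $z_1 = \cdots = z_l \in [1-c_1, 1-c_2]$ we have $|z_i| \leq 1-c_2$, giving the bound $l\, c_2^{-l}(1-c_2)^{n+1}$. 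Since $l \leq 2n^{1/5}+1$, we have $c_2^{-l} = \exp(O(n^{1/5}))$, so the whole expression is $\exp(-\Omega(n))$. Multiplied by the polynomial $A_0$-factor, we obtain $|\wt r - r| \leq \exp(-\Omega(n))$, as required. The case $z_1 = \cdots = z_l \in [-c_1, c_1]$ (valid when $q < 1/2$) is easier since $|z_i| \leq c_1$ is a small constant, making the geometric tail even smaller.

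The argument is essentially a decoupled geometric tail bound, so there is no real obstacle; the only point to watch is that the truncation is imposed only on the gaps $k_i - k_{i-1} - 1$ for $i \geq 1$ and not on $k_0$, which is precisely why we can tolerate the much weaker decay $|z_0| = 1 - o(1)$ on the first coordinate. The fact that $l = O(n^{1/5})$ is small compared to $n$ is what lets the factor $c_2^{-l}$ be absorbed into the $\exp(-\Omega(n))$ decay coming from $(1-c_2)^{n+1}$.
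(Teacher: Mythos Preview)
Your proposal is correct and follows essentially the same approach as the paper: bound the $z_0$ contribution by the geometric series $\sum_{A_0}|z_0|^{A_0}=(1-|z_0|)^{-1}=\poly(n)$, and use that $|z_i|\leq 1-c_2$ for $i\geq 1$ gives exponential decay in the gap sizes, which dominates the factor $c_2^{-l}=\exp(O(n^{1/5}))$. The only cosmetic difference is that the paper groups the tail terms by their \emph{total} degree $d=\sum_{i\geq 1} d_i$ (noting that some $d_i>n$ forces $d>n$) and counts the $d^{O(l)}$ monomials of each total degree, whereas you factor the sum over gaps directly and union-bound over which coordinate exceeds $n$; both arguments give $\exp(-\Omega(n))$.
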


Combining \thref{lem:interpolation}, which states that the $\bfj$th derivative of any degree $\leq n$ can be approximated without a significant increase to the inaccuracy, with \thref{lem:rhs_truncation}, which states that $g_\bfx ^ {\prod_i (-1) ^ {x_i}}$ can be approximated to a very high degree of accuracy by a degree $\leq n$ polynomial and the fact that our target polynomial $g_\bfx ^{\chi_\omega}$ is a derivative of $g_\bfx ^ {\prod_i (-1) ^ {x_i}}$ completes our proof.

\begin{proof} [Proof of \thref{lem:rhs_truncation}]
The proof of \thref{lem:rhs_truncation} is very similar to our proof of \thref{lem:truncating_lhs}.

Fix $z_0$ as in \thref{lem:rhs_truncation}, we now view $r$ and $\wt{r}$ as polynomials in $z_1, \ldots, z_l$ whose coefficients were set as functions of $z_0$.
The absolute value of each of those coefficients is trivially bounded from above by $\sum_i \abs{z_0}^i < n^{4/5}$.

Consider the total contribution of the monomials of the form $z_1^{d_1} \cdots z_l {d_l}$ of total degree $\sum_i d_i = d$.
There are $d ^ {O(l)}$ such monomials, each has coefficient $\leq n^{4/5}$ and size $\abs{z_i}^d = \exp(-\Omega(d))$.
Therefore the total contribution of coefficients of total degree $d \geq n$ is at most
\[
\sum_{d \geq n} d^{O(l)} n^{4/5} \exp(-\Omega(d)) \leq \sum_{d \geq n} \exp(-\Omega(d)) = \exp(-\Omega(n)).
\]

\end{proof}

\subsection{Proof of \texorpdfstring{\thref{thm:can_compute_g}}{Theorem \ref{thm:can_compute_g}} for General Functions}

\begin{proof}
    We begin by rewriting $f(x_1, \ldots, x_l)$ as its Fourier transformation over $\mathbb{F}_2 ^l$:
\[
f(\bfx) = \sum_{\omega \in \{1, -1\}^l} \hat{f}(\omega) \chi_\omega (\bfx)
\]
where $\chi_\omega(\bfx) = \prod_{1 \leq i \leq l} \omega_i^{x_i}$.

We now use the additivity of $g^f_\bfx$ (as a function of $f$) to write:
\[
g_\bfx ^ f (z_0, z_1, \ldots, z_l) = \sum_{\omega \in \{1, -1\}^l} \hat{f}(\omega) g_\bfx ^{\chi_\omega}(z_0, z_1, \ldots, z_l)
\]

In Section \ref{subsec:general_characters}, we showed that we can approximate $g_\bfx ^{\chi_\omega}(z_0, z_1, \ldots, z_l)$ to within $\pm \exp(-\Omega(h(n)))$ for any character $\omega \in \{-1, 1\}^l$.
Parceval's theorem easily bounds the norms of the $\hat{f}(\omega)$ coefficients to at most $\exp(O(l)) = \exp(o(h(n)))$.
Therefore, we can combine the results of these approximations to obtain a high accuracy approximation of $g_\bfx ^ f (z_0, z_1, \ldots, z_l)$ from the traces.

\end{proof}

\section{Proof of \texorpdfstring{\thref{thm:worst_case}}{Theorem \ref{thm:worst_case}}}

\label{sec:analysis}
In Section \ref{sec:fourier}, we showed that for any function $f$ from $\{0, 1\}^l$ to the unit disk, we can map it into a polynomial related to the input message which can be approximated to a high degree of accuracy from the traces.
In this section, we will construct a function $f$ for which our approximation of $g_\bfx ^f$ shown in \thref{thm:can_compute_g} will suffice to reconstruct the $n+1$th bit of $\bfx$, proving \thref{thm:worst_case}.

For his upper bound, Chase proved used a lemma that members of the class of polynomials defined below reaches non-negligible values on a small subarc.
We will prove that a similar polynomial has non-negligible values on a small sub-arc of radius $1 - \varepsilon$.

\begin{theorem} [Adaptation of Theorem 5 of \cite{chase2021separating}]
\thlabel{thm:like_borwein}
    Let $\mathcal{P}_n^\mu$ denote the set of polynomials of the form $p(x) = \zeta - \eta x^d + \sum_{n^\mu \leq j \leq n} a_j x^j$ where $\eta \in \{0, 1\}$, $\zeta \in \partial \D$ and $\abs{a_j} \leq 1$.
    
    For any $\mu \in (0, 1)$, there exists some constant $C_1 > 0$, such that for all sufficiently large $n$, any $p \in \mathcal{P}_n ^ \mu$, it holds that for every $\rho \in [0, 1]$:
    \[
    \max_{\lvert \theta \rvert \leq n^{-2\mu}} \lvert p(\rho e^{i\theta}) \rvert \geq \exp\left(-C_1 n^\mu \log^5 n\right)
    \]
\end{theorem}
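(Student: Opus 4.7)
Proof plan: The statement differs from Chase's original Theorem~5 of \cite{chase2021separating}, which concerns $\rho = 1$, only in that we evaluate on the subarc $\{\rho e^{i\theta} : |\theta| \leq n^{-2\mu}\}$ for an arbitrary $\rho \in [0,1]$, rather than on the unit circle itself. For $\rho = 0$ the claim is trivial, since $|p(0)| = |\zeta| = 1$, so I will assume $\rho > 0$; the natural approach is then to apply the substitution $q(y) := p(\rho y)$, reducing the problem to bounding $\max_{|\theta| \leq n^{-2\mu}} |q(e^{i\theta})|$ from below.

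The substituted polynomial
\[
q(y) \;=\; \zeta \;-\; \eta\,\rho^d\, y^d \;+\; \sum_{n^\mu \leq j \leq n} \bigl(a_j \rho^j\bigr)\, y^j
\]
retains the key structural features of the class $\mathcal{P}_n^\mu$: a constant term of modulus exactly $1$, all other coefficients of modulus at most $1$ (since $|a_j \rho^j| \leq 1$), and the characteristic gap structure in which the only possibly nonzero coefficient of $y^j$ for $1 \leq j < n^\mu$ occurs at $j = d$. The sole deviation from $\mathcal{P}_n^\mu$ is that the coefficient $-\eta \rho^d$ of $y^d$ now lies in $[-1, 0]$ rather than in $\{-1, 0\}$.

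To conclude, I will show that Chase's original proof of Theorem~5 extends with essentially no change to polynomials in which the $y^d$ coefficient is allowed to be any complex number of modulus at most $1$. The reason is that Chase's argument relies on Borwein--Erd\'{e}lyi-type lower bounds for Littlewood polynomials together with a multiplicative decomposition of the polynomial, and both ingredients depend only on the moduli of the coefficients, not on their specific values. The detailed verification is deferred to Appendix~\ref{app:proof_of_chase}.

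\textbf{Main obstacle.} The principal difficulty is carrying Chase's delicate bookkeeping through this generalization while preserving the sharp $\log^5 n$ factor in the exponent. Several intermediate estimates in the original argument involve case analyses keyed to the binary value of $\eta$, and each such step must be re-examined and extended to complex coefficients of modulus at most $1$. I expect this to be technically demanding but to require no essentially new analytic ideas, since the quantitative inputs are all modulus-based.
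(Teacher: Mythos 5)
Your approach matches the paper's proof: it too sets $p(z) = p_{\textup{base}}(\rho z)$, notes that the rescaled polynomial keeps constant term of modulus $1$, $z^d$-coefficient in $[-1,0]$, and higher coefficients of modulus at most $1$, and then reruns Chase's argument in Appendix~\ref{app:proof_of_chase} with the key Lemma~\ref{uproduct} restated for $u(z) = \eta z^d - \zeta$, $\eta \in [0,1]$, $\zeta \in \partial\D$. One minor imprecision in your heuristic: the argument is not purely modulus-based, since Lemma~\ref{uproduct} also relies on the structural fact that $|u(0)| = 1$ so that $\log|u\circ h|$ is a mean-zero harmonic function on the disk, but the substituted coefficient $-\eta\rho^d \in [-1,0]$ preserves exactly this, so your reduction is sound and coincides with the paper's.
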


\thref{thm:like_borwein} is a generalization of Theorem 5 in Chase's upper bound (Chase proves this for $\zeta = \rho = 1$).
Our proof of the general theorem is similar to Chase's proof and we show it in Appendix \ref{app:proof_of_chase}.
Throughout the rest of this section, we will prove that \thref{thm:worst_case} follows from \thref{thm:like_borwein}.

\subsection{Corollaries of \texorpdfstring{\thref{thm:like_borwein}}{Theorem \ref{thm:like_borwein}}}
\label{subsec:corollaries}

We will use \thref{thm:like_borwein} for $\mu = 1/5$ and with $\rho = 1 - n^{-4\mu} \log^6 n$.
For the rest of this section, we set $\rho = 1 - n^{-4/5} \log^6 n$, $l = 2n^{1/5} + 1$, and $A = \{\rho e^{i\theta} \mid \lvert \theta \rvert \leq n^{2/5} \}$.
A corollary of \thref{thm:like_borwein} is the following:

\begin{corollary} [Adaptation of Proposition 6.3 from \cite{chase2021separating}]
\label{cor:good_z0}
    For some constant $C > 0$, let $\bfx, \bfy \in \{0, 1\}^\N$ be binary strings, such that $\bfx, \bfy$ agree on their first $n$ bits but not on their $(n+1)$th bit.
    Then there exist some $\bfw \in \{0, 1\}^{l}$ and $z_0 \in A$ such that
    \[
    \abs{\sum_{k} \left[ (-1)^{x_k} 1_{\bfx(k+1:k+l) = \bfw} - (-1){y_k} 1_{\bfy(k+1:k+l) = \bfw} \right] z_0 ^k} \geq \exp \left(- n^{1/5} \log ^ 6 n\right) \exp(-C n^{1/5} \log ^5 n)
    \]
\end{corollary}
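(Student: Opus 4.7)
The plan is to identify a specific Fourier combination of the polynomials
\[
P_\bfw(z) := \sum_k \bigl[(-1)^{x_k}\, 1_{\bfx(k+1:k+l) = \bfw} - (-1)^{y_k}\, 1_{\bfy(k+1:k+l) = \bfw}\bigr] z^k
\]
which, after shifting and truncation, lies in the class $\mathcal{P}_n^{1/5}$ of \thref{thm:like_borwein}, then transfer the resulting lower bound to an individual $P_\bfw$ via a triangle inequality.

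For $\mathbf{u}\in\{0,1\}^l$, a direct Fourier expansion of the indicators over $\mathbb F_2^l$ gives
\[
P^{\mathbf{u}}(z) := \sum_{\bfw \in \{0,1\}^l} (-1)^{\mathbf{u}\cdot\bfw} P_\bfw(z) = \sum_k \bigl[(-1)^{x_k + \mathbf{u}\cdot\bfx(k+1:k+l)} - (-1)^{y_k + \mathbf{u}\cdot\bfy(k+1:k+l)}\bigr] z^k.
\]
Because $\bfx,\bfy$ agree on positions $0,\dots,n-1$, every coefficient of $P^{\mathbf{u}}$ at $z^k$ with $k\le n-l-1$ vanishes. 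For $k=n-l+j$ with $0\le j\le l-1$, the two windows $\bfx(k+1:k+l)$ and $\bfy(k+1:k+l)$ differ only at position $n$ (at window-index $l-j$) and possibly at positions $n+1,\dots,n+j$ (at window-indices $l-j+1,\dots,l$), so this coefficient has the form $\xi_j\bigl(1-(-1)^{u_{l-j}+\sum_{i=1}^{j} u_{l-j+i}(x_{n+i}\oplus y_{n+i})}\bigr)$ with $|\xi_j|=1$. I would fix $u_l:=1$ and recursively set
\[
u_{l-j} \;:=\; \sum_{i=1}^{j} u_{l-j+i}\,(x_{n+i}\oplus y_{n+i}) \pmod 2 \qquad (j=1,\dots,l-1),
\]
which annihilates the coefficients at $z^{n-l+1},\dots,z^{n-1}$, while the $j=0$ case (whose only differing position is $n$ itself) forces the coefficient at $z^{n-l}$ to equal $\pm 2\,(-1)^{x_{n-l}+\mathbf{u}\cdot\bfx(n-l+1:n)}$.

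Dividing $P^{\mathbf{u}}(z)$ by this leading coefficient and by $z^{n-l}$ produces a power series $Q(z)$ with $Q(0)=1$, identically zero coefficients at degrees $1,\dots,l-1$, and all higher coefficients bounded by $1$ in modulus. Truncating $Q$ at degree $n$ introduces an error on $A$ of order $O(\rho^{n} n^{4/5}/\log^{6}n)$, negligible relative to the target. Since $l-1 = 2n^{1/5} > n^{1/5}$, the truncation $\widetilde Q$ lies in $\mathcal{P}_n^{1/5}$ with $\eta = 0$, so \thref{thm:like_borwein} (with $\mu=1/5$ and $\rho = 1-n^{-4/5}\log^{6}n$) yields some $z_0\in A$ at which $|\widetilde Q(z_0)| \ge \exp(-C_1 n^{1/5}\log^{5}n)$. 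Unwinding the normalisation gives $|P^{\mathbf{u}}(z_0)| \ge 2\rho^{n-l}\exp(-C_1 n^{1/5}\log^{5}n)$; since $|P^{\mathbf{u}}(z_0)|\le 2^{l}\max_\bfw |P_\bfw(z_0)|$ and $\rho^{n-l}=\Theta(\rho^{n})=\Theta(\exp(-n^{1/5}\log^{6}n))$, some $\bfw$ realises the target bound once the extra factor $2/2^{l}=\exp(-\Theta(n^{1/5}))$ is absorbed into the exponent's constant $C$.

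The key technical obstacle is the recursive construction of $\mathbf{u}$: one must verify that the recursion with initial condition $u_l=1$ is uniquely solvable and that it simultaneously annihilates every ``middle'' coefficient without cancelling the leading $z^{n-l}$ term. This rests on the fact that position $n$ is the only guaranteed differing position in each window, and that it sits at a distinct window-index $l-j$ for each $j$, which makes the linear system for $\mathbf{u}$ triangular and hence uniquely solvable from $u_l=1$. Once this structural point is settled, the remainder is a routine combination of \thref{thm:like_borwein} with standard truncation bounds and the triangle inequality.
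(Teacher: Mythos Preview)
Your argument is correct and establishes the corollary as stated, but it proceeds quite differently from the paper. The paper does not form a Fourier combination over all $\bfw$; instead it fixes $\bfw$ at the outset as one of the two strings $\bfx(n-l:n-1)\,b$ for $b\in\{0,1\}$, and appeals to a periodicity result (Lemmas~1--2 of \cite{robson1989separating}) to ensure that for the correct choice of $b$ the occurrences of $\bfw$ as a substring are $n^{1/5}$-separated. That separation already forces $P_\bfw(z)/z^{n-l}$ (truncated to degree $n$) to lie in $\mathcal{P}_n^{1/5}$, so \thref{thm:like_borwein} applies directly to the single chosen $\bfw$ with no averaging step.

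Your route is more self-contained in that it bypasses the Robson lemma, replacing it by the triangular recursion for $\mathbf u$ and a final triangle inequality over $2^l$ patterns. The paper's route, however, buys something you lose: its $\bfw$ is determined solely by the known prefix $\bfx(0{:}n-1)$, and the $n^{1/5}$-separation of its occurrences makes the one-variable polynomial $g_\bfy(z_0,0,\dots,0)$ sparse. Both properties are used essentially in Section~\ref{subsec:efficient_enumeration} to cut the enumeration when $q<1/2$ down to $\exp(O(n^{4/5}\log n))$ options. Your $\bfw$, selected a posteriori via the triangle inequality, need not enjoy this sparsity (and its determination via $\mathbf u$ depends on bits $x_{n+1},\dots,x_{n+l-1}$ and $y_{n+1},\dots,y_{n+l-1}$), so while the corollary itself goes through, the downstream time-complexity claim in \thref{thm:worst_case} would not follow from your version.
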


\begin{proof} [Proof of Corollary \ref{cor:good_z0}]
    Set $\bfw^\prime = \bfx(n - l:n-1)$.
    
    Like Chase, we note that Lemmas 1 and 2 of \cite{robson1989separating} and the fact that either either $\bfw^\prime0$ or $\bfw^\prime1$ has no period of length $\leq n^{1/5}$ imply that for the right choice of $\bfw \in \{\bfw^\prime0, \bfw^\prime1\}$, the indices $k$ for which $\bfx(k:k+l) = \bfw$ are $n^{1/5}$ separated.
    
    Define $q(z)$ to be the polynomial
    \[
        q(z) = \sum_k \left[(-1)^{x_k} 1_{\bfx(k+1:k+l) = \bfw} - (-1)^{y_k} 1_{\bfy(k+1:k+l) = \bfw}\right] z^{k-1}
    \]
    In Chase's construction, the polynomial $p(z) = q(z) / z^{n-l} \in \mathcal{P}_n^{1/5}$, is used an input to \thref{thm:like_borwein}
    
    However, in our case, this sum is infinite, so we cannot directly apply \thref{thm:like_borwein} to it.
    In order to do so, we will consider its truncation
    \[
        \widetilde{p}(z) = \sum_{k \leq 2n - l} \left[(-1)^{x_k} 1_{\bfx(k+1:k+l) = \bfw} - (-1)^{y_k} 1_{\bfy(k+1:k+l) = \bfw}\right] z^{k-m}
    \]
    (where $m = n - l$, and clearly $\widetilde{p} \in \mathcal{P}_{n}^{1/5}$)
    
    We will be evaluating $\widetilde{p}(\cdot)$ at points $\abs{z_0} = \rho$, so it is easy to show from the triangle inequality that:
    \[
    \abs{p(z_0)} \geq \abs{\widetilde{p}(z_0)} - \sum_{k \geq n} \abs{z_0^k} \geq \abs{\widetilde{p}(z_0)} - \exp\left(-n^{1/5} \log^6 n\right)
    \]
    
    Therefore, by \thref{thm:like_borwein}, there exists some $\theta \in [-n^{-2/5}, n^{-2/5}]$, such that for $z_0 = \rho e^{i \theta}$:
    \begin{equation}
        \begin{aligned}
            &\abs{q(z_0)} = \abs{z_0 ^ {n - l}} \abs{p(z_0)} \geq \rho^{n-l} \abs{\exp(-C_1 n^{1/5} \log^{5} n) - \exp(-n^{1/5} \log^6 n)}\\
            &\;\;\;\;\;\;\;\;\geq \exp(- n^{1/5} \log^{6} n) \exp(-C n^{1/5} \log^{5} n)
        \end{aligned}
    \end{equation}
\end{proof}

For any string $\bfx$ (and where $\bfw$ is implied from context), define the polynomial $g_\bfx(\cdot)$ to be:
\[
g_{\bfx}(\zeta_0 , \zeta_1, \ldots , \zeta_l) = \sum_{k_0 < \cdots < k_{l-1}} (-1)^{x_{k_0}} \prod_{1 \leq i \leq l} 1_{x_{k_i} = w_i} \zeta_0 ^{k_0 - 1} \zeta_1 ^{k_1 - k_0 - 1} \ldots \zeta_{l-1}^{k_{l}-k_{l-1} - 1}
\]

For any choice of $\bfw$, note that the left-hand-side of the claim in Corollary \ref{cor:good_z0} is equal to 
\[
\abs{g_\bfx(z_0, 0, \ldots, 0) - g_\bfy(z_0, 0, \ldots, 0)}
\]

This implies that setting $z_1, \ldots, z_l$ to $0$ would give us a point $z_0$ on the arc $A$, where these polynomials differ.
Indeed when the deletion probability $q$ is below $1/2$, we will separate between these polynomials by using the traces to estimate the evaluation of $g_\bfx$ at the point $(z_0, 0, \ldots, 0)$ where $z_0$ and $\bfw$ are as promised in Corollary \ref{cor:good_z0}.
Evaluating at points like this will allow us to reconstruct $\bfx$ more efficiently by using the sparsity of the $1$-variable polynomial $f(z_0) = g(z_0, 0, \ldots, 0)$.

However, as in Chase's construction, this point is difficult to estimate when $q \geq 1/2$, so in order to reconstruct $\bfx$ from such channels, we will estimate the evaluation of $g_\bfx(z_0, z_1, \ldots, z_l)$ at a point where point $z_1 = \cdots = z_l \in [1 - \varepsilon_1, 1 - \varepsilon_2]$ (for small positive $\varepsilon_{1,2}>0$).

\begin{corollary} [Adaptation of Corollary 6.1 from \cite{chase2021separating}]
\label{cor:good_zs}
    Let $\varepsilon_1 > \varepsilon_2 > 0$ be some positive constants.
    For some $C^\prime > 0$, let $\bfx, \bfy \in \{0, 1\}^\N$ be binary strings, such that $\bfx, \bfy$ agree on their first $n - 1$ bits but not on their $n$th bit.
    Then there exist some $\bfw \in \{0, 1\}^{l}$, $z_0 \in A$ and $z_1 = \cdots = z_l \in [1 - \varepsilon_1, 1-\varepsilon_2]$, such that
    \begin{equation*}
        \begin{aligned}
            &\abs{\sum_{k_0 < \cdots < k_l} \left((-1)^{x_{k_0}}\prod_{1 \leq i \leq l}\left[ 1_{x_{k_i} = w_i}\right] - (-1)^{y_{k_0}}\prod_{1 \leq i \leq l}\left[ 1_{y_{k_i} = w_i}\right] \right) z_0 ^{k_0 - 1} z_1 ^{k_1 - k_0 - 1} \ldots z_{l-1} ^ {k_{l} - k_{l-1} - 1} } \\
            &\;\;\;\;\;\;\;\;\;\;\;\;\;\;\;\;\;\;\;\;\;\;\;\;\;\;\;\;\;\;\;\;\;\;\;\;\;\;\;\;\;\;\;\;\;\;\;\;\;\;\;\;\;\;\;\;\;\;\;\;\;\;\;\;\;\;\;\;\;\;\;\;\;\;\;\;\;\;\;\;\;\;\;\;\;\;\;\;\;\;\;\;\;\;\;\;\geq \exp \left(-C^\prime n^{1/5} \log ^ 6 n\right)
        \end{aligned}
    \end{equation*}
\end{corollary}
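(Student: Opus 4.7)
The plan is to derive Corollary~\ref{cor:good_zs} from Corollary~\ref{cor:good_z0} by an extremal-polynomial argument in the single diagonal variable $z := z_1 = \cdots = z_l$. Take the pair $(\bfw, z_0) \in \{0,1\}^l \times A$ furnished by Corollary~\ref{cor:good_z0} (applied to the given $\bfx, \bfy$) and set
\[
D(z) \;:=\; g_{\bfx}(z_0, z, \ldots, z) - g_{\bfy}(z_0, z, \ldots, z).
\]
Substituting $\zeta_1 = \cdots = \zeta_l = 0$ into the definition of $g_\bfx$ kills every term with some $k_i - k_{i-1} - 1 \geq 1$, forcing $k_i = k_{i-1}+1$ for all $i \geq 1$; the surviving sum is exactly the quantity bounded below by Corollary~\ref{cor:good_z0} (up to a harmless factor of $z_0$). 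Hence $|D(0)| \geq \exp(-C n^{1/5}\log^6 n)$ for some constant $C$, and the corollary reduces to exhibiting a point $z^* \in [1-\varepsilon_1, 1-\varepsilon_2]$ with $|D(z^*)| \geq \exp(-C' n^{1/5}\log^6 n)$.

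Next I would truncate $D$ to a polynomial of moderate degree. Writing $D(z) = \sum_{d \geq 0} c_d z^d$ and grouping configurations by $d = k_l - k_0 - l$, one has
\[
|c_d| \;\leq\; 2 \binom{d+l-1}{l-1} \sum_{k_0 \geq 1}|z_0|^{k_0-1} \;\leq\; O\!\left(\binom{d+l-1}{l-1} \cdot n^{4/5}\right),
\]
and the generating-function identity $\sum_d \binom{d+l-1}{l-1}x^d = (1-x)^{-l}$ shows that at $|x| = 1-\varepsilon_2$ the series decays geometrically once $d$ exceeds a moderate multiple of $l/\varepsilon_2$. A routine calculation (essentially the same as the one in \thref{lem:rhs_truncation}) confirms that for a suitable $D_0 = \Theta_{\varepsilon_2}(n^{1/5}\log^6 n)$, the truncation $\tilde D(z) := \sum_{d \leq D_0} c_d z^d$ satisfies $|D(z) - \tilde D(z)| \leq \exp(-3C n^{1/5}\log^6 n)$ uniformly on $[1-\varepsilon_1, 1-\varepsilon_2]$, while $\tilde D(0) = c_0 = D(0)$ is exact.

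The final ingredient is a classical extremal inequality for real polynomials: if $p$ has degree $\leq D_0$ and $[a,b] \subset \mathbb{R}$, then $\max_{z \in [a,b]}|p(z)| \geq |p(y)| / |T_{D_0}(u(y))|$, where $T_{D_0}$ is the first-kind Chebyshev polynomial and $u:[a,b] \to [-1,1]$ is the affine map. For $y = 0$ and $[a,b] = [1-\varepsilon_1, 1-\varepsilon_2]$ with $\varepsilon_1, \varepsilon_2$ positive constants, $|u(0)|$ is a constant depending only on $\varepsilon_1, \varepsilon_2$, hence $|T_{D_0}(u(0))| \leq K^{D_0}$ for some $K = K(\varepsilon_1, \varepsilon_2)$. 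Applied to $\tilde D$ and combined with the truncation bound this gives
\[
\max_{z \in [1-\varepsilon_1, 1-\varepsilon_2]}|D(z)| \;\geq\; |D(0)| K^{-D_0} - \exp(-3C n^{1/5}\log^6 n) \;\geq\; \exp(-C' n^{1/5}\log^6 n),
\]
as required.

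The main delicate point is the simultaneous calibration of $D_0$: it has to be large enough that the truncated tail falls well below $|D(0)|$, yet small enough that the extremal-polynomial factor $K^{D_0}$ is still absorbed into $\exp(O(n^{1/5}\log^6 n))$. That both thresholds land at the same order $n^{1/5}\log^6 n$ is exactly what makes the argument close, and it is essentially here that the $\widetilde O(n^{1/5})$ appearing in the final sample complexity is pinned down.
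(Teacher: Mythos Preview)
Your overall plan matches the paper's: take $(\bfw,z_0)$ from Corollary~\ref{cor:good_z0}, collapse to the single diagonal variable $z$, note that $D(0)$ is large, and transfer this to $[1-\varepsilon_1,1-\varepsilon_2]$ via an extremal polynomial inequality after truncation. The gap is in the last step: the Chebyshev inequality is too weak for the calibration you describe.

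Concretely, with $D_0=C_0 n^{1/5}\log^6 n$ and $[a,b]=[1-\varepsilon_1,1-\varepsilon_2]$, the affine image of $0$ is $u(0)=-(2-\varepsilon_1-\varepsilon_2)/(\varepsilon_1-\varepsilon_2)$, so $K:=|u(0)|+\sqrt{u(0)^2-1}$ is a fixed constant strictly bigger than $1$, and the Chebyshev loss is $K^{D_0}=\exp\bigl(C_0(\log K)\,n^{1/5}\log^6 n\bigr)$. On the other hand the truncation tail on $[1-\varepsilon_1,1-\varepsilon_2]$ decays like $\exp\bigl(-c_\varepsilon C_0\,n^{1/5}\log^6 n\bigr)$ with $c_\varepsilon=-\log(1-\varepsilon_2)$. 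For your final inequality to hold you need $c_\varepsilon C_0 > C + C_0\log K$, i.e.\ $c_\varepsilon>\log K$. But in the regime actually needed here (cf.\ \thref{thm:can_compute_g}, where $c_1>c_2$ are \emph{small} constants), $c_\varepsilon\approx\varepsilon_2$ while $\log K\approx\log\bigl(4/(\varepsilon_1-\varepsilon_2)\bigr)$, so $c_\varepsilon<\log K$ and no choice of $C_0$ works: increasing $D_0$ makes the Chebyshev factor blow up faster than the tail shrinks.

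The paper fixes this by first normalising so that all coefficients of the one-variable polynomial are bounded by $1$ (dividing by $(1-\rho)^{-1}\binom{n}{l}$ and substituting $(1-\varepsilon_1)z$), and then invoking Theorem~5.1 of Borwein--Erd\'elyi \cite{borwein1999littlewood}, which gives $\max_{[1-\varepsilon,1]}|g|\ge |g(0)|^{c_1/\varepsilon}e^{-c_2/\varepsilon}$ for such polynomials. The point is that this bound is \emph{degree-independent}, so one can truncate at degree $n$ (making the tail $\exp(-\Omega(n))$) without any penalty in the extremal step. Your Chebyshev route loses exactly because it ignores the bounded-coefficient structure and pays exponentially in the degree.
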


\begin{proof} [Proof of Corollary \ref{cor:good_zs}]
    The proof of this corollary will follow the same lines as the proof of Corollary 6.1 in \cite{chase2021separating}.
    
    Let $\bfw$ and $z_0$ be those promised in Corollary \ref{cor:good_z0}.
    We define:
    \[
    f(z_1) = (1-\rho) \left(\begin{matrix}
        n\\l
    \end{matrix}\right)^{-1} \sum_{k_0 < \cdots < k_{l} } \left((-1)^{x_{k_0}}\prod_{1 \leq i \leq l}\left[ 1_{x_{k_i} = w_i}\right] - (-1)^{y_{k_0}}\prod_{1 \leq i \leq l}\left[ 1_{y_{k_i} = w_i}\right] \right) z_0 ^{k_0 - 1} \left((1 - \varepsilon_1) z_1\right) ^{k_{l} - k_0 - l}
    \]
    
    As in Chase's proof, $f$ is a polynomial in $z_1$ we will show that each of its coefficients is upper bounded by $1$ in absolute value.
    
    We first note that $\sum_{k_0} \abs{z_0^{k_0}} = \frac{1}{1-\rho}$, so the absolute value of the contribution of the summation over $k_0$ is bounded.
    
    Denote the power of $z_1$ by $n^\prime \geq k_{l} - k_0 - l$.
    When $n^\prime \leq n$ it is trivial that the absolute value of the coefficient of $z_1 ^ {n^\prime}$ is at most $1$, since the normalization factor is at most $1$ over the number of sets $k_1, \ldots, k_l$ maintaining this equality.
    When $n^\prime > n$, the number of values of $k_1, \ldots, k_l$ is $\exp(O(l \log n^\prime))$ and the $(1 - \varepsilon_1)^{n^\prime} = \exp(-\Omega(n^\prime))$, so the coefficient is still bounded.
    
    Furthermore, this shows that when evaluating $f(z_1)$ at points where $\abs{z_1} \leq 1 - \varepsilon_2$, the contribution of the terms with power $n^\prime \geq n$ is exponentially small.
    Define $g$ to be the truncated version of $f$:
    \[
     g(z_1) = (1-\rho) \left(\begin{matrix}
        n\\l
    \end{matrix}\right)^{-1} \sum_{\substack{k_0 < \cdots < k_{l-1} \\ k_{l-1} - k_0 - l+1 \leq n} } \left[1_{x_{k_i} = w_i} - 1_{y_{k_i} = w_i}\right] z_0 ^{k_0 - 1} \left((1 - \varepsilon_1) z_1\right) ^{k_{l-1} - k_0 - l+1}
    \]
    
    Clearly:
    \begin{equation}
    \label{eq:clearly_g_and_f}
        \abs{g(z_1) - f(z_1)} \leq \exp(-\Omega(n))
    \end{equation}
    Therefore, applying Theorem 5.1 of \cite{borwein1999littlewood}, we have the following inequality:
    \begin{equation}
    \label{eq:g_is_large}
        \begin{aligned}
            &\left(\begin{matrix} n\\l \end{matrix}\right) \frac{1}{1-\rho} \max_{z_1 \in[1-\varepsilon, 1]} \abs{g(z_1)} \geq \left(\begin{matrix} n\\l \end{matrix}\right) \frac{1}{1-\rho} \abs{g(0)}^{c_1 / \varepsilon} \exp(- \frac{c_2}{\varepsilon}) \\
            &\;\;\;\;\;\;\;\; \geq \left(\begin{matrix} n\\l \end{matrix}\right) \frac{1}{1-\rho} \left( (1 - \rho) \left(\begin{matrix} n\\l \end{matrix}\right)^{-1} \exp(-C n^{1/5} \log^{6} n) \right)^{c_1 / \varepsilon} \exp(- \frac{c_2}{\varepsilon})\\
            &\;\;\;\;\;\;\;\; \geq \exp(-C^\prime/2 n^{1/5} \log^{6} n)
        \end{aligned}
    \end{equation}
    where $C^\prime > 0$ is some constant and
    \[
    \varepsilon = 1 - \frac{1-\varepsilon_2}{1 - \varepsilon_1} > 0
    \]
    
    Combining equations \eqref{eq:clearly_g_and_f} and \eqref{eq:g_is_large} yields our claim.
\end{proof}

\subsection{Completing the Proof}

We are now ready to complete our proof of \thref{thm:worst_case}.
Assuming that we know the first $n$ bits of $\bfx$, we will show that one can reconstruct its $n+1$th bit from the traces.

Let $C>0$ be a sufficiently large constant.
We will do this by enumerating over all the pairs of options $\textbf{o} \in \{0, 1\}^{Cn - n - 1}$ of $\bfx(n+1:Cn)$.
Let $\textbf{o}^0, \textbf{o}^1$ be such a pair, and consider the strings $\bfy^0 = \bfx(0:n-1)0\textbf{o}^0,\bfy^1 = \bfx(0:n-1)1\textbf{o}^1$.

We will then combine \thref{thm:can_compute_g}, which shows that we can approximate $g_\bfx ^ f (z_0, z_1, \ldots, z_l)$, and \thref{thm:like_borwein} with Corollaries \ref{cor:good_z0} and \ref{cor:good_zs}, which show that $g_\bfx ^ f (z_0, z_1, \ldots,  z_l)$ strongly depends on the $n+1$th bit of $\bfx$.
We use this combination to define a Boolean test $\tau$ such that if $\bfy^b = \bfx(:Cn)$ (for $b \in \{0, 1\}$), then $\tau$ will return $b$ when run on $\bfy^0, \bfy^1$ (with very high probability).

Repeating this test for each such pair $\bfy^0, \bfy^1$, for the correct assignment $\bfy^b = \bfx(:Cn)$, the test $\tau$ will always return $b = x_n$.
There can be no assignment $\bfy^{1 - x_n}$ for which $\tau$ always returns $1 - x_n$, because when matched with $\bfy^{x_n} = \bfx(:Cn)$, the test will return $x_n$ by this very property.

Finally, in Section \ref{subsec:efficient_enumeration} we will show that when $q < 1/2$, this enumeration can be carried out more efficiently (in time $\exp(o(n))$).

\vs

Let $\bfy^0, \bfy^1$ be some pair of strings as above.
If $q < 1/2$, let $z_0$ and $\bfw$ be those promised by Corollary \ref{cor:good_z0} for separating between $\bfy^0$ and $\bfy^1$ and let $z_1, \ldots, z_l = 0$.
If $q \geq 1/2$, let $z_0, z_1, \ldots, z_l$ and $\bfw$ be those promised by Corollary \ref{cor:good_zs} for separating between $\bfy^0$ and $\bfy^1$.

Define $f(x_1, \ldots, x_l) \defeq \prod_i 1_{x_i = w_i}$, and define the polynomial:
\[
g_{\bfx}(z_0 , z_1, \ldots , z_l) = \sum_{k_0 < \cdots < k_l} (-1)^{x_{k_0}} f(x_{k_1}, \ldots , x_{k_l}) z_0 ^{k_0 - 1} z_1 ^{k_1 - k_0 - 1} \ldots z_{l}^{k_l-k_{l-1} - 1}
\]

The test $\tau$ is defined as follows.

First $\bfw$ and $z_0, z_1, \ldots, z_l$ are selected as in Corollary \ref{cor:good_zs} (or Corollary \ref{cor:good_z0} when $q < 1/2$), when applied to the strings $\bfy^0, \bfy^1$.
\thref{thm:like_borwein} is not constructive, but our choice of $\bfw$ was given directly from the first $n$ bits of $\bfy^0, \bfy^1$ and given the polynomial $g_{\bfy^0} - g_{\bfy^1}$, we can take a grid of values of $z_0, z_1, \ldots, z_l$ within the allowed region with a sufficiently small ($\exp(-\wt{O}(l))$) distance between them and choose a point for which $\abs{g_{\bfy^0} - g_{\bfy^1}}$ is sufficiently large ($\exp(-O(n^{1/5} \log^5 n))$).

We know that such a point exists from Corollary \ref{cor:good_zs} (or \ref{cor:good_z0}), and in Section \ref{subsubsec:truncation} we will show that truncating $g_\bfx$ to the $Cn$th power changes it by at most $\exp(-\omega(n^{1/5} \log^5 n))$.
Between the points of a grid with jumps $\exp(-\omega(n^{1/5} \log^5 n))$, the polynomial $g_{\bfy^0} - g_{\bfy^1}$ cannot change too much which implies that for the grid points closest to the point promised by Corollary \ref{cor:good_zs} (or \ref{cor:good_z0}), the difference between $g_{\bfy^0}$ and $g_{\bfy^1}$ is also $\exp(-O(n^{1/5} \log^5 n))$.

Given this point, we use the algorithm promised in \thref{thm:can_compute_g} to approximate $g_{\bfx}(z_0, z_1, \ldots, z_l)$.
The test $\tau$ outputs the value $b \in \{0, 1\}$ for which $g_{\bfy^b} (z_0, z_1, \ldots, z_l)$ is closest to our approximation of $g_\bfx$ at that point.
If $\bfy^b = \bfx(:Cn)$ for some $b \in \{0, 1\}$, then we will clearly output $b$, since our approximation of $g_{\bfx}(z_0, z_1, \ldots, z_l)$ is at most $\exp(-\Omega(n^{1/5} \log^7 n))$ from the correct value which is equal to $g_{\bfy^b}$ (and far from $g_{\bfy^{1-b}}$).

\subsubsection{Truncating \texorpdfstring{$g_{\bfx}$}{g bfx} Causes only a Small Change}
\label{subsubsec:truncation}

Our estimation method assumes we are somehow able to efficiently perform the test described above for any pair $\bfy^0, \bfy^1$ and select the one for which the test passes.
However, there is an infinite number of possibilities for $\bfx$, as we assumed it was a string of infinite length.

In this section, we will show that only a short prefix of $\bfx$ can have a non-negligible effect on the test.
This implies that it is enough to use the value of $\bfx$ in only a finite number of indices in order to apply this bit-recovery algorithm in a finite time.

The bound we give here will bound the time-complexity of the bit recovery by $\exp(O(n))$, since we will only bound the number of bits of $\bfx$ over which we enumerate by $Cn$, which the average to worst case reduction would translate to a bit recovery with a polynomial complexity of $\exp(O(\log n))$.
In Section \ref{subsec:efficient_enumeration} we will show that when $q \leq 1/2$, it is enough to enumerate only over a small part of the entropy of these first $Cn$ bits, resulting in a $\exp(\widetilde{O}(n^{4/5}))$ complexity.

\begin{lemma}
\thlabel{lem:only_first_coordinates}
Define $\widetilde{g}_\bfx$ to be the truncation of $g_\bfx$ that has only monomials with total degree $\leq Cn - l$:
\[
\widetilde{g}_\bfx(z_0 , \zeta_1, \ldots , z_{l}) = \sum_{k_0 < \cdots < k_{l} < n + n \log n} (-1)^{x_{k_0}} f(x_{k_1}, \ldots , x_{k_{l}}) z_0 ^{k_0 - 1} z_1 ^{k_1 - k_0 - 1} \cdots z_{l}^{k_{l}-k_{l-1} - 1}
\]

Let $C ^ \prime$ be the constant from Corollary \ref{cor:good_zs}. 
Then, for sufficiently large $C>0$
\[
\abs{g_\bfx(z_0, \ldots,  z_l) - \widetilde{g}_\bfx(z_0, \ldots, z_l)} < \frac{1}{10} \exp(-C^\prime n^{1/5} \log^6 n)
\]
\end{lemma}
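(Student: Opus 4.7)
The plan is to bound $|g_\bfx - \widetilde{g}_\bfx|$ monomial by monomial, exploiting that $|z_0| = \rho = 1 - n^{-4/5}\log^6 n$ is close to but strictly less than $1$, while each $|z_i|$ for $i \geq 1$ is bounded away from $1$ by an absolute constant (either $|z_i| \leq 1-\varepsilon_2$ in the main case, or $|z_i| \leq c_1$ in the $q<1/2$ case). Writing each monomial of $g_\bfx$ as $(-1)^{x_{k_0}} f(x_{k_1},\dots,x_{k_l}) z_0^{k_0-1} \prod_{i=1}^{l} z_i^{d_i}$ with $d_i := k_i - k_{i-1} - 1 \geq 0$ and using $|f| \leq 1$, each term has magnitude at most $\rho^{k_0-1} r^{D}$, where $r := \max_{i\geq 1}|z_i| < 1$ is a constant and $D := d_1 + \cdots + d_l$. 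A standard stars-and-bars count bounds the number of compositions $(d_1,\dots,d_l)$ with $\sum d_i = D$ by $(D+l)^l$.

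Let $T$ denote the truncation threshold beyond which $k_l$ is excluded in $\widetilde{g}_\bfx$; by hypothesis $T$ is at least $Cn$ for $C$ a sufficiently large constant. The omitted terms satisfy $k_0 + D + l \geq T$, and I would handle them by splitting into two regimes according to $k_0$. In the small-$k_0$ regime $k_0 \leq T/2$, the constraint forces $D \geq T/2 - l$, so each term contributes at most $r^{T/2-l}$. Summing over $k_0$ gives a factor $1/(1-\rho) = O(n^{4/5})$, while $\sum_{D \geq T/2-l}(D+l)^l r^D \leq \poly(n) \cdot r^{T/2-l}$. Since $r$ is a constant less than $1$ and $T = \Omega(n)$, the total contribution is $\exp(-\Omega(n))$, vastly smaller than the target. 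In the large-$k_0$ regime $k_0 > T/2$, I use $\rho^{T/2} = \exp(-\Omega(T\cdot n^{-4/5}\log^6 n))$, which for $T \geq Cn$ is at most $\exp(-\Omega(Cn^{1/5}\log^6 n))$. The remaining unconstrained sum $\sum_{D\geq 0}(D+l)^l r^D = O(\poly(l)/(1-r)^l) = \exp(O(n^{1/5}))$ together with the factor $1/(1-\rho) = O(n^{4/5})$ from summing $\rho^{k_0}$ are sub-exponential in $n^{1/5}\log^6 n$, so choosing $C$ sufficiently large relative to $C'$ keeps this regime bounded by $\tfrac{1}{20}\exp(-C' n^{1/5}\log^6 n)$.

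Adding the two regimes yields the desired bound of $\tfrac{1}{10}\exp(-C' n^{1/5}\log^6 n)$. The main obstacle I anticipate is controlling the sub-exponential overhead correctly: the combinatorial count $(D+l)^l$ contributes up to $\exp(O(n^{1/5}))$, the prefactor $1/(1-\rho)$ is of polynomial size $n^{4/5}$, and $1/(1-r)^l$ is another $\exp(O(n^{1/5}))$ factor, all of which must be dominated by the exponential decay produced in the large-$k_0$ regime, whose effective rate is $\Omega(Cn^{1/5}\log^6 n)$ with $C$ arbitrarily large. Ultimately this is bookkeeping rather than a substantive analytic difficulty, but care is required to make the constants match; the analysis is essentially parallel to the truncation argument given in the proof of \thref{lem:truncating_lhs}, which handles the analogous tail on the left-hand side.
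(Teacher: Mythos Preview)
Your proposal is correct and takes essentially the same approach as the paper's proof: both bound the tail monomial-by-monomial using $|z_i|\le 1-\varepsilon$ for $i\ge 1$ and $|z_0|=1-n^{-4/5}\log^6 n$, count compositions by an $(D+l)^l$-type bound, and split into a regime where the total degree in $z_1,\dots,z_l$ is large (giving $\exp(-\Omega(Cn))$) and a regime where the $z_0$-degree is large (giving $\exp(-\Omega(Cn^{1/5}\log^6 n))$). The only cosmetic difference is that you organize the split by the threshold on $k_0$ while the paper treats the two truncations $m>Cn/2$ and $d_0\ge Cn/2$ separately; the estimates and the handling of the sub-exponential prefactors are the same, and your reference to \thref{lem:truncating_lhs} as the parallel argument is apt.
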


\begin{proof}
    This lemma follows trivially from the fact that the norms of the entries $z_i$ are bounded sufficiently below $1$.
Indeed, for $i > 1$, all of the $z_i$ are of norm $\leq 1 - \varepsilon$ for some constant $\varepsilon > 0$, and $z_0$ is of norm $\leq 1 - n^{-4/5} \log^6 n$
For any $d_0, m$, limiting ourselves to monomials where $z_0$ has degree $d_0$ and the contribution of $z_1, \ldots, z_{l}$ to the degree is $m$ (that is, we are looking at the monomials of the form $z_0 ^ {d_0} z_1 ^{d_1} \cdots z_l ^{d_l}$ with $\sum_{i\geq 1} d_i = m$), there are $O((m+l)^l)$ such monomials, but the norm of any such monomial is at most $(1 - \varepsilon)^m = \exp(-\Omega(m))$ and its coefficient in $g_\bfx$ is of norm at most $1$.
Therefore, their total contribution decays exponentially with $m$ and is easy to see from a summations over this exponential decay, that truncating the ones with $m > Cn / 2$ changes $g_\bfx (z_0, \ldots, z_l)$ by at most $\exp(-\Omega(Cn))$.

Similarly for $i = 0$ we have $\abs{z_0} = 1 - n^{-4/5} \log^6 n$.
Therefore, the total contribution of powers $d_0 \geq Cn/2$ can also be bounded with the triangle inequality and the sum of a geometric series, and we see that their contribution to $g_\bfx (z_0, \ldots , z_{l})$ is at most $\exp(-\Omega(Cn^{1/5} \log^6 n))$.
Selecting a sufficiently large $C$, we prove prove the claim.
\end{proof}

This gives an algorithm for the shifted trace reconstruction with a complexity of $\leq \exp(C n)$ for some constant $C$.

\subsubsection{Efficient Enumeration when \texorpdfstring{$q < 1/2$}{q < 1/2}}
\label{subsec:efficient_enumeration}

The last portion of our claim that we need to prove is that the enumeration can be performed in time $\exp(O(n^{4/5} \log n))$, when $q < 1/2$.
To do this, we first note that our separation between $\bfy^0$ and $\bfy^1$ was based entirely on the values of the polynomial $P_{\bfy^0} (z_0) = g_{\bfy^0} (z_0, 0, \ldots, 0)$ and $P_{\bfy^1} (z_0) = g_{\bfy^1} (z_0, 0, \ldots 0)$.

This polynomial is sparse and can be written as:
\[
P_{\bfy} (z) = \sum_{k \geq 1} z^{k-1} 1_{\bfy(k:k+l) = \bfw}
\]

Next, we note that the polynomial $P_{\bfy}$ is determined completely by the indices in the string $\bfy$ where $\bfw$ appears as a consecutive substring.
By our very design of $\bfw$, these indices are $n^{1/5}$-separated.

Therefore, there are only 
\[
\left(\begin{matrix} Cn \\ Cn^{4/5} \end{matrix}\right) = \exp (O(n^{4/5} \log n))
\]
options for the truncation of $P_{\bfy}$ to its first $Cn$ powers and as we showed in Section \ref{subsubsec:truncation}, it is enough to determine $g_{\bfy}$ only up to its first $Cn$ powers.

Finally, we note that our choice of $\bfw$ depended only on the first $n - 1$ bits of $\bfx$, so we did not need to enumerate over any bits to compute it, and our choice of $z_0, z_1, \ldots, z_l$ depended only on the polynomial $P_{\bfy}$ also requiring no additional enumeration.

Therefore it will suffice to enumerate over only $\exp(O(n^{4/5}\log n)) = \exp(o(n))$ options, proving the complexity bound for $q < 1/2$.


\section*{Acknowledgements}
I would like to thank Aviad Rubinstein and Roni Con for their helpful comments on previous versions of this paper.
I would also like to thank Nina Holden, Robin Pemantle, Yuval Peres and Alex Zhai for their help in understanding their paper.

\bibliographystyle{alpha}
\bibliography{main}


\appendix

\section{Proof of Lemmas \ref{lem:coarsely_well_behaved} and \ref{lem:finely_well_behaved}}
\label{app:alignment}

\begin{proof} [Proof of \thref{lem:coarsely_well_behaved}]
    For any interval $I \subseteq [0, n]$, by \thref{lem:most_robust}, we know that $\bfx(I)$ is robust at scale $\lambda_c$ with probability $1 - e^{-\Omega(\ell / \lambda)} = 1 - e^{-\Omega(C^{1/2} \log n)} \geq 1 - n^{-4}$ for sufficiently large $C$.

    Similarly, from \thref{lem:within_interval} we have that $\Pr[\mcQ_{\ell_c, \lambda_c}(I, [0, n])] \leq e^{-\Omega(C^{1/2} \log n)} \leq n^{-6}$, where the probability measure is over both random strings $\bfx$ and the randomness of the channel $\omega$.

    Therefore, using Markov's inequality, we have that 
    \[
    \Pr_{\bfx} [\Pr_{\omega} [\mcQ_{\ell, \lambda} (I, [0, n]) \geq n^{-2}]] \leq n^{-4}
    \]

    Taking a union bound on all $O(n)$ choices of $I$ completes the proof.
\end{proof}

\begin{proof} [Proof of \thref{lem:finely_well_behaved}]
    Throughout the proof of this claim, we allow the implicit constants in $\Omega(\cdot)$ and $O(\cdot)$ to depend on $c_0$, but not on $C$.
    Fix a particular interval $J = [a, a+C\log n] \subseteq [0, n]$ of length $C \log n$, and let $\ell = \ell_f, \lambda = \lambda_f$ be as defined above.

    Consider the $m = \frac{1}{3} C^{1/3} \log n / h(C \log n)$ disjoint length $\ell$ intervals:

    \[
    I_1, \ldots, I_m \subset [a + \frac{1}{3} C \log n, a + \frac{2}{3} C \log n]
    \]

    For a given realization of $\bfx$, we say that $I_i$ is {\em bad} if either it does not have a clear robust bias at scale $\lambda$ or it holds that 
    \begin{equation}
    \label{eq:spurious_match}
        \Pr_\omega [\mcQ_{\ell, \lambda} (I_i, J)] \geq \exp (-c_0 C^{7/12} h(C \log n))
    \end{equation}

    Let $I_i$ be a bad segment and let $\ell ^ \prime = C^{1/6} \ell$, and define the event:
    \[
    H = \left\{
        \begin{matrix}
            \text{there exists some }t\text{ such that}\\
            \;\;\;\;g(t), g(t+\ell) \in I_i\text{ and } \lvert g(t) - g(t+\ell) \rvert \geq \ell ^ \prime
        \end{matrix}
    \right\}
    \],
    which roughly says that a substring of length $\ell ^ \prime$ had so many deletions that only $\ell$ or fewer bits were left in the output. For $C > 1 / q$, we have $\Pr (H) \leq \exp(-\Omega(\ell^\prime))$.

    As long as $H$ does not occur, then any spurious match in equation \eqref{eq:spurious_match} must have come from within an interval $J ^ \prime$ of length at most $\ell^\prime$.
    In other words
    \[
        \mcQ_{\ell, \lambda} \subseteq H \cup \left(\bigcup_{\substack{J^\prime \subseteq J \\ \lvert J \rvert = \ell^\prime}} \mcQ_{\ell, \lambda}(I_i, J^\prime) \right).
    \]

    Let $J^\prime_i$ be the subsegment of $J$ of length $\ell^\prime$, which maximizes $\Pr [\mcQ_{\ell, \lambda}(I_i, J^\prime_i)]$.
    By the union bound, it is easy to see that:
    \[
        \Pr [\mcQ_{\ell, \lambda}(I_i, J^\prime_i)] \geq 
        \frac{1}{\lvert J \rvert} 
        \Pr\left[ 
            \left(\bigcup_{\substack{
                J^\prime \subseteq J \\ 
                \lvert J \rvert = \ell^\prime
            }} \mcQ_{\ell, \lambda}(I_i, J^\prime) \right) 
            \right] \geq \exp (-2 c_0 C^{7/12} h(C \log n)).
    \]

    We will want to prove that w.p. $\geq 1 - n^{-2}$, at least one of the $I_i$ segments exhibits a robust bias at scale $\lambda$ and is not bad.
    We will say that a pair $I_i, J_i$ is bad, if $I_i$ does not exhibit robust bias at scale $\lambda$, or if $\Pr [\mcQ_{\ell, \lambda}(I_i, J_i)] \geq \exp (-2 c_0 C^{7/12} h(C \log n))$ 
    
    We have shown that for any bad or non-robust segment $I_i$, there exists a segment $J_i$ such that $(I_i, J_i)$ are bad.
    We will bound the probability that all the $I_i$ segments are bad by union bounding over all the options of $(J_1, \ldots, J_m)$ and the probability that the pairs $(I_i, J_i)$ are all bad.

    We first note that there are $(C \log n)^m = n^{o(1)}$ options for the assignment of the segments $J_i$, and we will prove that for any assignment of $J_i$, the probability that all the $(I_i, J_i)$ pairs are bad is $\leq n^{-3}$.

    Fix some choice of $J_1, \ldots, J_m$.
    We will show that the probability that $(I_i, J_i)$ are all bad is very small.
    If the Bernoulli random variables $B_i$ determining whether each pair was good or bad were independent of each other, the claim would follow trivially from Lemmas \ref{lem:most_robust}, \ref{lem:within_interval} and \ref{lem:cross_interval}.
    We will show that for $r = 0.01 C^{1/6} \log n / h(C\log n)$, there are indices $i_1, \ldots, i_r$ such that $B_i$ are in some sense sufficiently close to being independently distributed.
    
    We take $i_1 = 1$ and for each $k \geq 1$, let $N_k$ be
    \[
    N_k \defeq \bigcup_{j=1}^k \left(I_{i_j} \cup J_{i_j}\right).
    \]
    
    Then, choose $i_{k+1}$ such that $I_{i_{k+1}}$ is distance at least $2\ell^\prime$ from $N_k$.
    Note that the $2\ell^\prime$-neighborhood of $N_k$ intersects at most $2k \lceil 5\ell^\prime / \ell \rceil \leq 12 C^{1/6} k$ of the $I_i$, so such a choice is always possible when $k \leq r$.
    
    Let $\mathcal{G}_k$ be the $\sigma$-field generated by the bits of $\bfx$ whose positions are in $N_k$, and let $E_k$ denote the event that $(I_{i_k}, J_{i_k})$ is a bad pair.
    Note that $E_k$ is measurable with respect to $\mathcal{G}_k$.
    
    First of all, note that whether $I_{i_k}$ has clear robust bias at scale $\lambda$ is independent of $\mathcal{G}_{k-1}$, so by \thref{lem:most_robust}, we have
    \[
    \Pr \left( \substack{I_{i_k} \textup{ does not have clear}\\ \textup{robust bias at scale }\lambda } \mid \mathcal{G}_{k-1} \right) \leq e^{-\Omega(\ell/\lambda)} \leq e^{-\Omega(C^{7/12} h(C \log n) }
    \]
    
    Next, we will estimate $\Pr( \mcQ_{\ell, \lambda} (I_{i_k}, J_{i_k}) \mid \mathcal{G}_{k-1})$.
    Suppose first that $I_{i_k}$ and $J_{i_k}$ are disjoint and distance $\geq \ell^\prime$ separated.
    Then, by \thref{lem:cross_interval}, we have:
    \[
    \Pr \left( \mcQ_{\ell, \lambda} (I_{i_k}, J_{i_k}) \mid \mathcal{G}_{k-1} \right) \leq \ell^\prime e^{-10c_0 \ell /\lambda} + e^{-10 c_0 \ell^\prime} \leq e^{-9 c_0 C^{7/12} h(C \log n)}
    \]
    In the last step, we use our assumption that $ \log n \leq h(n) \leq n$, implying that for large enough $C$
    \[
    h(n) \leq \exp(\frac{1}{2} c_0 C^{7/12} h(C \log n))
    \]
    
    If instead $I_{i_k}$ and $J_{i_k}$ are within distance $\ell^\prime$ of each other, then let $J$ be the interval formed by extending $I_{i_k}$ on both sides by $2\ell^\prime$, so that $J_{i_k} \subseteq J$.
    By our construction, it is also guaranteed that $J$ is disjoint from $N_{k-1}$, and so when conditioning on $\mathcal{G}_{k-1}$, none of its bits have been determined yet.
    Then, we may apply \thref{lem:within_interval} to obtain
    \begin{equation}
        \begin{aligned}
            \Pr \left( \mcQ_{\ell, \lambda} (I_{i_k}, J_{i_k}) \mid \mathcal{G}_{k-1} \right) &\leq \Pr \left( \mcQ_{\ell, \lambda} (I_{i_k}, J) \mid \mathcal{G}_{k-1} \right)
            &\leq 3 \ell^\prime e^{-10c_0 \ell / \lambda} + e^{-10c_0\ell^\prime} \leq e^{-9 c_0 C^{7/12} h(C \log n)}
        \end{aligned}
    \end{equation}
    
    Therefore
    \[
    \Pr(E_1 \cap \ldots \cap E_r) \leq e^{-\Omega(C^{7/12} h(C \log n)) r} \leq e^{-\Omega(C^{3/4} \log n)}
    \]
    
    which is smaller than $n^{-3}$ for large enough $C$, proving our claim.
    
\end{proof}

\section{Proof of \texorpdfstring{\thref{lem:interpolation}}{Lemma \ref{lem:interpolation}}}
\label{app:interpolation}
\begin{proof}

We will prove the claim by induction on $l$. The proofs for the step and the base will be identical.

Let $F(\bfj, l, n)$ be the maximal factor between the inaccuracy of our estimate of the oracle (i.e. $\delta$) to our estimate of the $\bfj$th monomial.

We will show that the iteration step maintains
\[
F(\bfj, l, n) \leq \poly(n, 1/c)^{C(j_l+1)} F(\bfj(1:l-1), l-1, n)
\]
for some global constant $C$.
From this, we can easily derive the bound
\[
F(\bfj, l, n) \leq \poly(n, 1/c)^{C(j_l+1)}
\]

\vs

We view the polynomial $p$ as a polynomial in the last variable $z_l$ whose coefficients are themselves degree $\leq n$ polynomials in the first $l-1$ variables. In other words:
\[
p(z_1, \ldots, z_l) = p_{z_1, \ldots, z_{l-1}} (z_l) = \sum_{0 \leq i \leq n} a_i(z_1, \ldots, z_{l-1}) z_l ^i
\]

We will show that for any given point $z_1, \ldots, z_{l-1}$, we can compute $a_j(z_1, \ldots, z_{l-1})$ to within an error of at most $\poly(n, 1/c)^{(j+1)} \delta$, given $n$ queries to the oracle $D$.
This will prove our claim, because we can use our induction and these queries to $a_{j_l}(z_1, \ldots, z_{l-1})$ in order to compute the coefficient of its $\bfj(1:l-1)$th monomial.

We compute $a_j(z_1, \ldots, z_{l-1})$ using Lagrange polynomials.
For any $-n \leq i \leq n$, we define $x_i = ic / n \in [-c, c]$, and we define the Lagrange polynomials:
\[
\mathcal{L}_i = \prod_{i^\prime \neq i} \frac{x - x_{i^\prime}}{x_i - x_{i^\prime}}
\]

A commonly used fact about these polynomials is that they can be used to interpolate.
Indeed, let $g(x)$ be a single-variable polynomial of degree $\leq 2n$, and consider the polynomial $f(x) = \sum_i g(x_i) L_i (x)$.
Clearly, $f(x)$ is a polynomial of degree $2n$ and because $L_i(x_{i^\prime}) = \delta_{i,i^\prime}$, we have that $f(x_i) = g(x_i)$ at all the points in the interpolation.
Since no non-zero polynomial of degree $\leq 2n$ can have $2n+1$ roots, this implies that $f(x) = g(x)$.

Define $\lambda_{i,j}$ to be the coefficient of the $j$th monomial $x^j$ of $L_i(x)$.
If we can bound each $\abs{\lambda_{i, j}}$ from above, then using the triangle inequality, we can also bound our error for estimating $a_j$:
\begin{equation}
    \begin{aligned}
    a_j (z_1, \ldots, z_{l-1}) &= \sum_i \lambda_{i, j} p(z_1, \ldots, z_{l-1}, x_i) \\
    &= \sum_i \left(\lambda_{i, j} P(z_1, \ldots, z_{l-1}, x_i) \pm \delta \right)\\
    &= \sum_i \lambda_{i, j} P(z_1, \ldots, z_{l-1}, x_i) \pm \sum_i \abs{\lambda_{i,j}} \delta.
    \end{aligned}
\end{equation}

\vs

We begin by writing an exact formula for $\lambda_{i,j}$:
\begin{equation}
\label{eq:lambda_ij}
    \begin{aligned}
    \lambda_{i,j} = \sum_{\substack{J \subseteq \{-n, -n+1, \ldots, n\}\\ i\notin J\wedge\abs{J} = j}} \prod_{\substack{i^\prime \neq i \\ i^\prime \notin J}} \frac{-x_{i^\prime}}{x_i - x_{i^\prime}} \prod_{i^\prime \in J} \frac{1}{x_i - x_{i^\prime}}
    \end{aligned}
\end{equation}

For any $-n \leq i\neq i^\prime \leq n$, we clearly have:
\begin{equation}
\label{eq:xi_xi_prime}
    \begin{aligned}
    &\frac{\abs{x_i - x_{i^\prime}}}{\frac{c}{n}} = \abs{i - i^\prime} \in [1, 2n]
    \end{aligned}
\end{equation}

Furthermore, for all $i \neq 0$, clearly $\abs{x_i} \geq \frac{c}{n}$.

Combining \eqref{eq:lambda_ij} and \eqref{eq:xi_xi_prime}, we have:

\begin{equation}
\label{eq:bound_lambdas}
    \begin{aligned}
    \abs{\lambda_{i,j}} &\leq \sum_{\substack{J \subseteq \{-n, -n+1, \ldots, n\}\\ i\notin J\wedge\abs{J} = j}} \abs{\prod_{\substack{i^\prime \neq i \\ i^\prime \notin J}} \frac{-x_{i^\prime}}{x_i - x_{i^\prime}} \prod_{i^\prime \in J} \frac{1}{x_i - x_{i^\prime}}} \\
    &\leq \left( \begin{matrix} 2n \\ j \end{matrix} \right)  \left(\frac{c}{n}\right)^{-j} \abs{\prod_{\substack{-n \leq i^\prime \leq n \\ {i^\prime \neq 0, i}}} \frac{x_{i^\prime}}{x_i - x_{i^\prime}}} \\
    &\leq n\left(\frac{2n^2}{c}\right)^j \frac{(n!)^2}{(n+i)!(n-i)!}
    \end{aligned}
\end{equation}

All that remains is to bound the fraction of at the end of equation \eqref{eq:bound_lambdas}.
But it can be easily bounded by $1$ with the following inequality that follows from basic Combinatorics.
Let $k = \abs{i} \geq 0$. Then:
\[
\frac{(n+k)!}{n! k!} = \left( \begin{matrix} n+k \\ k \end{matrix}\right) > \left( \begin{matrix} n \\ k \end{matrix}\right) = \frac{n!}{(n-k)! k!}
\]

This implies that $\abs{\lambda_{i,j}} \leq (2n)^{2j+1} c^{-j}$, further implying that
\[
F(\bfj, l, n) \leq F(\bfj(1:l-1), l-1, n) (2n)^{2j+2} c^{-j} \leq\cdots \leq (2n)^{2j_\textup{tot}+2l} c^{-j_\textup{tot}}
\]
\end{proof}

\section{Proof of \texorpdfstring{\thref{thm:like_borwein}}{Theorem \ref{thm:like_borwein}}}
\label{app:proof_of_chase}

In this section, we will prove \thref{thm:like_borwein}.
This theorem is based on Theorem 5 of \cite{chase2021separating} and we base our proof on Chase's proof.

\vs

Let $n$ be sufficiently large, and let $\rho, \mu$ be as in \thref{thm:like_borwein}.
Let $p_{\textup{base}}$ be a polynomial in $\mcP_n ^ \mu$, and define $p(z) = p_{\textup{base}}(\rho z)$.
Showing that for some $\theta \in [-n^{-2/5},n^{-2/5}]$, we have $\lvert p(e^{i\theta}) \rvert \geq \exp(-C n^{\mu} \log^5 n)$, will yield our main claim, so we will try to lower bound the maximum of $p$ on this arc.

\vs

Let $a = n^{-2/5}$ and $r = a^{-1/2}$. Let $r_* \in [r]$ be such that $$\sum_{j=1}^{r_*} \frac{1}{\log^2(j+3)}-\sum_{j=r_*+1}^r \frac{1}{\log^2(j+3)} \in [20,21];$$ such an $r_*$ clearly exists. Let $$\begin{cases} \epsilon_j = +1 & \text{if }  1 \le j \le r_* \\ \epsilon_j = -1 & \text{if } r_*+1 \le j \le r\end{cases}.$$ Let $\lambda_a \in (1,2)$ be such that $$\sum_{j=1}^r \frac{\lambda_a}{j^2\log^2(j+3)} = 1.$$ Let $$d_j = \frac{\lambda_a}{j^2\log^2(j+3)}.$$ Define $$\wt{h}(z) = \wt{\lambda}_a\sum_{j=1}^r \epsilon_j d_j z^j,$$ where $\wt{\lambda}_a \in (1,2)$ is such that $\wt{h}(1) = 1$. Define $$h(z) = (1-a^{10})\wt{h}(z).$$ Let $$\alpha = e^{ia}, \beta = e^{-ia},$$ and $$I_t = \{z \in \C : \arg(\frac{\alpha-z}{z-\beta}) = t\}$$ for $t \ge 0$. Note that $I_0$ is the line segment connecting $\alpha$ and $\beta$ and $I_a = \{e^{i \theta} : |\theta| \le a\}$ is the set on which we wish to lower bound $p$ at some point. Let $$G_a = \{z \in \C : \arg(\frac{\alpha-z}{z-\beta}) \in (\frac{a}{2},a)\}$$ be the open region bounded by $I_{a/2}$ and $I_a$. 

\vs

We use the same choice of $h$ as \cite{chase2021separating}.
It is designed to satisfy (i) $|h(e^{2\pi i t})| \le 1-c|t|$ for $|t| > a^{1/2}$ (up to logs). In this paper, we need (ii) $|h(e^{2\pi i t})| \ge 1-Ca^2$ for $|t| \approx a$. The following lemmas are proven in Chase's paper:

\begin{lemma} [Lemma 3 of \cite{chase2021separating}]
\label{unitdisk}
For any $t \in [-\pi,\pi]$, $\wt{h}(e^{it}) \in \ol{\D}$. 
\end{lemma}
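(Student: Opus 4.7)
Plan. Since $\wt{h}$ has real coefficients, $|\wt{h}(e^{-it})| = |\wt{h}(e^{it})|$, so it suffices to prove the bound for $t \in [0,\pi]$; and since $\wt{h}(1) = 1$, the claim is that the maximum of $|\wt{h}|$ on $\partial\D$ is attained at $t=0$. The naive triangle inequality only gives $|\wt{h}(e^{it})| \le \wt{\lambda}_a \sum_j d_j = \wt{\lambda}_a \in (1,2)$, which is not enough, so a cancellation argument is required.

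The main tool I would use is summation by parts. Define the partial sums $D_k := \sum_{j=1}^k \epsilon_j d_j$, so that the defining relation $\wt{h}(1) = 1$ reads $\wt{\lambda}_a D_r = 1$. Because $d_j$ is strictly decreasing in $j$ while $\epsilon_j$ equals $+1$ for $j \le r_*$ and $-1$ for $j > r_*$, the sequence $(D_k)_{k=0}^{r}$ is unimodal and strictly positive for $k\ge 1$: it rises up to $k = r_*$ and then falls back down. Abel summation then yields
\[
\wt{h}(e^{it}) \;=\; e^{irt} \;+\; \wt{\lambda}_a(1-e^{it})\, S(t), \qquad S(t) \;:=\; \sum_{k=1}^{r-1} D_k\, e^{ikt}.
\]
Expanding $|\wt{h}(e^{it})|^2$ and using $|e^{irt}|=1$, the target bound $|\wt h(e^{it})|^2 \le 1$ reduces to the cancellation inequality
\[
2\wt{\lambda}_a\, \Real\!\bigl[e^{-irt}(1-e^{it}) S(t)\bigr] \;+\; 4\wt{\lambda}_a^2 \sin^2(t/2)\,|S(t)|^2 \;\le\; 0,
\]
i.e., the real-part cross term coming from the interference with $e^{irt}$ must be sufficiently negative to absorb the (nonnegative) quadratic correction.

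I would then split the analysis by the magnitude of $|t|$. For $|t|\lesssim 1/r$, both terms are small, so I Taylor expand around $t=0$ and verify the inequality coefficient by coefficient, using the explicit moments $\sum_j j\epsilon_j d_j$ and $\sum_j j^2\epsilon_j d_j$ computed from $d_j = \lambda_a/(j^2\log^2(j+3))$; the $[20,21]$ window in the definition of $r_*$ is precisely calibrated to fix the signs of these moment sums. For $|t|\gtrsim 1$, I would apply a second summation by parts to $S(t)$, exploiting that by unimodality $\sum_{k=0}^{r-1}|D_{k+1}-D_k| = 2D_{r_*} - D_r$ is $O(1)$, together with the Dirichlet kernel bound $\bigl|\sum_{k=a}^{b} e^{ikt}\bigr|\lesssim |\sin(t/2)|^{-1}$; this yields $|(1-e^{it})S(t)|\lesssim D_{r_*}$, comparable to the $1/\wt{\lambda}_a = D_r$ budget.

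The main obstacle is the intermediate regime $1/r \lesssim |t|\lesssim 1$, where neither the local Taylor expansion nor the oscillatory tail estimate is decisive on its own. Here the argument must simultaneously exploit the slow $\log^2(j+3)$ factor in $d_j$ and the precise $[20,21]$ calibration of $r_*$ --- the window is set so that, even in this intermediate range, the cross term with $e^{irt}$ dominates the quadratic correction, mirroring the delicate bookkeeping in Chase's original proof.
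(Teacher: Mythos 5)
The paper does not actually prove this lemma. It is cited verbatim as Lemma~3 of \cite{chase2021separating}, and the text immediately below the statement says explicitly: ``Lemmas~\ref{unitdisk} and~\ref{hproperties} are used in the same manner as Chase, so we do not repeat their proofs.'' So there is no ``paper's own proof'' to compare against here; the relevant comparison is with Chase's original argument, which your proposal does not reproduce.

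As for the proposal itself: your Abel-summation identity $\wt{h}(e^{it}) = e^{irt} + \wt{\lambda}_a(1-e^{it})S(t)$ with $S(t)=\sum_{k=1}^{r-1} D_k e^{ikt}$ is correct, and the small-$t$ Taylor check is plausible (the $[20,21]$ window gives $\sum_j j^2 \epsilon_j d_j \approx 20\lambda_a$ large and positive while $\sum_j j \epsilon_j d_j = O(1)$, so the negative quadratic coefficient dominates the square of the linear one). But the argument has two genuine gaps. First, you explicitly leave the intermediate regime $1/r \lesssim |t| \lesssim 1$ open, deferring to ``the delicate bookkeeping in Chase's original proof'' --- a blind proof cannot lean on the very source it is blind to, so this is a hole, not a sketch. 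Second, the large-$t$ step does not close even as sketched: a second Abel summation together with $\sum_{k=0}^{r-1}|D_{k+1}-D_k| = 2D_{r_*}-D_r = 1$ and the Dirichlet bound $|\sum_{k\le m} e^{ikt}| \le |\sin(t/2)|^{-1}$ gives $|S(t)| \lesssim |\sin(t/2)|^{-1}$ and hence $|(1-e^{it})S(t)| = O(1)$, but then the triangle inequality only yields $|\wt{h}(e^{it})| \le 1 + O(\wt{\lambda}_a) = O(1)$, not $\le 1$. To conclude $|\wt{h}(e^{it})| \le 1$ you would need to show the cross term $2\wt{\lambda}_a\Real[e^{-irt}(1-e^{it})S(t)]$ is sufficiently \emph{negative} to absorb both the $O(1)$ slack from the triangle inequality and the nonnegative $4\wt{\lambda}_a^2\sin^2(t/2)|S(t)|^2$ term, and your proposal gives no mechanism for controlling the sign or phase of this cross term away from $t=0$. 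So the proposal sets up a reasonable decomposition but establishes the bound only in a neighborhood of $t=0$.
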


\vspace{0.5mm}

\begin{lemma} [Lemma 4 of \cite{chase2021separating}]
\label{hproperties} 
There are absolute constants $c_4,c_5,C_6 > 0$ such that the following hold for $a > 0$ small enough. First, $h(e^{2\pi i t}) \in G_a$ for $|t| \le c_4a$. Second, $|h(e^{2\pi i t})| \le 1-c_5\frac{|t|}{\log^2(a^{-1})}$ for $t \in [\frac{-1}{2},\frac{1}{2}]\setminus [-C_6a^{1/2},C_6a^{1/2}]$.
\end{lemma}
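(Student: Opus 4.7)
The plan is to prove the two properties separately, leveraging the explicit structure of $\wt h$, \thref{unitdisk}, and the defining conditions on $r_*,\lambda_a,\wt\lambda_a,\epsilon_j,d_j$.

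\textbf{First property.} I first reduce membership in $G_a$ to a pair of scalar inequalities. By clearing denominators in $\Im\bigl[(\alpha-z)\,e^{-ia/2}\,\overline{(z-\beta)}\bigr] > 0$ and simplifying using $|z+1|^2 = 1 + 2\Re z + |z|^2$ together with $\cos a = 2\cos^2(a/2) - 1$, one obtains the identity
\[
\arg\!\Big(\tfrac{\alpha-z}{z-\beta}\Big) > \tfrac{a}{2}
\iff |z+1|^2 > 4\cos^2(a/2).
\]
Moreover, since the M\"obius map $M(z) = (\alpha-z)/(z-\beta)$ sends $0$ to $-e^{2ia}$ (whose principal argument $2a-\pi$ lies in $(a-\pi,a)$), $M$ maps the open unit disk into $\{w:\arg w \in (a-\pi,a)\}$, so $|z|<1$ forces $\arg M(z) < a$. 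Hence $f(t) := h(e^{2\pi it})$ lies in $G_a$ iff $|f(t)|<1$ and $|f(t)+1|^2>4\cos^2(a/2)$. The first holds by \thref{unitdisk}: $|f(t)|\le 1-a^{10}$. For the second, Taylor expand $\wt h(e^{2\pi it}) = \wt\lambda_a\sum_{j=1}^r\epsilon_j d_j e^{2\pi ijt}$ at $t=0$, using $\wt h(1)=1$:
\[
\Re\wt h(e^{2\pi it}) = 1 - At^2 + O(t^4), \qquad |\wt h(e^{2\pi it})|^2 = 1 - Dt^2 + O(t^4),
\]
with $A = 2\pi^2\wt\lambda_a\lambda_a\sum_j\epsilon_j/\log^2(j+3) \in 2\pi^2\wt\lambda_a\lambda_a\cdot[20,21]$ bounded positive by the defining condition on $r_*$, and $D > 0$ by Cauchy-Schwarz. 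Multiplying by the $(1-a^{10})$ factors,
\[
2\Re f(t) + |f(t)|^2 = 3 - 4a^{10} - (2A+D)t^2 + O(a^{10}t^2+t^4).
\]
Since $4\cos^2(a/2) - 1 = 3 - a^2 + O(a^4)$, the required inequality becomes $a^2 - (2A+D)t^2 > O(a^4+a^{10})$, which holds when $|t|\le c_4 a$ with $c_4$ any sufficiently small absolute constant (e.g.\ $c_4^2 < (2(2A+D))^{-1}$) and $a$ small enough.

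\textbf{Second property.} Since $|h|=(1-a^{10})|\wt h|$ and $c_5|t|/\log^2(a^{-1}) \gg a^{10}$ throughout the range $|t|\ge C_6 a^{1/2}$, it suffices to prove $|\wt h(e^{2\pi it})|\le 1 - \tilde c\,|t|/\log^2(a^{-1})$. I use the exact expansion
\[
1 - |\wt h(e^{2\pi it})|^2 = 2\wt\lambda_a^2\sum_{j,k}\epsilon_j\epsilon_k d_j d_k \sin^2(\pi(j-k)t),
\]
and decompose the double sum into same-block pairs (both indices $\le r_*$ or both $>r_*$, so $\epsilon_j\epsilon_k=+1$, contributing nonnegatively) and cross-block pairs. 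For the same-block contribution, restrict to pairs inside one block with $|j-k|\asymp 1/|t|$, on which $\sin^2(\pi(j-k)t)\gtrsim 1$; summing $d_j d_k \asymp \lambda_a^2/(j^2 k^2\log^2(j+3)\log^2(k+3))$ over these pairs yields a lower bound of order $|t|/\log^2(a^{-1})$. The $|t|$ factor counts admissible scales $|j-k|\in[1/(2|t|),2/|t|]$, while the $\log^{-2}(a^{-1})$ factor arises because the relevant indices $j\asymp 1/|t|\le a^{-1/2}$ satisfy $\log(j+3)\asymp \log(a^{-1})$. The assumption $|t|\ge C_6 a^{1/2}$ ensures $1/|t|\le a^{-1/2}/C_6<r$ so such pairs exist within each block. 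A complementary Cauchy-Schwarz estimate on the cross-block terms shows they cannot overwhelm this same-block gain.

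\textbf{Main obstacle.} The first property follows cleanly once one identifies the reformulation $|z+1|^2>4\cos^2(a/2)$ and combines it with the Taylor expansion. The second is the crux: \thref{unitdisk} gives only $|\wt h|\le 1$ with no quantitative slack, so the $|t|/\log^2(a^{-1})$ decay must be extracted by careful pairing of same-block and cross-block Fourier terms in the sum above. The logarithmic denominator is intrinsic to the weights $d_j\propto 1/(j^2\log^2(j+3))$; sharper weights would improve the second property's bound but compromise the first property's requirement that $A=\Theta(1)$ be strictly positive.
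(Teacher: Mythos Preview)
The paper does not supply its own proof of this lemma; it explicitly defers to Chase's original argument. So there is no paper proof to compare against, and your proposal must be judged on its own merits.

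Your treatment of the first property is correct and rather clean: the reformulation $\arg\frac{\alpha-z}{z-\beta}>a/2 \iff |z+1|^2>4\cos^2(a/2)$ is a genuine identity (the arc $I_{a/2}$ lies on the circle of radius $2\cos(a/2)$ centred at $-1$), and the Taylor expansion of $\wt h(e^{2\pi it})$ around $t=0$ finishes it.

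The second property is where your sketch has real gaps. The identity $1-|\wt h|^2=2\wt\lambda_a^2\sum_{j,k}\epsilon_j\epsilon_k d_jd_k\sin^2(\pi(j-k)t)$ is correct, and splitting into same-block and cross-block pairs is a sound plan, but two steps are not carried out:
\begin{itemize}
\item Your same-block accounting is garbled. If both indices satisfy $j,k\asymp 1/|t|$ (as you write), then $d_jd_k\asymp |t|^4/\log^4(1/|t|)$ and there are $\asymp |t|^{-2}$ such pairs, giving only $|t|^2/\log^4$, not $|t|/\log^2$. The bound you want comes instead from pairing a \emph{small} index $j=O(1)$ (so $d_j\asymp 1$) with $k\asymp 1/|t|$; then $\sum_{k\asymp 1/|t|} d_k \asymp |t|/\log^2(1/|t|)\ge c|t|/\log^2(a^{-1})$. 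You also need $1/|t|\le r_*$ so that these $k$'s lie in the first block, which requires knowing $r_*\asymp a^{-1/2}$.
\item The cross-block control is not ``Cauchy--Schwarz''; it is the trivial bound $\sin^2\le 1$ together with the tail estimate $\sum_{k>r_*}d_k\asymp 1/(r_*\log^2 r_*)\asymp a^{1/2}/\log^2(a^{-1})$. This estimate again needs $r_*\asymp r=a^{-1/2}$, which follows from the defining condition on $r_*$ (since $\sum_{j\le m}1/\log^2(j+3)\sim m/\log^2 m$ forces $r_*\approx r/2$), but you never invoke this. Once established, the cross-block loss is $O(a^{1/2}/\log^2(a^{-1}))$, and the requirement $|t|\ge C_6 a^{1/2}$ with $C_6$ large enough is precisely what makes the same-block gain $c|t|/\log^2(a^{-1})$ dominate --- this is the role of $C_6$, which your sketch leaves unexplained.
\end{itemize}
So the overall strategy is viable, but as written the second half is a heuristic rather than a proof: the key quantitative input (the size of $r_*$ and hence of the cross-block tail) is missing, and the same-block lower bound is miscounted.
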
 

Lemmas \ref{unitdisk} and \ref{hproperties} are used in the same manner as Chase, so we do not repeat their proofs.
However, the next lemmas are slightly adapted to our case, because we will want to evaluate our polynomial at a point $z$ with absolute value strictly lower than $1$.

\vs

Let $m = c_4^{-1}n^{2/5}, J_1 = c_5^{-1}n^{-1/5}m\log^4 n$, and $J_2 = m-J_1$.

\begin{lemma}\label{uproduct}
Suppose $u(z) = \eta z^d-\zeta$ for some $\zeta \in \partial \mathbb{D}$, integer $1 \leq d \leq a^{1/2}$ and some $\eta \in [0, 1]$. Then, for any $\delta \in [0,1)$, we have $\prod_{j=J_1}^{J_2-1} |u(h(e^{2\pi i \frac{j+\delta}{m}}))| \le \exp(Cn^{1/5}\log^5n)$. 
\end{lemma}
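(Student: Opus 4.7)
The plan is to view $u(h(z)) = \eta h(z)^d - \zeta$ as a polynomial $P(z)$ of degree $rd$ in $z$, and bound the product at the evenly spaced points $\xi\omega_m^j$ on the unit circle using the resultant identity together with the Mahler measure of $P$; here $\xi = e^{2\pi i\delta/m}$ and $\omega_m = e^{2\pi i/m}$.

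The first step is to observe that $|h(z)| \leq 1 - a^{10}$ throughout $\overline{\D}$, by Lemma \ref{unitdisk} together with the factor $(1-a^{10})$ in the definition $h = (1-a^{10})\wt{h}$. Since $|\zeta| = 1 \geq \eta$, this gives $|P(z)| \geq 1 - \eta(1-a^{10})^d \geq \tfrac{1}{2}\min(1, d\, a^{10}) > 0$ on $\overline{\D}$, so all $rd$ roots $\gamma_s$ of $P$ lie strictly outside $\overline{\D}$, and $|P(0)| = |\zeta| = 1$. Jensen's formula at the origin then identifies the Mahler measure $M(P) = |a_P|\prod_s|\gamma_s| = 1$, where $a_P$ is the leading coefficient of $P$. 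The standard resultant identity
\[
\prod_{j=0}^{m-1} P(\xi\omega_m^j) = \pm\, a_P^m \prod_s (\gamma_s^m - \xi^m),
\]
combined with $|\gamma_s^m - \xi^m| \leq |\gamma_s|^m + 1 \leq 2|\gamma_s|^m$, yields $\prod_{j=0}^{m-1} |P(\xi\omega_m^j)| \leq 2^{rd} M(P)^m = 2^{rd}$.

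The partial product of the lemma equals this full product divided by the $2J_1$ edge factors (those with $j \in [0, J_1-1] \cup [J_2, m-1]$). By the pointwise lower bound above, each edge factor is at least $n^{-O(1)}$, so each contributes at least $-O(\log n)$ to the logarithm; hence the log of the edge product is at least $-O(J_1 \log n) = -O(n^{1/5}\log^5 n)$ (using $J_1 = O(n^{1/5}\log^4 n)$). Combined with the full-product bound, this gives $\log\prod_{j=J_1}^{J_2-1}|P(\xi\omega_m^j)| \leq rd\log 2 + O(n^{1/5}\log^5 n)$, and for $d$ in the regime where $rd = O(n^{1/5}\log^5 n)$ we are finished.

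The main obstacle is the complementary case, where $d$ is large enough that the Mahler bound $2^{rd}$ is too weak. Here I would supplement the argument with the pointwise estimate
\[
\log|u(h(e^{2\pi i(j+\delta)/m}))| \leq \eta\,|h(e^{2\pi i(j+\delta)/m})|^d \leq \exp\!\Bigl(-c_5\, d\,\tfrac{\min(j+\delta,\, m-j-\delta)}{m\log^2(a^{-1})}\Bigr),
\]
obtained from Lemma \ref{hproperties} and $\log(1+x) \leq x$. For such $d$ the right-hand side is a geometric series in $j$ that sums to $O(n^{1/5}\log^5 n)$ provided $d \geq n^{1/5}/\log^3 n$, and taking the pointwise minimum of the two estimates over the full admissible range of $d$ gives the claimed bound.
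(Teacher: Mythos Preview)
Your two regimes do not meet. With $r=a^{-1/2}=n^{1/5}$ and $m\asymp n^{2/5}$, the Mahler/resultant bound $\prod_{j=0}^{m-1}|P(\xi\omega_m^j)|\le 2^{rd}$ only yields $\exp(O(n^{1/5}\log^5 n))$ when $d\lesssim \log^5 n$, while your pointwise upper bound $\sum_j \log|u(h(\cdot))|\le \sum_j |h|^d$ sums (via Lemma~\ref{hproperties}) to $O\!\bigl(m\log^2 n/d\bigr)=O\!\bigl(n^{2/5}\log^2 n/d\bigr)$, which meets the target only when $d\gtrsim n^{1/5}/\log^3 n$. For intermediate $d$, say $d\approx n^{1/10}$, both estimates are of order $\exp(\Theta(n^{3/10}))$, far above $\exp(Cn^{1/5}\log^5 n)$. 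So ``taking the pointwise minimum'' does not close the gap; the crossover of $2^{rd}$ and $\exp(m\log^2 n/d)$ occurs at $d\asymp n^{1/10}\log n$ with common value $\exp(\Theta(n^{3/10}\log n))$.

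The missing idea is that the full product is not merely $\le 2^{rd}$ but is actually \emph{close to} $M(P)^m=1$, and one must quantify this closeness. The paper does exactly that: writing $g(t)=2\log|u(h(e^{2\pi i t}))|$, harmonicity gives $\int_0^1 g=0$, and the discrepancy $\bigl|\frac{1}{m}\sum_{j=J_1}^{J_2-1} g(j/m)-\int_{J_1/m}^{J_2/m} g\bigr|$ is controlled by $\frac{1}{m^2}\sum_j \max|g'|$. Bounding $|g'(t)|\le 500d/|u(h(e^{2\pi it}))|^2$ together with the lower bound $|u(h)|\ge 1-|h|^d$ from Lemma~\ref{hproperties} produces a sum that is $O(n^{1/5})$ uniformly in $1\le d\le n^{1/5}$; this derivative control is precisely what your resultant inequality throws away when it replaces each $|\gamma_s^m-\xi^m|$ by $2|\gamma_s|^m$. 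Your pointwise argument for large $d$ is correct but cannot substitute for this step, because it only upper-bounds $\log|u(h)|$ and ignores the cancellation coming from the many factors with $|u(h)|<1$.
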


\begin{proof}
First note that 
\begin{equation}
\label{note} 
|u(h(e^{2\pi i \theta}))| \ge 1-\eta |h(e^{2\pi i \theta})|^d \geq 1 - (1 - a^{10})^d \ge a^{10}. 
\end{equation} 
Define $g(t) = 2\log|u(h(e^{2\pi i (t+\frac{\delta}{m})}))|$. For notational ease, we assume $\delta = 0$; the argument about to come works for all $\delta \in [0,1)$. 
Since \eqref{note} implies $g$ is $C^1$, by the mean value theorem we have \begin{align}\label{mvt}\left|\frac{1}{m}\sum_{j=J_1}^{J_2-1} g\left(\frac{j}{m}\right)-\int_{J_1/m}^{J_2/m} g(t)dt\right| &= \left|\sum_{j=J_1}^{J_2-1} \int_{j/m}^{(j+1)/m} \left(g(t)-g\left(\frac{j}{m}\right)\right)dt\right| \nonumber\\ &\le \sum_{j=J_1}^{J_2-1} \int_{j/m}^{(j+1)/m} \left(\max_{\frac{j}{m} \le y \le \frac{j+1}{m}} |g'(y)|\right)\frac{1}{m} dt \nonumber\\ &\le \frac{1}{m^2}\sum_{j=J_1}^{J_2-1}\max_{\frac{j}{m} \le y \le \frac{j+1}{m}} |g'(y)|. \end{align} Since $w \mapsto \log|u(h(w))|$ is harmonic and $\log|u(h(0))| = \log|u(0)| = 0$, we have $$\int_0^1 g(t)dt = 2\int_0^1 \log |u(h(e^{2\pi i t}))|dt = 0,$$ and therefore
\begin{equation} \label{harmonicapplication}
\left|\int_{J_1/m}^{J_2/m} g(t)dt\right| \le \left|\int_{0}^{J_1/m}g(t)dt\right|+\left|\int_{J_2/m}^1 g(t)dt\right|.
\end{equation}
Since $$a^{10} \le \left|u(h(e^{2\pi i t}))\right| \le 2$$ for each $t$, we have
\begin{equation}\label{integralbounds}
\left|\int_{0}^{J_1/m}g(t)dt\right|+\left|\int_{J_2/m}^1 g(t)dt\right| \le 20\left(\frac{J_1}{m}+(1-\frac{J_2}{m})\right)\log n \le C\frac{\log^5 n}{n^{1/5}}.
\end{equation}
By \eqref{mvt}, \eqref{harmonicapplication}, and \eqref{integralbounds}, we have $$\left|\frac{1}{m}\sum_{j=J_1}^{J_2-1} g(\frac{j}{m})\right| \le C\frac{\log^5 n}{n^{1/5}}+\frac{1}{m^2}\sum_{j=J_1}^{J_2-1} \max_{\frac{j}{m} \le t \le \frac{j+1}{m}} |g'(t)|.$$
Multiplying through by $m$, changing $C$ slightly, and exponentiating, we obtain 
\begin{equation}\label{partialproductbound}
\prod_{j=J_1}^{J_2-1} \left|u(h(e^{2\pi i \frac{j}{m}}))\right|^2 \le \exp\left(Cn^{1/5}\log^5 n + \frac{1}{m}\sum_{j=J_1}^{J_2-1} \max_{\frac{j}{m} \le t \le \frac{j+1}{m}} |g'(t)|\right).
\end{equation}
Note $$g'(t_0) = \frac{\frac{\partial}{\partial t}\Big[|u(h(e^{2\pi i t}))|^2\Big] \Big|_{t=t_0}}{|u(h(e^{2\pi i t_0}))|^2}.$$ 
We first show 
\begin{equation}\label{derivativebound} 
\frac{\partial}{\partial t}\Big[|u(h(e^{2\pi i t}))|^2\Big] \Big|_{t=t_0} \le 500 d
\end{equation} 
for each $t_0 \in [0,1]$. 
Let $\wt{d}_j = d_j$ for $j \le r_*$ and $\wt{d}_j = -d_j$ for $j > r_*$ so that $h(e^{2\pi i t}) = (1-a^{10})\sum_{j=1}^r \wt{d}_j e^{2\pi i tj}$. 
Then,
\[
\abs{u\left(h(e^{2\pi i t})\right)}^2 = \abs{\eta \left(1-a^{10}\right)^d\left(\sum_{j=1}^r \wt{d}_j e^{2\pi i jt}\right)^d-\zeta}^2
\]
\begin{equation}
\label{derivative} = \eta \left(1-a^{10}\right)^{2d}\left[\abs{\sum_{j=1}^r \wt{d}_j e^{2\pi i jt}}^2\right]^d-2\eta\Real\left[\left(1-a^{10}\right)^{d}\zeta \left(\sum_{j=1}^r \wt{d}_j e^{2\pi i jt}\right)^d\right]+1.
\end{equation}
The derivative of the first term is
\[
(1-a^{10})^{2d}\eta d\left[\abs{\sum_{j=1}^r \wt{d}_j e^{2\pi ijt} } ^ 2\right]^{d-1} \sum_{j_1,j_2=1}^r \wt{d}_{j_1}\wt{d}_{j_2}2\pi(j_1-j_2)e^{2\pi i (j_1-j_2)t}.
\]
Since
\[
\sum_{j=1}^r \abs{\wt{d}_j} = \wt{\lambda}_a \le 1 + 4 a^{1/2}
\]
and
\[
\sum_{j=1}^r \abs{j\wt{d}_j} \le 4,
\] 
we get an upper bound of $250d$ for the absolute value of the derivative of the first term of \eqref{derivative}. 
The derivative of the second term, if $\zeta = e^{i\theta}$, is
\[
2(1-a^{10})^d \eta \sum_{1 \leq j_1 , \ldots, j_d \leq r} 2\pi (j_1 + \cdots + j_d) \wt{d}_{j_1}\cdots \wt{d}_{j_d} \sin(2\pi (j_1 + \cdots + j_d) t+\theta),
\]
which is also clearly upper bounded by (crudely) $250d$. We've thus shown \eqref{derivativebound}. 

\vs

Recall $|u(h(e^{2\pi i \theta}))| \ge 1-|h(e^{2\pi i \theta})|^d\eta$. 
For $j \in [J_1,J_2] \subseteq [C_6a^{1/2}m,(1-C_6a^{1/2})m]$, we use (by Lemma \ref{hproperties})
\[
\abs{h(e^{2\pi i \frac{j}{m}})} \le 1-c_5\frac{\min(\frac{j}{m},1-\frac{j}{m})}{\log^2 n}
\]
to obtain
\[
g^\prime (t) \leq \frac{1}{m}\sum_{j=J_1}^{J_2-1}\max_{\frac{j}{m} \le t \le \frac{j+1}{m}} |g'(t)| \le \frac{1}{m}\sum_{j=J_1}^{J_2-1} \frac{500d}{\left(1 - \eta \left(1 - c_5\frac{\min(\frac{j}{m},1-\frac{j}{m})}{\log^2 n}\right)^d\right)^2}.
\]
Up to a factor of $2$, we may deal only with $j \in [J_1,\frac{m}{2}]$. 
Let $J_* = c_5 ^{-1} d^{-1} m \log^2 n$.
Note that $j \leq J_*$ implies $c_5 \frac{j}{m \log^2 n} \leq d^{-1}$ and $j \geq J_*$ implies $c_5 \frac{j}{m \log^2 n} \geq d^{-1}$.
Thus, using $(1 - x)^d \leq 1 - \frac{1}{2}xd$ for $x \leq \frac{1}{d}$, we have
\begin{equation}
\label{eq:first_part}
    \begin{aligned}
        &\frac{1}{m} \sum_{j = J_1}^{\min(J_*, m/2)} \frac{500 d}{\left(1 - \eta \left(1 - c_5\frac{j}{m\log^2 n}\right)^d\right)^2} \leq \frac{500d}{m} \sum_{j = J_1}^{\min(J_*, m/2)} \frac{1}{\left(\frac{1}{2} c_5 \left(1 - c_5\frac{j}{m\log^2 n}\right)^d\right)^2}\\
        &\;\;\;\;\;\;\;\;\;\;\;\;\;\;\;\;\;\;\;\;\;\;\;\;\;\;\;\;\;\;\;\; = \frac{2000m \log^4 n}{c_5 ^2 d} \sum_{j = J_1}^{\min(J_*, m/2)} \frac{1}{j^2}\\
        &\;\;\;\;\;\;\;\;\;\;\;\;\;\;\;\;\;\;\;\;\;\;\;\;\;\;\;\;\;\;\;\; \leq \frac{2000m \log^4 n}{c_5 ^2 d} \frac{2}{J_1} \leq C n^\mu.
    \end{aligned}
\end{equation}

Finally, since there is some $c > 0$ such that $(1 - x)^l \leq 1-c$ for all $l\in\N\setminus \{0\}$ and $x \in [l^{-1}, 1]$, using the notation $\sum_{i = a} ^b x_i = 0$ if $a > b$, we see that
\begin{equation}
\label{eq:second_part}
    \begin{aligned}
        &\frac{1}{m} \sum_{j = \min(J_*, m/2)}^{m/2} \frac{500 d}{\left(1 - \eta \left(1 - c_5\frac{j}{m\log^2 n}\right)^d\right)^2} \leq \frac{500d}{m} \sum_{j = \min(J_*, m/2)}^{m/2} c^{-2}\\
        &\;\;\;\;\;\;\;\;\;\;\;\;\;\;\;\;\;\;\;\;\;\;\;\;\;\;\;\;\;\;\;\; \leq Cd\\
        &\;\;\;\;\;\;\;\;\;\;\;\;\;\;\;\;\;\;\;\;\;\;\;\;\;\;\;\;\;\;\;\; \leq C n^\mu.
    \end{aligned}
\end{equation}

Combining \eqref{eq:first_part} and \eqref{eq:second_part}, we obtain
\[
\frac{1}{m}\sum_{j = J_1}^{J_2-1} \max_{\frac{j}{m} \leq t \leq \frac{j+1}{m}} \abs{g^\prime (t)} \leq C n^\mu.
\]
Plugging this upper bound into \eqref{partialproductbound} yields the desired result.

\end{proof}

\vs

Let $\mcQ_n$ denote all polynomials of the form $(z-\alpha)(z-\beta)p(\eta z)$ for $p \in \mcP_n$ and $\eta \in [0, 1]$. 

\vspace{1mm}

\begin{corollary}\label{productbound}
For any $q \in \mcQ_n$ and $\delta \in [0,1)$, $\prod_{j\not \in \{0,m-1\}} |q(h(e^{2\pi i \frac{j+\delta}{m}}z))| \le \exp(Cn^{1/5}\log^5 n)$. 
\end{corollary}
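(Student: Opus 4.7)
The proof splits the index set $\{1,\dots,m-2\}$ into the ``bulk'' range $[J_1,J_2-1]$ (on which Lemma~\ref{uproduct} directly applies) and a short ``boundary'' range $[1,J_1-1]\cup[J_2,m-2]$ of size $2(J_1-1)=O(n^{1/5}\log^4 n)$, which I handle by a trivial estimate. Write $q(z)=(z-\alpha)(z-\beta)\,p(\eta z)$ for some $p\in\mcP_n^\mu$ and $\eta\in[0,1]$, and set $h_j := h(e^{2\pi i(j+\delta)/m})$. Each factor of $q$ is bounded in modulus by $O(n)$ on $\overline{\D}$: one has $|h_j-\alpha|,|h_j-\beta|\le 2$ and, since $p\in\mcP_n^\mu$ has at most $n+2$ coefficients of modulus $\le 1$, also $|p(\eta h_j)|\le n+2$. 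Hence the boundary range contributes at most $O(n)^{O(n^{1/5}\log^4 n)}=\exp(O(n^{1/5}\log^5 n))$, well within the target bound.

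On the bulk range, take logs and split $\log|q(h_j)|=\log|h_j-\alpha|+\log|h_j-\beta|+\log|p(\eta h_j)|$, then bound the three partial products separately. The two linear factors are immediate: $z-\alpha = 1\cdot z^1-\alpha$ fits the hypothesis of Lemma~\ref{uproduct} exactly with $\eta=1$, $d=1$, $\zeta=\alpha\in\partial\D$, giving $\prod_{j=J_1}^{J_2-1}|h_j-\alpha|\le\exp(Cn^{1/5}\log^5 n)$ and likewise for $\beta$.

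The heart of the argument is bounding $\prod_{j=J_1}^{J_2-1}|p(\eta h_j)|$. Write $p(x)=u_0(x)+r(x)$ where $u_0(x)=\zeta-\eta' x^{d_0}$ is the ``main'' pair of terms from the definition of $\mcP_n^\mu$ and $r(x)=\sum_{n^\mu\le j\le n}a_j x^j$ is the high-degree tail. By Lemma~\ref{hproperties}(ii), for every $j\in[J_1,J_2-1]$ one has $|h_j|\le 1-cn^{-1/5}\log^2 n$. Consequently
\[
|r(\eta h_j)| \;\le\; \frac{|\eta h_j|^{n^\mu}}{1-|\eta h_j|} \;\le\; \exp(-\Omega(\log^2 n)),
\]
while the main term satisfies $|u_0(\eta h_j)|\ge 1-|\eta h_j|^{d_0}\ge c' n^{-1/5}\log^2 n$ (using $d_0\ge 1$). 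Therefore $\log|p(\eta h_j)|=\log|u_0(\eta h_j)|+O(\exp(-\Omega(\log^2 n)))$; the $O$-term, summed over the $O(m)=O(n^{2/5})$ bulk indices, contributes a negligible $o(1)$. The remaining sum $\sum_j\log|u_0(\eta h_j)|$ is then bounded by a \emph{third} application of Lemma~\ref{uproduct}, this time to $u(z)=\eta'\eta^{d_0}z^{d_0}-\zeta$, which has the same modulus as $u_0(\eta z)$ and satisfies the hypotheses (new leading coefficient $\eta'\eta^{d_0}\in[0,1]$, $\zeta\in\partial\D$). Combining the three bulk bounds with the crude boundary bound produces the desired estimate $\exp(Cn^{1/5}\log^5 n)$.

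\textbf{Main obstacle.} The delicate step is securing the lower bound $|u_0(\eta h_j)|\gtrsim n^{-1/5}\log^2 n$ uniformly on the bulk, so that the tail $r$ cannot eclipse the main term $u_0$ at any sample point. This is tight when $d_0$ is small (possibly $d_0=1$): the polylogarithmic separation of $|h_j|$ from the unit circle supplied by Lemma~\ref{hproperties}(ii) yields only a polylogarithmic, rather than polynomial, separation of $|u_0(\eta h_j)|$ from $0$, and matching this against the super-polynomially small tail estimate is exactly what forces the spectral gap between the exponent $d_0<n^\mu$ and the tail exponents $\ge n^\mu$ appearing in the definition of $\mcP_n^\mu$.
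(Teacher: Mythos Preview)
Your proof is correct and follows essentially the same approach as the paper's: split into boundary and bulk, handle the boundary with the trivial $O(n)$ bound, and on the bulk apply Lemma~\ref{uproduct} separately to the two linear factors and to the low-degree truncation $\wt p=u_0$ of $p$, controlling the high-degree tail $r$ via the decay estimate from Lemma~\ref{hproperties}. The paper defers the step of combining the tail bound with the product bound for $\wt p$ to ``an easy argument given in \cite{chase2021separating}'', which is precisely the pointwise lower bound $|u_0(\eta h_j)|\gtrsim n^{-1/5}\log^2 n$ that you have made explicit.
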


\begin{proof}
Take $q \in \mcQ_n$; 
say $q(z) = (z-\alpha)(z-\beta)p(\eta z)$ for $p \in \mcP_n$. 
For $j \in \{1,\dots,J_1-1\}$ and for $j \in \{J_2,\dots,m-2\}$, by Lemma \ref{unitdisk} we can bound $|q(h(e^{2\pi i \frac{j}{m}}z))| \le 4n$, to obtain
\begin{equation}\label{smallj} 
\prod_{j \not \in \{J_1,\dots,J_2-1\}} |q(h(e^{2\pi i \frac{j+\delta}{m}}))| \le (4n)^{J_1-1+m-J_2-1} \le e^{Cn^{1/5}\log^5 n}.
\end{equation} 
By applying Lemma \ref{uproduct} to $u(z) := z-\alpha$ and to $u(z) := z-\beta$ and multiplying the results, we see 
\begin{equation}\label{uterms} 
\prod_{j=J_1}^{J_2-1} |\overline{u}(h(e^{2\pi i \frac{j+\delta}{m}}))| \le e^{Cn^{1/5}\log^5 n},
\end{equation} 
where $\overline{u}(z) := (z-\alpha)(z-\beta)$. 
Let $\wt{p}(z) \in \{1,1-z^d\}$ be the truncation of $p$ to terms of degree less than $n^{1/5}$. 
Then, since Lemma \ref{hproperties} gives
\[
|\eta h(e^{2\pi i \frac{j+\delta}{m}})| \le 1-c_5\frac{\min\left(\frac{j}{m}+\delta,1-(\frac{j}{m}+\delta)\right)}{\log^2n} \le 1-c'n^{-1/5}\log^2n
\]
for $j \in \{J_1,\dots,J_2-1\}$, we see 
\begin{equation}\label{approximation} 
\left|p\hspace{-.5mm}\left(\eta h(e^{2\pi i \frac{j+\delta}{m}})\right)-\wt{p}\hspace{-.5mm}\left(\eta h(e^{2\pi i \frac{j+\delta}{m}})\right)\right| \le ne^{-c'\log^2 n} \le e^{-c\log^2n}.
\end{equation} 
Lemma \ref{hproperties} implies 
\begin{equation}\label{ptildebound} \prod_{j=J_1}^{J_2-1} |\wt{p}(\eta h(e^{2\pi i \frac{j+\delta}{m}}))| \le e^{Cn^{1/5}\log^5 n}.
\end{equation} 
By an easy argument given in \cite{chase2021separating}, \eqref{approximation} and \eqref{ptildebound} combine to give
\begin{equation}\label{pterms} 
\prod_{j=J_1}^{J_2-1} |p(\eta h(e^{2\pi i \frac{j+\delta}{m}}))| \le e^{C'n^{1/5}\log^5 n}.
\end{equation} 
Combining $\eqref{smallj}, \eqref{uterms}$, and $\eqref{pterms}$, the proof is complete. 
\end{proof}

\vspace{1.5mm}

\begin{proposition}\label{eregionlowerbound}
For any $q \in \mcQ_n$, it holds that $\max_{w \in G_a} |q(w)| \ge \exp(-Cn^{1/5}\log^5 n)$. 
\end{proposition}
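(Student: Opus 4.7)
The plan is to combine the upper bound from Corollary \ref{productbound} with a matching lower bound on the ``total'' product/sum of $\log|q \circ h|$ over roots of unity, obtained by the mean value (Jensen) property, in order to force the two remaining boundary terms $j \in \{0, m-1\}$ to contribute the claimed mass. The key observation is that the boundary points are exactly the ones whose images under $h$ lie in the target region $G_a$.

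First I would verify that $|q(0)| = 1$: since $\alpha\beta = e^{ia}e^{-ia} = 1$ and the constant term of $p$ has modulus one (by the definition of $\mcP_n^\mu$, as $p(0) = \zeta \in \partial\D$), we get $q(0) = \alpha\beta \cdot p(0)$ of modulus $1$. Since $h(0) = 0$ and $q\circ h$ is a polynomial, Jensen's formula yields
\[
\int_0^1 \log |q(h(e^{2\pi i t}))|\,dt \;\geq\; \log|q(h(0))| \;=\; \log|q(0)| \;=\; 0.
\]
Next I would rewrite this integral as an average of Riemann sums over translates: for any integrable $f$ on $[0,1]$,
\[
\int_0^1 \frac{1}{m}\sum_{j=0}^{m-1} f\!\left(\tfrac{j+\delta}{m}\right) d\delta = \int_0^1 f(t)\,dt,
\]
so applying this with $f(t) = \log|q(h(e^{2\pi it}))|$ and combining with the above shows that there exists some $\delta_* \in [0,1)$ for which
\[
\sum_{j=0}^{m-1} \log\bigl|q\bigl(h(e^{2\pi i (j+\delta_*)/m})\bigr)\bigr| \;\geq\; 0.
\]

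Now Corollary \ref{productbound} (applied with $\delta = \delta_*$) upper bounds the part of the sum corresponding to $j \in \{1,\dots,m-2\}$ by $Cn^{1/5}\log^5 n$. Subtracting, the remaining two boundary terms satisfy
\[
\log\bigl|q\bigl(h(e^{2\pi i \delta_*/m})\bigr)\bigr| + \log\bigl|q\bigl(h(e^{2\pi i (m-1+\delta_*)/m})\bigr)\bigr| \;\geq\; -Cn^{1/5}\log^5 n,
\]
so at least one of the two values has modulus at least $\exp(-Cn^{1/5}\log^5 n)$. It remains to check both boundary arguments land in $G_a$: their phases $t$ (taken modulo $1$, centered at $0$) satisfy $|t| \leq 1/m = c_4 a$, since $m = c_4^{-1}n^{2/5}$ and $a = n^{-2/5}$. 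By Lemma \ref{hproperties}, $h(e^{2\pi it}) \in G_a$ whenever $|t| \leq c_4 a$, so both candidate points lie in $G_a$, giving $\max_{w \in G_a}|q(w)| \geq \exp(-Cn^{1/5}\log^5 n)$, as desired.

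The only genuinely delicate step is the justification of $\int_0^1 \log|q(h(e^{2\pi it}))|\,dt \ge 0$: one has to be careful because $\log|q \circ h|$ has logarithmic singularities at the zeros of $q\circ h$ inside the disk. This is handled by Jensen's formula, which gives this integral as $\log|q(0)|$ plus a nonnegative sum over those interior zeros; hence the inequality (rather than equality) is what we need, and the direction is exactly right for our argument. Everything else reduces to the harmonic averaging trick plus the already-proven upper bound of Corollary \ref{productbound}.
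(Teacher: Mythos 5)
Your proof is correct and is essentially the paper's argument repackaged: where the paper forms the auxiliary polynomial $g(z) = \prod_{j=0}^{m-1} q(h(e^{2\pi i j/m}z))$ and applies the maximum modulus principle to get $\max_{\partial\D}|g| \ge |g(0)| = |q(0)|^m = 1$, you invoke Jensen's inequality for $q\circ h$ together with Riemann-sum averaging to extract the rotation $\delta_*$, and the two routes express the same subharmonicity fact about $\log|q\circ h|$ at the origin. The remaining ingredients (Corollary \ref{productbound} to control the middle factors and Lemma \ref{hproperties} to place the two boundary points in $G_a$) are used in exactly the same way.
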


\begin{proof}
Let $g(z) = \prod_{j=0}^{m-1} q(h(e^{2\pi i \frac{j}{m}}z))$. For $z = e^{2\pi i \theta}$, with, without loss of generality, $\theta \in [0,\frac{1}{m})$, we have by Lemma \ref{hproperties} and Corollary \ref{productbound} $$|g(z)| \le \left(\max_{w \in G_a} |q(w)|\right)^2\prod_{j \not \in \{0,m-1\}} |q(h(e^{2\pi i (\frac{j}{m}+\theta)}))| \le \left(\max_{w \in G_a} |q(w)|\right)^2\exp(Cn^{1/5}\log^ 5n).$$ Thus, $\left(\max_{w \in G_a} |q(w)|\right)^2\exp(Cn^{1/5}\log^ 5n) \ge \max_{z \in \partial \mathbb{D}} |g(z)| \ge |g(0)| = 1$, where the last inequality used the maximum modulus principle (clearly $g$ is analytic). 
\end{proof}

\vspace{1.5mm}

The following lemma was proven in \cite{borwein1997littlewood}. 

\vspace{1mm}

\begin{lemma}\label{hadamardregions}
Suppose $g$ is an analytic function in the open region bounded by $I_0$ and $I_a$, and suppose $g$ is continuous on the closed region between $I_0$ and $I_a$. Then, $$\max_{z \in I_{a/2}} |g(z)| \le \left(\max_{z \in I_0} |g(z)|\right)^{1/2}\left(\max_{z \in I_a} |g(z)|\right)^{1/2}.$$
\end{lemma}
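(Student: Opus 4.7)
The plan is to prove Lemma \ref{hadamardregions} by reducing it to the classical Hadamard three-lines theorem via a conformal map that flattens the lens-shaped region bounded by $I_0$ and $I_a$ into an infinite horizontal strip.

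First I would consider the M\"{o}bius transformation $M(z) = (\alpha - z)/(z - \beta)$, which sends $\alpha \mapsto 0$ and $\beta \mapsto \infty$. By the very definition of $I_t$, a point $z$ lies on $I_t$ precisely when $M(z)$ lies on the open ray $\{r e^{it} : r > 0\}$, so $M$ carries the region between $I_0$ and $I_a$ conformally onto the open sector $\{w : 0 < \arg(w) < a\}$. Composing with a branch of the logarithm then maps this sector to the strip $S = \{\zeta : 0 < \mathrm{Im}(\zeta) < a\}$, sending each curve $I_t$ to the horizontal line $\{x + it : x \in \mathbb{R}\}$.

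Next I would define $\tilde{g}(\zeta) = g(M^{-1}(e^{\zeta}))$, which is analytic on $S$ since $g$ is analytic on the lens interior and both coordinate changes are conformal. Because $g$ is continuous on the compact closed lens region it is bounded there, so $\tilde g$ is bounded on $S$ and extends continuously to the horizontal boundary lines of $S$. The classical three-lines theorem then gives that $y \mapsto \log \sup_{x \in \mathbb{R}} |\tilde g(x + iy)|$ is a convex function of $y \in [0, a]$, and evaluating at $y = a/2$ yields
\[
\sup_x |\tilde g(x + ia/2)| \le \left( \sup_x |\tilde g(x)|\right)^{1/2} \left( \sup_x |\tilde g(x + ia)|\right)^{1/2}.
\]
Translating back through the conformal maps, these three suprema are precisely $\max_{z \in I_{a/2}} |g(z)|$, $\max_{z \in I_0} |g(z)|$, and $\max_{z \in I_a} |g(z)|$ (using continuity of $g$ on the closure to replace $\sup$ by $\max$), which is exactly the claimed inequality.

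The only subtle point is the behavior near the corners of the lens at $\alpha$ and $\beta$, which correspond to $\mathrm{Re}(\zeta) \to \pm \infty$ in the strip. Since the three-lines theorem requires only that $\tilde g$ be bounded and analytic on the open strip with a continuous extension to each horizontal boundary (and no growth condition at infinity beyond boundedness), and $g$ is bounded by compactness, no extra argument is needed. Everything else amounts to invoking a standard complex-analytic result through a conformal change of variables, so I expect no genuine obstacle.
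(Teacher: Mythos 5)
The paper does not actually prove \thref{hadamardregions}; it cites it directly from Borwein and Erd\'elyi \cite{borwein1997littlewood} and moves on. Your argument supplies the proof via the standard route: the M\"obius map $M(z)=(\alpha-z)/(z-\beta)$ straightens the level sets $I_t$ into rays through the origin, the logarithm turns the resulting sector into the strip $\{0<\mathrm{Im}\,\zeta<a\}$, and the Hadamard three-lines theorem (log-convexity of the sup on horizontal lines) gives the desired multiplicative inequality at the midline $\mathrm{Im}\,\zeta=a/2$. This is correct. The one point that deserves the care you gave it is the behavior at the corners $\alpha,\beta$: under $\log\circ M$ these become the two infinite ends of the strip, and the three-lines theorem needs $\tilde g$ bounded on the closed strip, which you get from compactness of the closed lens and continuity of $g$ there. (One can also note that $\tilde g(x+iy)$ has finite limits as $x\to\pm\infty$, equal to $g(\alpha)$ and $g(\beta)$, so the suprema over the boundary lines are attained as maxima on the closure of the corresponding $I_t$, matching the statement.) This is almost certainly the same proof as in \cite{borwein1997littlewood}, so your argument fills in what the paper delegates to a citation rather than diverging from it.
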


\vs

\begin{proof}[Proof of \thref{thm:like_borwein}]
Take $f \in \mcP_n$, and let $g(z) = (z-\alpha)(z-\beta)f(\eta z)$. 
A straightforward geometric argument yields $$|g(z)| \le \frac{|(z-\alpha)(z-\beta)|}{1-\eta |z|} \le \frac{2}{\sin(a)} \le 3n^{2/5}$$ for $z \in I_0$. Letting $L = ||g||_{I_a}$, Lemma \ref{hadamardregions} then gives $$\max_{z \in I_{a/2}} |g(z)| \le (3Ln^{2/5})^{1/2}.$$ Since we then have $$\max_{z \in I_{a/2}\cup I_a} |g(z)| \le \max(L,(3Ln^{2/5})^{1/2}),$$ the maximum modulus principle implies $$\max_{z \in G_a} |g(z)| \le \max(L,(3Ln^{2/5})^{1/2}).$$ By Proposition \ref{eregionlowerbound}, we conclude $$\exp(-Cn^{1/5}\log^5 n) \le \max\left(L,(3Ln^{2/5})^{1/2}\right).$$ Thus, $$||f||_{\eta I_a} \ge \frac{1}{4}||g||_{I_a} = \frac{L}{4} \ge \exp(-C'n^{1/5}\log^5 n),$$ as desired.
\end{proof}

\end{document}